\newtheorem{thm}{Theorem}[section]
\newtheorem*{thm*}{Theorem}
\newtheorem{prop}[thm]{Proposition}
\newtheorem{lem}[thm]{Lemma}
\theoremstyle{definition}
\newtheorem{defi}{Definition}[section]
\theoremstyle{remark}
\newtheorem*{rmk}{Remark}
\theoremstyle{remark}
\numberwithin{equation}{section}
\renewcommand\Re{\operatorname{Re}}
\renewcommand\Im{\operatorname{Im}}
\newcommand\mult{\operatorname{mult}}
\newcommand\diag{\operatorname{diag}}
\begin{document}
\title[Dirac cones for a honeycomb point scatterer]{Dirac cones for point scatterers on a honeycomb lattice}
\author{Minjae Lee}
\email{lee.minjae@math.berkeley.edu}
\address{Department of Mathematics, University of California, Berkeley}
\date{\today}

\begin{abstract}
We investigate the spectrum and the dispersion relation of the Schrödinger operator with point scatterers on a triangular lattice and a honeycomb lattice. We prove that the low-level dispersion bands have conic singularities near Dirac points, which are the vertices of the first Brillouin Zone. The existence of such conic dispersion bands plays an important role in various electronic properties of honeycomb-structured materials such as graphene. We then prove that for a honeycomb lattice, the spectra generated by higher-level dispersion relations are all connected so the complete spectrum consists of at most three bands. Numerical simulations for dispersion bands with various parameters are also presented.
\end{abstract}
\maketitle

\section{Introduction}
In this paper we investigate the spectral properties of the Schrödinger operator with periodic point scatterers on the triangular lattice \(\Lambda\subset \mathbb{R}^2\) and the honeycomb lattice \(H \subset \mathbb{R}^2\) (See Figure~\ref{trihex}). The notion of point scatterers started with the Kronig-Penny model \cite{kronigpenney} which describes the dispersion relation, or energy-momentum relation, and the band structure of an electron on a 1-dimensional solid crystal. This idea has been generalized to infinitely many point scatterers on a periodic structure in \(\mathbb{R}^d, d\le 3\) \cite{solvable, Ka1, Ka2} using Krein's theory of self-adjoint extensions. We can borrow the notion of dispersion bands from the Floquet-Bloch theory \cite{GKT1, GKT2, solidstate, KF} for a periodic potential to analyze the periodic point scatterers as well. See Part III of \cite{solvable} for the formulation of periodic point scatterers in detail. 

Honeycomb lattices and regular periodic potentials with such symmetry structure have drawn considerable interest in the physics community due to the groundbreaking fabrication technique for graphene \cite{nobelprize}, a two-dimensional single layer of carbon atoms arranged in the honeycomb lattice structure. In 1947, using the tight-binding model, P.R.Wallace \cite{wallace} found that the dispersion relation of graphene have conic singularities at the six corners of the First Brillouin Zone. These points and singularities are called Dirac points and Dirac cones since the wave packet whose momentum components concentrated near those points behaves as a solution of two-dimensional Dirac wave equation, which describes the evolution of massless relativistic fermions. In addition, P. Kuchment and O. Post \cite{KP} in 2007 generalized the tight-binding model as a quantum graph with potential on the edges of the honeycomb lattice which also presents the conic singularities of the dispersion relation.


Fefferman and Weinstein \cite{honeycomb, honeycomb2} recently showed that generic smooth and real-valued honeycomb lattice potentials have conical singularities (Dirac points) at the vertices of the Brillouine zone. Note that this result does not depend on the magnitude of the potential. See also \cite{FWreview}.  We extend these ideas to the honeycomb lattice point scatterers, which can be formally considered as a singular potential on \(\mathbb{R}^2\) concentrated on a honeycomb lattice, and observe a similar type of conic singularities regardless of the strength of the point scatterers as follows:

\begin{thm*} (Theorem~\ref{conicthm}, Theorem~\ref{band}) Let \(\mathcal{H}\) be any self-adjoint operator on \(L^2(\mathbb{R}^2)\) acting as the Laplacian away from the honeycomb lattice \(H\) of \eqref{H}. Suppose that \(\mathcal{H}\) satisfies the honeycomb symmetry condition as in Proposition~\ref{honeycombsymm}. Then there exists the Dirac cone on the dispersion relation of \(\mathcal{H}\).  In addition, the spectrum of \(\mathcal{H}\) consists of at most three disjoint intervals on the real line.
\end{thm*}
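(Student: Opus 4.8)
The plan is to prove the two assertions separately, both starting from the Bloch--Floquet fibration of $\mathcal{H}$ over the Brillouin zone $\mathcal{B}=\mathbb{R}^2/\Lambda^*$ set up in the earlier sections, where $\Lambda^*$ is the dual lattice and $\nu$ the offset between the two sublattices of $H$. Away from the free fibre spectrum $\sigma(-\Delta_k)=\{\,|k+\lambda|^2:\lambda\in\Lambda^*\,\}$, Krein's resolvent formula reduces the eigenvalue problem for the fibre operator $\mathcal{H}_k$ to the $2\times2$ secular equation $\det M(z,k)=0$ with
\[
M(z,k)=\begin{pmatrix} g(z,k)-\alpha & h(z,k)\\ \overline{h(z,k)} & g(z,k)-\alpha\end{pmatrix},
\]
where $\alpha\in\mathbb{R}$ is the coupling, which is the same at every vertex by Proposition~\ref{honeycombsymm}; the diagonal entries coincide because the two sublattices are isometric translates of $\Lambda$; $g(z,k)$ is the regularised on-site lattice Green's sum; and $h(z,k)=\sum_{\lambda\in\Lambda}G_z(\nu+\lambda)\,e^{ik\cdot\lambda}$, with $G_z$ the resolvent kernel of $-\Delta-z$. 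I would record the facts used below: for fixed $k$, both $g(\cdot,k)$ and $h(\cdot,k)$ are meromorphic with simple poles exactly at the points $|k+\lambda|^2$, with residues of equal modulus $1/|\Lambda|$; one has $\partial_z g(z,k)=\frac1{|\Lambda|}\sum_{\lambda}(|k+\lambda|^2-z)^{-2}$, which strictly dominates $|\partial_z h(z,k)|=\bigl|\frac1{|\Lambda|}\sum_{\lambda}e^{i(k+\lambda)\cdot\nu}(|k+\lambda|^2-z)^{-2}\bigr|$ (the phases $e^{i(k+\lambda)\cdot\nu}$ are not all equal, since $\nu\notin\Lambda$); hence on each interval between consecutive poles, including $\bigl(-\infty,\min_\lambda|k+\lambda|^2\bigr)$, the function $g(\cdot,k)$ is strictly increasing from $-\infty$ to $+\infty$. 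Finally $\sigma(\mathcal{H})=\overline{\bigcup_{k\in\mathcal{B}}\{\,z:\det M(z,k)=0\,\}}$.

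For the Dirac cone (Theorem~\ref{conicthm}) I would fix a vertex $K_\star$ of $\mathcal{B}$ and first prove $h(\cdot,K_\star)\equiv 0$. The $2\pi/3$ rotation $R$ fixing $K_\star$ permutes the poles of $h(\cdot,K_\star)$ in orbits of three, and the three residues on such an orbit equal $e^{iK_\star\cdot\nu_j}/|\Lambda|$ up to one common phase, where $\nu_1,\nu_2,\nu_3$ are the nearest-neighbour vectors of $H$; since $\nu_j-\nu_{j'}\in\Lambda$ and $\sum_j e^{iK_\star\cdot\nu_j}=0$ (the tight-binding Dirac condition), each residue sum vanishes and $h(\cdot,K_\star)\equiv0$. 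Consequently $\det M(z,K_\star)=(g(z,K_\star)-\alpha)^2$, and by the monotonicity and surjectivity of $g(\cdot,K_\star)$ there is exactly one $E_D$ in each inter-pole interval with $g(E_D,K_\star)=\alpha$; it is a double eigenvalue of $\mathcal{H}_{K_\star}$. Fixing, say, the $E_D$ below the lowest pole, I would Taylor-expand near $(E_D,K_\star)$ with $\kappa=k-K_\star$: the relation $h(z,R^{-1}k)=\zeta\,h(z,k)$ (with $\zeta=e^{2\pi i/3}$) forces $\nabla_k h(z,K_\star)$ into an eigenspace of $R$, hence $h(z,k)=\lambda_\sharp(\kappa_1-i\kappa_2)+O(|\kappa|^2)$ in suitable coordinates; and $g(z,\cdot)$ being $R$-invariant near the $R$-fixed point $K_\star$ has vanishing $\kappa$-gradient there, so $g(z,k)-\alpha=\mu_\sharp(z-E_D)+O(|\kappa|^2+(z-E_D)^2)$ with $\mu_\sharp=\partial_z g(E_D,K_\star)>0$. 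The secular equation then reads $\mu_\sharp^{\,2}(z-E_D)^2-|\lambda_\sharp|^2|\kappa|^2+(\text{higher order})=0$, and a Lyapunov--Schmidt reduction together with the standard conic implicit function lemma produce two Lipschitz dispersion sheets through $(E_D,K_\star)$ with $z-E_D=\pm(|\lambda_\sharp|/\mu_\sharp)\,|\kappa|\,(1+o(1))$ --- a Dirac cone. The genuine obstacle here is the non-degeneracy $\lambda_\sharp\neq0$ (the Dirac velocity): as in Fefferman--Weinstein this does not follow from symmetry alone, and I would establish it by a direct analysis of the lattice sum, identifying $\lambda_\sharp$ up to a nonzero factor with $\partial_{k_1}h(z,K_\star)$ and excluding that $E_D$ lies at one of its isolated zeros, using the pole structure of this function in $z$ together with an analyticity argument.

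For the band count (Theorem~\ref{band}) I would argue in two steps. First, $[0,\infty)\subseteq\sigma(\mathcal{H})$: for all but countably many $z_0>0$ the Fermi set $\{\,k\in\mathcal{B}:|k+\lambda|^2=z_0\text{ for some }\lambda\,\}$ is a nonempty smooth curve, and as $k$ crosses it the function $k\mapsto\det M(z_0,k)$ has a \emph{simple} pole --- the double poles of $(g-\alpha)^2$ and of $|h|^2$ cancel precisely because the two residues have equal modulus --- so $\det M(z_0,\cdot)$ changes sign across the curve and hence vanishes somewhere, giving $z_0\in\sigma(\mathcal{H})$; closedness of $\sigma(\mathcal{H})$ then yields all of $[0,\infty)$. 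Second, on $(-\infty,0)$, where $g$ and $h$ are analytic and pole-free, write $\det M=\varphi_+\varphi_-$ with $\varphi_\pm(z,k)=g(z,k)-\alpha\mp|h(z,k)|$. The strict domination $\partial_z g>|\partial_z h|$ makes each $\varphi_\pm(\cdot,k)$ strictly increasing on $(-\infty,0)$ --- it remains increasing even at points where $h$ vanishes and $|h|$ has a corner --- so the functions $z\mapsto\min_{k\in\mathcal{B}}\varphi_\pm(z,k)$ and $z\mapsto\max_{k\in\mathcal{B}}\varphi_\pm(z,k)$ are strictly increasing; since $\varphi_\pm(z,\cdot)$ is continuous on the connected compact $\mathcal{B}$, the set $\{\,z<0:\exists k,\ \varphi_\pm(z,k)=0\,\}=\{\,z<0:\min_k\varphi_\pm(z,k)\le 0\le\max_k\varphi_\pm(z,k)\,\}$ is an \emph{interval}. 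Hence $\sigma(\mathcal{H})\cap(-\infty,0)$ is a union of at most two intervals, and adding $[0,\infty)$ gives at most three intervals in all. Combining the two parts proves the theorem. The technical heart of this half is exactly the domination estimate $\partial_z g>|\partial_z h|$, which keeps each negative band a single interval; the subtler input of the whole proof is still the Dirac-velocity non-degeneracy of the previous step.
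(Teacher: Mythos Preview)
Your Dirac-cone argument is essentially the paper's: both prove $h(\cdot,K_\star)\equiv 0$ by grouping the lattice sum into $R$-orbits of length three and using $1+\zeta+\zeta^2=0$, then expand the secular determinant to first order in $\kappa=k-K_\star$ and $z-E_D$. The paper writes down an explicit formula for the cone velocity (Lemma~\ref{coniclem}, equation~\eqref{c}); like you, it asserts but does not fully verify the nondegeneracy $c(\lambda')>0$, so your identification of $\lambda_\sharp\neq 0$ as the genuine technical obstacle is accurate.

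Your band-count argument, however, departs from the paper's and has a real gap in Step~1. From the fact that $k\mapsto\det M(z_0,k)$ has a \emph{simple} pole along the Fermi curve you conclude that it ``changes sign across the curve and hence vanishes somewhere.'' The sign change at the pole is correct, but it does not force a zero: the Fermi curve can disconnect the Brillouin torus (for small $z_0>0$ it is a single contractible circle, and the complement has two components), and nothing prevents $\det M(z_0,\cdot)$ from being positive throughout one component and negative throughout the other, with the pole sitting on their common boundary. You would need a connectivity or degree argument to close this, and none is given. In fact the paper's Theorem~\ref{band} explicitly allows a gap between the third and fourth bands, both of which lie in $[0,\infty)$ by \eqref{lambdabound}; so the claim $[0,\infty)\subset\sigma(\mathcal H)$ is itself suspect, not only its proof.

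The paper's route to the band count is different in kind. It combines the interlacing bound $\lambda_{j-2}^\infty(\mathbf k)\le\lambda_j(\mathbf k,\alpha)\le\lambda_j^\infty(\mathbf k)$ of \eqref{lambdabound} with the specific high degeneracies of the free Laplacian on the triangular lattice: the eigenvalues of $-\Delta(\mathbf 0)$ come in blocks of six (equation~\eqref{dualsymm0}) and those of $-\Delta(\mathbf K)$ in blocks of three (equation~\eqref{dualsymmK}). A hypothetical gap above the fourth perturbed band would, via interlacing, have to fall strictly between two consecutive free eigenvalues at \emph{both} $\mathbf k=\mathbf 0$ and $\mathbf k=\mathbf K$; the combinatorics of the $6$- and $3$-fold degeneracies make this impossible. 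Your Step~2 monotonicity argument on $(-\infty,0)$ via $\partial_z g>|\partial_z h|$ is sound, but since Step~1 does not go through the overall count is not established.
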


In order to prove this theorem, we first investigate periodic point scatterers on a triangular lattice, which are similar but simpler compared to those on a honeycomb lattice. Then the spectral properties of point scatterers on a honeycomb lattice follow by modifying those results. More precisely, for both lattices, the spectral properties in terms of dispersion bands are studied locally near Dirac points and globally over the Brillouin zone. In Section~\ref{sec:triangular} for triangular lattice point scatterers, Theorem~\ref{prop:1detaconic} shows that the second and third dispersion bands form a pair of conic singularities at Dirac points. Then we introduce Proposition~\ref{prop:floq1spec} proving that the corresponding band spectrum has at most one gap and it occurs between the first and second dispersion bands. Similarly, in Section~\ref{sec:honeycomb} for honeycomb lattice point scatterers, Theorem~\ref{conicthm} shows that the first, second, fourth and fifth dispersion bands form two pairs of conic singularities at Dirac points. Furthermore, Theorem~\ref{band} shows that the band spectrum has at most two gaps and they occur between the second, third, and fourth dispersion bands. These results are illustrated numerically by various figures and movies. See Figure~\ref{floqaloc} ,\ref{floqaglobal}, \ref{floqspec} and Supplemental Materials for the conic dispersion bands and the corresponding spectra of periodic point scatterers. In Appendix~\ref{sec:introtoptsctr}, we briefly introduce the Floquet-Bloch theory and the notion of periodic point scatterers.

\section{Preliminaries}
\subsection{Lattice structures on a 2-dimensional space} \label{2lat}
We introduce three kinds of lattice structures: a triangular lattice, a honeycomb lattice, and the dual lattice of them. 
Let \( \mathbf{v}_1 = a\left( \frac{\sqrt{3}}{2},\frac{1}{2}\right)\) and \(\mathbf{v}_2 = a \left( \frac{\sqrt{3}}{2},-\frac{1}{2}\right)\) with \(a>0\). Then \(\Lambda=\mathbb{Z}\mathbf{v}_1 \oplus \mathbb{Z}\mathbf{v}_2\) is a triangular lattice with the fundamental domain \(\Gamma=\mathbb{R}^2/\Lambda\). Note that the fundamental domain \(\Gamma\) contains only one lattice point at \(\mathbf{x}=\mathbf{0}\).

The union of two triangular lattices generates a honeycomb lattice \begin{equation}\label{H} H= \Lambda \cup (\Lambda+\mathbf{x}_0)\end{equation} where \(\mathbf{x}_0 = \frac{2}{3} (\mathbf{v}_1+\mathbf{v}_2)\). Note that \(\Gamma\) contains two points of \(H\) at \(\mathbf{x}=\mathbf{0}\) and \(\mathbf{x}=\mathbf{x}_0\).

\begin{figure}
\centering
\begin{subfigure}{0.47\textwidth}
\centering
\includegraphics[width=\textwidth]{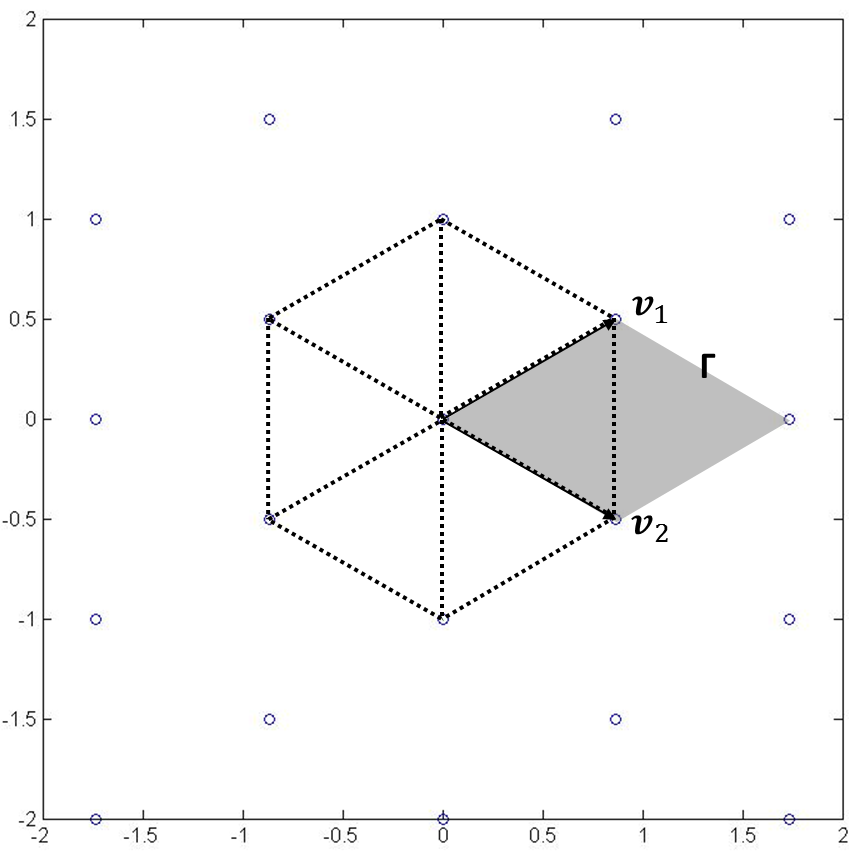} 
\caption{Triangular lattice }
\label{tri}
\end{subfigure}
\begin{subfigure}{0.47\textwidth}
\centering
\includegraphics[width=\textwidth]{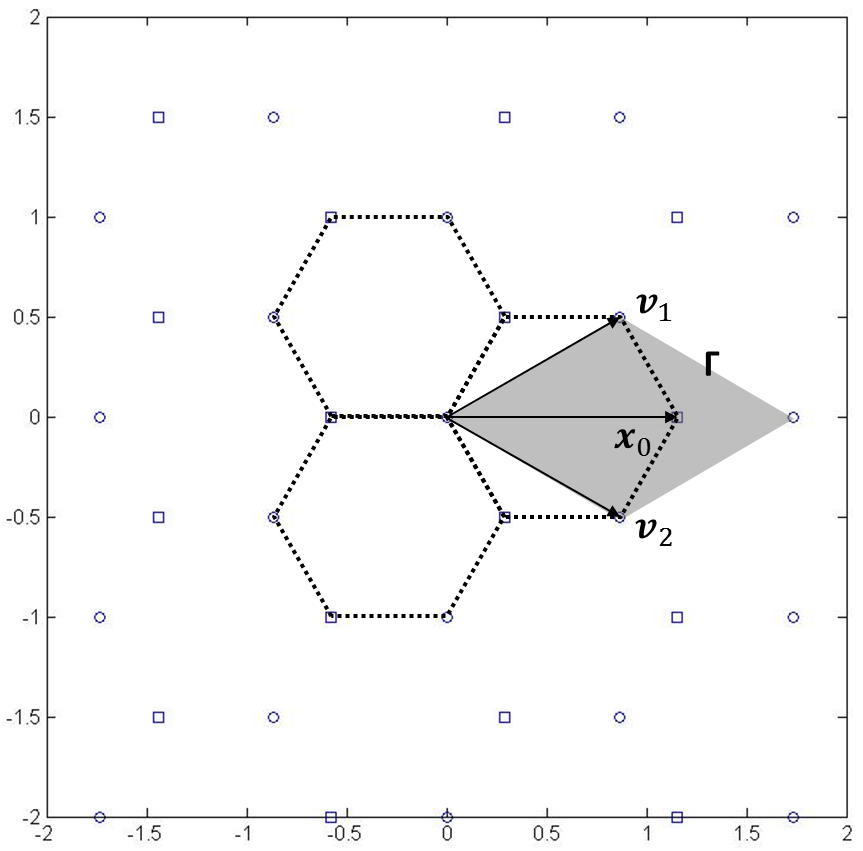} 
\caption{Honeycomb lattice}
\label{hex}
\end{subfigure}
\caption{Two kinds of lattice structures. \(\Gamma\) is the fundamental domain of both lattices. The translated triangular lattice points on the honeycomb lattice are marked as \(\times\).
}\label{trihex}
\end{figure}

The dual lattice \(\Lambda^*=\mathbb{Z}\mathbf{k}_1 \oplus \mathbb{Z}\mathbf{k}_2\) corresponding to the fundamental domain \(\Gamma\) is spanned by two vectors \(\mathbf{k}_1, \mathbf{k}_2\) satisfying
\[\mathbf{k}_i\cdot \mathbf{v}_j =2\pi \delta_{ij}, \quad i,j=1,2\]
or equivalently,
\[\mathbf{k}_1=\frac{4\pi}{a\sqrt{3}}\left( \frac{1}{2}, \frac{\sqrt{3}}{2} \right), \quad \mathbf{k}_2=\frac{4\pi}{a\sqrt{3}}\left( \frac{1}{2}, -\frac{\sqrt{3}}{2} \right),\]
A Brillouin Zone \(\mathcal{B}\) is defined as a hexagon centered at the origin. (See Figure \ref{dual}.)
Note that the triangular lattice and the honeycomb lattice share the hexagonal Brillouin zone since they have the same fundamental domain \(\Gamma\). The six vertices of \(\mathcal{B}\) are called Dirac points. 

\begin{figure}
\centering
\includegraphics[width=0.5\columnwidth]{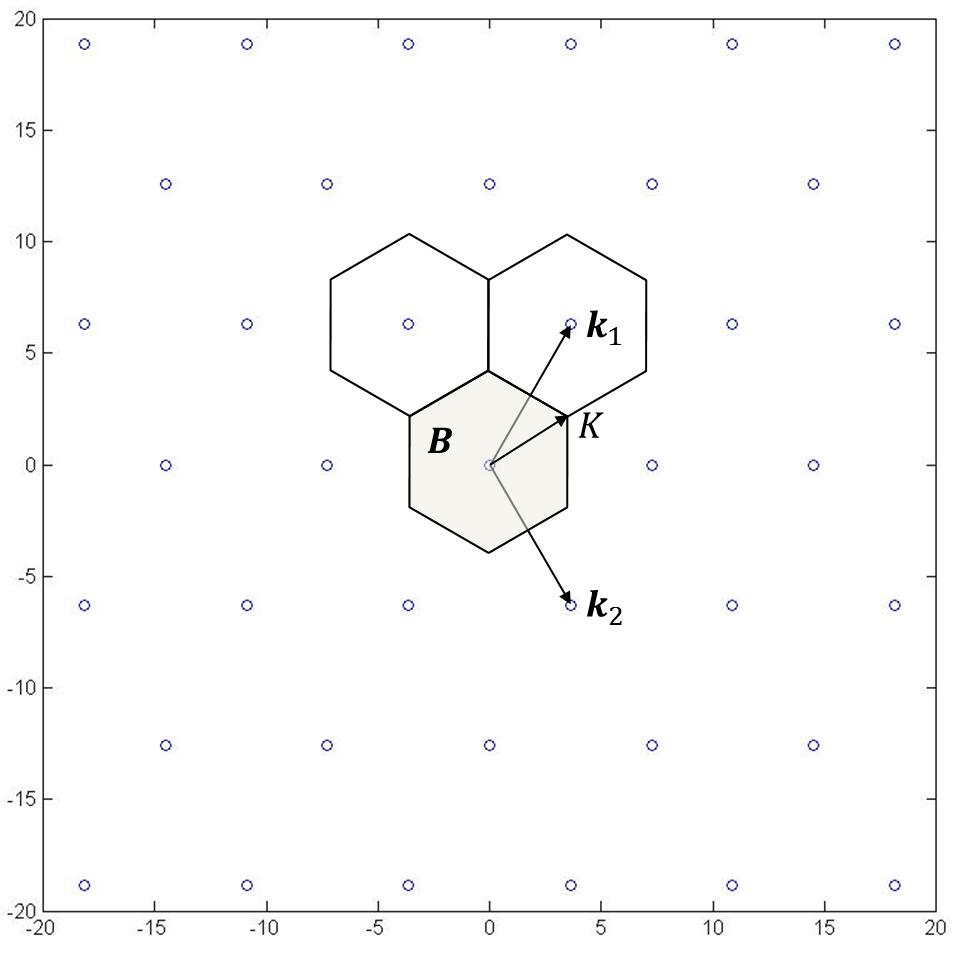}
\caption{Brillouin Zone \(\mathcal{B}\) corresponding to two lattices in Figure~\ref{trihex}. Dual lattice points and Dirac points are marked as \(\times\) and \(\circ\), respectively. In this paper, we only consider \(\mathbf{K}=\frac{2}{3}\mathbf{k}_1+\frac{1}{3}\mathbf{k}_2\) out of six Dirac points without loss of generality.}
\label{dual}
\end{figure}
\FloatBarrier
\subsection{Notation}
\begin{enumerate}
\item \((f,g)=\int_{\Gamma}\overline{f(\mathbf{x})}g(\mathbf{x})d\mathbf{x},\quad f,g \in L^2(\Gamma)\)
\item \(\mult(\lambda,P)\) is the multiplicity of the eigenvalue \(\lambda\) for the operator \(P\).
\item \(\mathbf{k}\) is a vector in Brillouin Zone \(\mathcal{B}\).
\item \(\mathbf{v}_\mathbf{m}= \mathbf{v}_{(m_1,m_2)}=m_1\mathbf{v}_1+m_2\mathbf{v}_2\) is a vector in the triangular lattice \(\Lambda\).
\item \(\xi_\mathbf{m}=\xi_{(m_1,m_2)}= m_1\mathbf{k}_1+m_2\mathbf{k}_2\) is a vector in the dual lattice \(\Lambda^*\).
\item \(L_\mathbf{k}^2(\Gamma)=\{f\in L^2(\Gamma) ~|~ f(\mathbf{x}+\mathbf{v}_j)=e^{i\mathbf{k} \cdot \mathbf{v}_j}f(\mathbf{x}), \quad j=1,2\}\)
\item \(H_\mathbf{k}^m(\Gamma)=\{f\in H^m(\Gamma) ~|~ f(\mathbf{x}+\mathbf{v}_j)=e^{i\mathbf{k} \cdot \mathbf{v}_j}f(\mathbf{x}), \quad j=1,2\}\)
\item \(\Delta(\mathbf{k}) \) is the Laplacian with Floquet boundary condition: \(D(\Delta(\mathbf{k}))=H_\mathbf{k}^2(\Gamma)\)

\item \(\Delta_{\boldsymbol\beta,X}\) is the self-adjoint operator in \(L^2(\mathbb{R}^2)\) for point scatterers placed on \(X=\{\mathbf{x}_1,\mathbf{x}_2,\cdots\}\subset\mathbb{R}^2\) with parameters \(\boldsymbol\beta=(\beta_1,\beta_2,\cdots ),\) where \( \beta_j\in(-\infty,\infty],~ j=1,2,\cdots\). 
\item \(\Delta_{\boldsymbol\alpha,Y}(\mathbf{k})\) is the decomposed self-adjoint operator in \(L^2_\mathbf{k}(\Gamma )\) for periodic point scatterers placed on \(Y=\{\mathbf{y}_1,\cdots,\mathbf{y}_N\}\subset\Gamma\) with parameters \(\boldsymbol\alpha=(\alpha_1,\cdots\alpha_N) \in (-\infty,\infty]^N\). 
\end{enumerate}

In (9), \(\Delta_{\boldsymbol\beta,X}\) is formally defined as a Schrödinger operator \[-\Delta+\sum_{j=1}^\infty c_j \delta({\mathbf{x}-\mathbf{x}_j}),\]
on \(L^2(\mathbb{R}^2)\) where \(c_j\) is constant and \(\delta(\mathbf{x})\) is the Dirac delta function supported at \(\mathbf{0}\in\mathbb{R}^2\). However, a renormalization process is required to make this a self-adjoint operator on a Hilbert space dense in \(L^2(\mathbb{R}^2)\).  Therefore, we construct the periodic point scatterers with the following renormalization process: 
First, consider a finite subset \(X_M \subset X\) consisting of \(M\) points. Then we 
restrict the domain of the Laplacian to the functions vanishing on \(X_M\subset\mathbb{R}^2\). According to the theory of self-adjoint extension by von Neumann, such a symmetric operator has a family of self-adjoint extensions parametrized by  \(\beta_j\in (-\infty,\infty],~j=1,\cdots, M\). Then we obtain \(\Delta_{\boldsymbol\beta,X}\) as the limit case as \(M \rightarrow \infty\) in the norm resolvent sense. We are following Albeverio's notation \cite{solvable} so that the point scatterer at \(\mathbf{x}_j\) gets stronger when \(|\beta_j|\ll \infty\) and it disappears as \(\beta_j\rightarrow \infty \). In particular, when \(\beta_j=\infty\), \(\Delta_{\boldsymbol\beta,X}\) acts as the Laplacian in the neighborhood of \(\mathbf{x}_j\).

In (10), we may follow the same steps as in (9) to define \(\Delta_{\boldsymbol\alpha,Y}(\mathbf{k})\) as a renormalization of 
\[-\Delta+\sum_{j=1}^N c_j \delta({\mathbf{x}-\mathbf{y}_j})\] on \(L_\mathbf{k}^2(\Gamma)\).
The Floquet Laplacian restricted to the functions vanishing at \(Y\subset\Gamma\) has a family of self-adjoint extension \(\Delta_{\boldsymbol\alpha,Y}(\mathbf{k})\) with parameters \(\alpha_j\in (-\infty,\infty]\) determining the strength of the point scatterer at \(\mathbf{y}_j\). 
See Appendix~\ref{sec:introtoptsctr} for the rigorous descriptions of \(\Delta_{\boldsymbol\beta,X}\) and \(\Delta_{\boldsymbol\alpha,Y}(\mathbf{k})\).

\section{Honeycomb lattice point scatterers}
Consider periodic point scatterers placed on a triangular lattice \(\Lambda\) and on a honeycomb lattice \(H\). We will follow the notion of point scatterers defined in Appendix~\ref{sec:introtoptsctr} in which we introduce Floquet theory and construct the point scatterers on \(\mathbb{R}^2\) using renormalization and the theory of self-adjoint extensions. For periodic point scatterers, there exists the decomposition of an operator in \(L^2(\mathbb{R}^2)\) into ``fibers'' \cite{reedsimon4} in \(L^2_\mathbf{k}(\Gamma)\) that corresponds to the Floquet theory for a Schrödinger operator \(-\Delta+V\) with a periodic potential \(V\). See also Part III of \cite{solvable} for more details.
Note that the resolvent formula of \(\Delta_{\boldsymbol\alpha,Y}(\mathbf{k})\) and the decomposition of \(\Delta_{\boldsymbol\beta,\Lambda+Y}\) introduced in Proposition~\ref{ndelta} and \ref{prop:fiber}, respectively, play crucial role to study spectral properties of periodic point scatterers.

In this section, we first compare the decomposed operator for the periodic point scatterer \(-\Delta_{\boldsymbol\alpha,Y}(\mathbf{k})\) to a decomposed linear Schrödinger operator \(-\Delta(\mathbf{k})+V\) when \(\Lambda+Y\), the set of scattering points, and the potential \(V\) share some symmetries introduced in \cite{honeycomb}. 
\begin{defi}[Honeycomb lattice potentials]
\(V\in C^\infty (\mathbb{R}^2;\mathbb{R})\) is a honeycomb lattice potential if there exists \(\mathbf{y}\in\Gamma \) such that 
\begin{enumerate}
\item \(V\) is \(\mathcal{T}_{\mathbf{v}}\)-invariant for all \(\mathbf{v}\in \Lambda\) where \(\mathcal{T}_{\mathbf{v}}V(\mathbf{x})=V(\mathbf{x}+\mathbf{v})\) is the translation along \(\mathbf{v}\), i.e. \(V(\mathbf{x}+\mathbf{v}) = V(\mathbf{x})\)
\item \(V\) is \(\mathcal{I}_{\mathbf{y}}\)-invariant where \(\mathcal{I}_{\mathbf{y}} V(\mathbf{x})= V(2 \mathbf{y}-\mathbf{x})\) is the inversion with respect to \(\mathbf{y}\), i.e. \(V(2\mathbf{y}-\mathbf{x})=V(\mathbf{x})\)
\item \(V\) is \(\mathcal{R}_{\mathbf{y}}\)-invariant where \(\mathcal{R}_{\mathbf{y}}V(\mathbf{x})=V(R(\mathbf{x}-\mathbf{y})+\mathbf{y})\) is the \(\frac{2\pi}{3}\)-rotation with respect to \(\mathbf{y}\) with \begin{equation}\label{R}
R=\frac{1}{2}\begin{bmatrix}-1 & -\sqrt{3} \\ \sqrt{3}& -1\end{bmatrix}\end{equation}
i.e. \(V(R(\mathbf{x}-\mathbf{y})+\mathbf{y})=V(\mathbf{x})\)
\end{enumerate}
\end{defi}

\begin{rmk} Since the Laplacian is invariant under \(\mathcal{T}_{\mathbf{v}}, \mathcal{I}_{\mathbf{y}}\) and \(\mathcal{R}_{\mathbf{y}}\), whenever \(V\) is a honeycomb lattice potential, these operations affect only the Floquet boundary condition. In other words, they modify the domain of the operator but not the action of the operator itself. 
\begin{align*}
\mathcal{T}_{\mathbf{v}}(-\Delta(\mathbf{k})+V) f &= (-\Delta(\mathbf{k})+V) \mathcal{T}_{\mathbf{v}} f, \quad \mathbf{v}\in\Lambda\\
\mathcal{I}_{\mathbf{y}}(-\Delta(\mathbf{k})+V) f &= (-\Delta(-\mathbf{k})+V) \mathcal{I}_{\mathbf{y}} f \\
\mathcal{R}_{\mathbf{y}}(-\Delta(\mathbf{k})+V) f &= (-\Delta(R^*\mathbf{k})+V) \mathcal{R}_{\mathbf{y}} f 
\end{align*}
\end{rmk}

Similarly, we can construct a decomposed operator of periodic point scatterers \(-\Delta_{\boldsymbol\alpha,X}(\mathbf{k})\) with the same properties. Suppose that \(X=\{\mathbf{x}_1,\cdots,\mathbf{x}_N \} \subset \Gamma\) and there exists \(\mathbf{y}\in \Gamma\) such that \(\Lambda+X\) is inversion-symmetric and \(\frac{2\pi}{3}\)-rotation invariant with respect to \(\mathbf{y}\), i.e. \begin{equation}\label{inversion1}
\mathbf{y}-(\Lambda+X)=(\Lambda+X)-\mathbf{y}
\end{equation} and
\begin{equation}\label{rotation1}
R\left(\left(\Lambda+X\right)-\mathbf{y}\right)=(\Lambda+X)-\mathbf{y}
\end{equation} where \(R\) is given by \eqref{R}. For example, a triangular lattice \(\Lambda=\Lambda+\{\mathbf{0}\}\) and a honeycomb lattice \(H= \Lambda + \{\mathbf{0},\mathbf{x}_0\}\) satisfy those conditions where \(\mathbf{y}=\frac{1}{3}(\mathbf{v}_1+\mathbf{v}_2)\).

In addition, suppose that for each \(j=1,\cdots,N\), \(\alpha_j\in(-\infty,\infty]\) is invariant under the inversion and the \(\frac{2\pi}{3}\)-rotation with respect to \(\mathbf{y}\), i.e., 
\begin{equation}\label{inversion2}
2\mathbf{y}-\mathbf{x}_j-\mathbf{x}_{j'}\in\Lambda ~\text{ implies }~ \alpha_j=\alpha_{j'}.
\end{equation} and 
\begin{equation}\label{rotation2}
R(\mathbf{x}_j-\mathbf{y})+\mathbf{y}-\mathbf{x}_{j'}\in \Lambda ~\text{ implies }~ \alpha_j=\alpha_{j'}.\end{equation}

\begin{prop}\label{honeycombsymm}
Let \(X=\{\mathbf{x}_1,\cdots,\mathbf{x}_N \} \subset \Gamma\). Suppose that \eqref{inversion1}, \eqref{rotation1}, \eqref{inversion2}, and \eqref{rotation2} hold for some \(\mathbf{y}\in\Gamma\). Then for \(\psi\in D(-\Delta_{\boldsymbol\alpha,X}(\mathbf{k}))\), we have
\begin{align*}
\mathcal{T}_{\mathbf{v}} \Delta_{\boldsymbol\alpha,X}(\mathbf{k}) \psi &= \Delta_{\boldsymbol\alpha,X}(\mathbf{k}) \mathcal{T}_{\mathbf{v}} \psi,\quad \mathbf{v}\in \Lambda
\\ \mathcal{I}_{\mathbf{y}}\Delta_{\boldsymbol\alpha,X}(\mathbf{k}) \psi &= \Delta_{\boldsymbol\alpha,X}(-\mathbf{k}) \mathcal{I}_{\mathbf{y}} \psi
\\ \mathcal{R}_{\mathbf{y}}\Delta_{\boldsymbol\alpha,X}(\mathbf{k}) \psi &= \Delta_{\boldsymbol\alpha,X}(R^*\mathbf{k}) \mathcal{R}_{\mathbf{y}} \psi
\end{align*}
\end{prop}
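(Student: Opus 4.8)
The plan is to establish the three identities first at the level of resolvents and then transfer them to the operators; throughout, fix $z$ in the common resolvent set. The translation relation is essentially trivial: for $\mathbf{v}\in\Lambda$ the map $\mathcal{T}_\mathbf{v}$ acts on $L^2_\mathbf{k}(\Gamma)$ as multiplication by the scalar $e^{i\mathbf{k}\cdot\mathbf{v}}$, hence commutes with every operator on that space, in particular with $\Delta_{\boldsymbol\alpha,X}(\mathbf{k})$. For the inversion and the rotation, the starting point is that $\mathcal{I}_\mathbf{y}$ is a unitary involution $L^2_\mathbf{k}(\Gamma)\to L^2_{-\mathbf{k}}(\Gamma)$ and $\mathcal{R}_\mathbf{y}$ is a unitary $L^2_\mathbf{k}(\Gamma)\to L^2_{R^*\mathbf{k}}(\Gamma)$ with $\mathcal{R}_\mathbf{y}^3=\mathrm{Id}$ (this uses $R\Lambda=\Lambda$ and $\mathbf{k}\cdot R\mathbf{v}=(R^*\mathbf{k})\cdot\mathbf{v}$), and that, by the Remark following the definition of honeycomb lattice potentials applied with $V\equiv 0$, these unitaries conjugate the free Floquet Laplacians, so that their resolvents satisfy $\mathcal{I}_\mathbf{y}(\Delta(\mathbf{k})-z)^{-1}=(\Delta(-\mathbf{k})-z)^{-1}\mathcal{I}_\mathbf{y}$ and $\mathcal{R}_\mathbf{y}(\Delta(\mathbf{k})-z)^{-1}=(\Delta(R^*\mathbf{k})-z)^{-1}\mathcal{R}_\mathbf{y}$. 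Equivalently, writing $G_\mathbf{k}(\mathbf{x},\mathbf{w};z)$ for the integral kernel of $(\Delta(\mathbf{k})-z)^{-1}$, one has $G_{-\mathbf{k}}(\mathbf{x},\mathbf{w};z)=G_\mathbf{k}(2\mathbf{y}-\mathbf{x},2\mathbf{y}-\mathbf{w};z)$ and $G_{R^*\mathbf{k}}(\mathbf{x},\mathbf{w};z)=G_\mathbf{k}(R(\mathbf{x}-\mathbf{y})+\mathbf{y},R(\mathbf{w}-\mathbf{y})+\mathbf{y};z)$.

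Next I would feed these into the resolvent formula for $\Delta_{\boldsymbol\alpha,X}(\mathbf{k})$ provided by Proposition~\ref{ndelta} (applied with the scattering set $X$), which writes $(\Delta_{\boldsymbol\alpha,X}(\mathbf{k})-z)^{-1}$ as $(\Delta(\mathbf{k})-z)^{-1}$ plus a rank-$N$ term assembled from the functions $G_\mathbf{k}(\cdot,\mathbf{x}_j;z)$, $j=1,\dots,N$, and the inverse of the $N\times N$ matrix $M(\mathbf{k};z)$ whose $(j,j')$ entry is $G_\mathbf{k}(\mathbf{x}_j,\mathbf{x}_{j'};z)$ when $j\ne j'$ and is $\alpha_j$ plus the regularized self-energy obtained by subtracting from $G_\mathbf{k}(\mathbf{x},\mathbf{x}_j;z)$ its logarithmic singularity as $\mathbf{x}\to\mathbf{x}_j$ when $j=j'$. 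Conditions \eqref{inversion1} and \eqref{rotation1} say precisely that inversion about $\mathbf{y}$, respectively $\tfrac{2\pi}{3}$-rotation about $\mathbf{y}$, permute the points of $X$ modulo $\Lambda$; let $\sigma$ be the resulting permutation of $\{1,\dots,N\}$, so that $2\mathbf{y}-\mathbf{x}_j\equiv\mathbf{x}_{\sigma(j)}$, respectively $R(\mathbf{x}_j-\mathbf{y})+\mathbf{y}\equiv\mathbf{x}_{\sigma(j)}$, modulo $\Lambda$. Using the kernel identities of the first paragraph together with the quasiperiodicity $G_\mathbf{k}(\mathbf{x}+\mathbf{v},\mathbf{w};z)=e^{i\mathbf{k}\cdot\mathbf{v}}G_\mathbf{k}(\mathbf{x},\mathbf{w};z)$, $\mathbf{v}\in\Lambda$, one checks that $\mathcal{I}_\mathbf{y}$ carries $G_\mathbf{k}(\cdot,\mathbf{x}_j;z)$ to a unimodular multiple of $G_{-\mathbf{k}}(\cdot,\mathbf{x}_{\sigma(j)};z)$ and that $M(-\mathbf{k};z)=D\,P_\sigma\,M(\mathbf{k};z)\,P_\sigma^{*}\,D^{*}$ for the permutation matrix $P_\sigma$ and a diagonal unitary $D$ recording those phases; the decisive inputs here are that the subtracted logarithmic singularity is invariant under the Euclidean isometries $\mathbf{x}\mapsto 2\mathbf{y}-\mathbf{x}$ and $\mathbf{x}\mapsto R(\mathbf{x}-\mathbf{y})+\mathbf{y}$ (so the regularized self-energies are merely permuted by $\sigma$, with no spurious constant), and that \eqref{inversion2}, respectively \eqref{rotation2}, forces $\alpha_{\sigma(j)}=\alpha_j$, so that no new parameters appear on the diagonal. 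The same computation applies verbatim to $\mathcal{R}_\mathbf{y}$ with $-\mathbf{k}$ replaced by $R^*\mathbf{k}$.

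Inverting the similarity gives $M(-\mathbf{k};z)^{-1}=D\,P_\sigma\,M(\mathbf{k};z)^{-1}\,P_\sigma^{*}\,D^{*}$, and substituting this together with the transformation of the functions $G_\mathbf{k}(\cdot,\mathbf{x}_j;z)$ back into the resolvent formula shows that $\mathcal{I}_\mathbf{y}$ maps the rank-$N$ term of $\Delta_{\boldsymbol\alpha,X}(\mathbf{k})$ exactly onto that of $\Delta_{\boldsymbol\alpha,X}(-\mathbf{k})$, the phases in $D$ cancelling between the two copies of $G$ and the matrix; combined with the free-resolvent relation this yields $\mathcal{I}_\mathbf{y}(\Delta_{\boldsymbol\alpha,X}(\mathbf{k})-z)^{-1}=(\Delta_{\boldsymbol\alpha,X}(-\mathbf{k})-z)^{-1}\mathcal{I}_\mathbf{y}$, and similarly $\mathcal{R}_\mathbf{y}(\Delta_{\boldsymbol\alpha,X}(\mathbf{k})-z)^{-1}=(\Delta_{\boldsymbol\alpha,X}(R^*\mathbf{k})-z)^{-1}\mathcal{R}_\mathbf{y}$. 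Since a unitary that intertwines the resolvents of two self-adjoint operators maps the domain of one onto that of the other and intertwines the operators themselves, the three displayed identities of the proposition follow. I expect the main obstacle to be the bookkeeping in the middle step: one must verify that the quasiperiodicity phases picked up when reducing $2\mathbf{y}-\mathbf{x}_j$ or $R(\mathbf{x}_j-\mathbf{y})+\mathbf{y}$ to its representative in $\Gamma$ assemble into a genuine diagonal unitary conjugation of $M(\mathbf{k};z)$, so that passing to the inverse is legitimate and so that the phases cancel when the rank-$N$ term is reassembled, and one must check the invariance of the regularized self-energy against the particular normalization of the renormalized Green's function fixed in Proposition~\ref{ndelta}. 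One could alternatively argue abstractly through the boundary-triple description of the self-adjoint extensions, but the resolvent route is the most direct given Proposition~\ref{ndelta}.
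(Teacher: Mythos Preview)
Your proposal is correct and follows essentially the same route as the paper: both arguments rest on the resolvent structure of Proposition~\ref{ndelta} (or its corollary Proposition~\ref{prop:ndeltadomain}) and on tracking how the symmetry operations permute the scatterer points modulo $\Lambda$ while leaving the $\alpha$-parameters invariant. The only difference is cosmetic: the paper works via the domain decomposition $\psi=\phi_\lambda+(\text{rank-}N\text{ term})$ and the identity $(-\Delta_{\boldsymbol\alpha,X}(\mathbf{k})-\lambda)\psi=(-\Delta(\mathbf{k})-\lambda)\phi_\lambda$, while you conjugate the full resolvent and make the permutation/phase bookkeeping explicit through the similarity $M(-\mathbf{k};z)=D\,P_\sigma\,M(\mathbf{k};z)\,P_\sigma^{*}\,D^{*}$; your version is arguably cleaner and more systematic on precisely the point the paper leaves sketchy.
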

\begin{proof} Let \(\psi\in D(-\Delta_{\boldsymbol\alpha,X}(\mathbf{k}))\) and choose any \(\lambda \notin \sigma(-\Delta(\mathbf{k}))\cup \sigma(-\Delta_{\boldsymbol\alpha,X}(\mathbf{k}))\). We can define \(\phi_\lambda \in H_\mathbf{k}^2(\Gamma)\) as in \eqref{ndeltadomain}. Since \(g_\lambda(\bullet,\mathbf{k})\in L_\mathbf{k}^2(\Gamma)\) , for \(\mathbf{v} \in \Lambda\), we have
\begin{align*}
&\mathcal{T}_{\mathbf{v}}(-\Delta_{\boldsymbol\alpha,X}(\mathbf{k})-\lambda)\psi =\mathcal{T}_{\mathbf{v}}(-\Delta(\mathbf{k})-\lambda)\phi_\lambda\\
&=(-\Delta(\mathbf{k})-\lambda)\mathcal{T}_{\mathbf{v}}\phi_\lambda
=(-\Delta_{\boldsymbol\alpha,X}(\mathbf{k})-\lambda)\mathcal{T}_{\mathbf{v}}\psi.
\end{align*}

For the inversion property, we observe 
\begin{align*}
&\mathcal{I}_{\mathbf{y}}(-\Delta_{\boldsymbol\alpha,X}(\mathbf{k})-\lambda)\psi =\mathcal{I}_{\mathbf{y}}(-\Delta(\mathbf{k})-\lambda)\phi_\lambda\\
&=(-\Delta(-\mathbf{k})-\lambda)\mathcal{I}_{\mathbf{y}}\phi_\lambda=(-\Delta_{\boldsymbol\alpha,X}(-\mathbf{k})-\lambda) \\&\quad\left[\mathcal{I}_{\mathbf{y}}\phi_\lambda+\frac{1}{\mathrm{area}(\mathcal{B})} \sum_{j,j'=1}^{N}{\left[\Gamma_{\boldsymbol\alpha,X} (\lambda,\mathbf{k})^{-1}\right]_{jj'} (\mathcal{I}_{\mathbf{y}}\phi_\lambda)(\mathbf{x}_j ) g_\lambda \left(\mathbf{x}-\mathbf{x}_{j'}, \mathbf{k}\right)}\right]
\\&= (-\Delta_{\boldsymbol\alpha,X}(-\mathbf{k})-\lambda)
\\&\quad\left[\mathcal{I}_{\mathbf{y}}\phi_\lambda+\frac{1}{\mathrm{area}(\mathcal{B})} \sum_{l,l'=1}^{N}{\left[\Gamma_{\boldsymbol\alpha,X} (\lambda,-\mathbf{k})^{-1}\right]_{ll'} \phi_\lambda(\mathbf{x}'_l ) (\mathcal{I}_{\mathbf{y}'} g_\lambda) \left(\mathbf{x}-\mathbf{x}'_{l'}, -\mathbf{k}\right)}\right]
\\&=(-\Delta_{\boldsymbol\alpha,X}(-\mathbf{k})-\lambda)\mathcal{I}_{\mathbf{y}}\psi
\end{align*}
by rearranging \(X=\{\mathbf{x}_1,\cdots,\mathbf{x}_n\}=\{\mathbf{x}'_1,\cdots,\mathbf{x}'_n\}\) so that \[2\mathbf{y}-\mathbf{x}_j - \mathbf{x}'_l\in \Lambda ~\text{ and }~ 2\mathbf{y}-\mathbf{x}_{j'} - \mathbf{x}'_{l'}\in\Lambda,\quad j,j',l,l'=1,\cdots,N\] 

Similarly, we prove the rotation property as in the inversion case by rearranging \(X=\{\mathbf{x}_1,\cdots,\mathbf{x}_n\}=\{\mathbf{x}'_1,\cdots,\mathbf{x}'_n\}\) so that 
\[
R(\mathbf{x}_j-\mathbf{y})+\mathbf{y}-\mathbf{x}'_{l}\in \Lambda ~\text{ and }~ R(\mathbf{x}_{j'}-\mathbf{y})+\mathbf{y}-\mathbf{x}'_{l'}\in \Lambda,\quad j,j',l,l'=1,\cdots,N
\] 
Hence,
\[\mathcal{R}_{\mathbf{y}}(-\Delta_{\boldsymbol\alpha,X}(\mathbf{k})-\lambda)\psi=
(-\Delta_{\boldsymbol\alpha,X}(R^*\mathbf{k})-\lambda)\mathcal{R}_{\mathbf{y}}\psi\]
This concludes the proof.
\end{proof}

\subsection{Point scatterers on the triangular lattice}\label{sec:triangular}

As a preliminary step toward point scatterers on the honeycomb lattice, we consider point scatterers on the triangular lattice \(\Lambda\). We first summarize several known results for point scatterers on a periodic lattice with only one scatterer in the fundamental domain. See Chapter III.4 of \cite{solvable} for more details. Then we observe the spectral properties of the triangular lattice point scatter as direct applications of those results.

Suppose there is one point scatterer on \(\Gamma\), say at \(\mathbf{0}\). Note that the strength of the point scatterer is determined by \(\alpha\in (-\infty,\infty]\) in terms of the resolvent operator in Proposition~\ref{ndelta}. We will exclude \(\alpha=\infty\) from our consideration since the point scatterers annihilate at \(\alpha=\infty\); therefore, \[-\Delta_{\infty,\{\mathbf{0}\}}(\mathbf{k})=-\Delta(\mathbf{k}).\]
For \(\alpha\in\mathbb{R}\), we can group all the eigenvalues of \(-\Delta_{\alpha,\{\mathbf{0}\}}(\mathbf{k})\) into two categories: the perturbed eigenvalues \[\lambda\in\sigma(-\Delta_{\alpha,\{\mathbf{0}\}}(\mathbf{k}))\setminus\sigma(-\Delta(\mathbf{k}))\] and the unperturbed eigenvalues
\[\lambda\in \sigma(-\Delta_{\alpha,\{\mathbf{0}\}}(\mathbf{k}))\cap\sigma(-\Delta(\mathbf{k})).\]

By the resolvent formula of periodic point scatterers, the perturbed eigenvalues \(\lambda\) of a triangular lattice point scatterer are given as the solutions to \[\Gamma_{\alpha,\{\mathbf{0}\}} (\lambda,\mathbf{k})=0\] where \(\Gamma_{\alpha,\{\mathbf{0}\}} (\lambda,\mathbf{k})\) is defined in \eqref{gamma}. In other words, \(\lambda\) is a perturbed eigenvalue of \(-\Delta_{\alpha,\{\mathbf{0}\}}(\mathbf{k})\) if and only if 
\begin{equation}\label{1deltaglambda}
\alpha=g_\lambda(\mathbf{0},\mathbf{k})
\end{equation} with \(g_\lambda\) in \eqref{glambda}, or equivalently,

\begin{equation} \label{1delta}
\mathrm{area}(\Gamma)\alpha= \lim_{ r \rightarrow \infty}{\left[\sum_{\substack{\mathbf{m}\in\mathbb{Z}^2\\|\xi_\mathbf{m}+\mathbf{k}| \le r}}{\frac{1}{|\xi_\mathbf{m}+\mathbf{k}|^2-\lambda}}-\frac{2\pi}{\mathrm{area}(\mathcal{B})} \ln r\right]}.
\end{equation}
See Figure~\ref{flop1lpltk0} for schematic graph of RHS of \eqref{1delta} as a function of \(\lambda\).

In addition, every perturbed eigenvalue \(\lambda\) satisfies \(\mult(\lambda,-\Delta_{\alpha,\{\mathbf{0}\}}(\mathbf{k}))=1\) with the corresponding eigenfunction \(g_\lambda(\bullet,\mathbf{k})\).

\begin{rmk} We may use a simpler summation formula \eqref{1deltamod} equivalent to the limit and partial sum notation \eqref{1delta}:
\begin{equation}\label{1deltamod}
\mathrm{area}(\Gamma) (\alpha+\alpha_0)= \sum_{m\in\mathbb{Z}^2}\left[{\frac{1}{|\xi_\mathbf{m}+\mathbf{k}|^2-\lambda}}- \frac{|\xi_\mathbf{m}|^2}{|\xi_\mathbf{m}|^4+1}\right]
\end{equation}
with a constant \(\alpha_0\) defined as
\begin{equation}\label{C}
\alpha_0= \lim_{ r \rightarrow \infty}{\left[ \frac{\ln r}{2\pi}-\frac{1}{\mathrm{area}(\Gamma)}\sum_{\substack{m\in\mathbb{Z}^2\\|\xi_\mathbf{m}| \le r}}\frac{|\xi_\mathbf{m}|^2}{|\xi_\mathbf{m}|^4+1}\right]}
.\end{equation}
\end{rmk}


On the other hand, the unperturbed eigenvalues \(\lambda\) are given as the eigenvalues of \(-\Delta(\mathbf{k})\) with \(\mult (\lambda,-\Delta(\mathbf{k})) >1\). More precisely, let \(\mu=\mult (\lambda,-\Delta(\mathbf{k}))\). Then there exist \(\mathbf{m}_1,\cdots,\mathbf{m}_\mu \in \mathbb{Z}^2\) such that
\[\lambda = |\xi_{\mathbf{m}_1}+\mathbf{k}|^2=\cdots=|\xi_{\mathbf{m}_\mu}+\mathbf{k}|^2. \] Then we have
\begin{equation}\label{1deltamult}
\mult (\lambda,-\Delta_{\alpha,\{\mathbf{0}\}}(\mathbf{k})) = \mult (\lambda,-\Delta(\mathbf{k})) -1
\end{equation}
with the corresponding eigenspace
\[\left\{ f: \mathbf{x}\mapsto \sum_{j=1}^\mu c_j e^{i(\xi_{\mathbf{m}_j}+\mathbf{k})\cdot \mathbf{x}}~\middle|~ f(\mathbf{0})=0,~ c_1,\cdots, c_\mu\in\mathbb{C} \right\}\]
In particular, if \(\mult (\lambda,-\Delta(\mathbf{k}))=1\), then \(\lambda \notin \sigma(-\Delta_{\alpha,\{\mathbf{0}\}}(\mathbf{k}))\).

Now we observe the conic singularities of dispersion bands near Dirac point due to the symmetry property of the triangular lattice.
\begin{thm} \label{prop:1detaconic} Let \(\nu_1 (\mathbf{k},\alpha) \le \nu_2 (\mathbf{k},\alpha)\le \cdots\) be the eigenvalues of \(-\Delta_{\alpha,\{\mathbf{0}\}}(\mathbf{k})\). As \(|\mathbf{k}-\mathbf{K}| \rightarrow 0 \), 
\begin{align} \nu_2 (\mathbf{k},\alpha) &= |\mathbf{K}|^2 -\frac{4\pi}{3a} |\mathbf{k}-\mathbf{K}|+o(|\mathbf{k}-\mathbf{K}|) \\ \nu_3 (\mathbf{k},\alpha) &= |\mathbf{K}|^2 +\frac{4\pi}{3a} |\mathbf{k}-\mathbf{K}|+o(|\mathbf{k}-\mathbf{K}|)
\end{align}
\end{thm}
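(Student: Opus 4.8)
The plan is to reduce the assertion to a perturbative analysis of the perturbed--eigenvalue equation \eqref{1deltaglambda} in a fixed neighbourhood of the ``Dirac energy'' $\lambda_0:=|\mathbf{K}|^2$.
\emph{Step 1 (the picture at $\mathbf{k}=\mathbf{K}$).} Write $\mathbf{q}_1=\xi_{(0,0)}+\mathbf{K}=\mathbf{K}$, $\mathbf{q}_2=\xi_{(-1,0)}+\mathbf{K}$, $\mathbf{q}_3=\xi_{(-1,-1)}+\mathbf{K}$. A direct computation with the vectors $\mathbf{k}_1,\mathbf{k}_2$ gives $|\mathbf{q}_1|=|\mathbf{q}_2|=|\mathbf{q}_3|=|\mathbf{K}|=\tfrac{4\pi}{3a}$ and $\mathbf{q}_1+\mathbf{q}_2+\mathbf{q}_3=\mathbf{0}$, so the three vectors are pairwise $\tfrac{2\pi}{3}$ apart -- this is the geometric shadow of the $\mathcal{R}_{\mathbf y}$-invariance of $\Lambda$ (Proposition~\ref{honeycombsymm}), and it says that $-\mathbf{K}$ has exactly three nearest points in $\Lambda^*$. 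Hence $\lambda_0$ is the lowest eigenvalue of $-\Delta(\mathbf{K})$ with $\mult(\lambda_0,-\Delta(\mathbf{K}))=3$, so by \eqref{1deltamult} we have $\mult(\lambda_0,-\Delta_{\alpha,\{\mathbf{0}\}}(\mathbf{K}))=2$. Since $\lambda\mapsto g_\lambda(\mathbf{0},\mathbf{K})$ is strictly increasing and runs from $-\infty$ to $+\infty$ on each component of $\mathbb{R}\setminus\sigma(-\Delta(\mathbf{K}))$, the operator $-\Delta_{\alpha,\{\mathbf{0}\}}(\mathbf{K})$ has exactly one eigenvalue strictly below $\lambda_0$ and its next eigenvalue lies strictly above $\lambda_0$. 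By the (standard) continuity of the dispersion bands in $\mathbf{k}$, for $\mathbf{k}$ near $\mathbf{K}$ there are then exactly two eigenvalues in a fixed small window around $\lambda_0$, they are $\nu_2(\mathbf{k})\le\nu_3(\mathbf{k})$, and both tend to $\lambda_0$ as $\mathbf{k}\to\mathbf{K}$.

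\emph{Step 2 (reduction to a quadratic).} Put $\mathbf{k}=\mathbf{K}+\boldsymbol\kappa$, $\ell_j=\ell_j(\boldsymbol\kappa):=|\mathbf{q}_j+\boldsymbol\kappa|^2-\lambda_0=2\,\mathbf{q}_j\cdot\boldsymbol\kappa+|\boldsymbol\kappa|^2$, and $s:=\lambda-\lambda_0$. In \eqref{1delta} (or \eqref{1deltamod}) split off the three terms having poles at $\lambda_0+\ell_j$: the eigenvalue equation \eqref{1deltaglambda} becomes $\mathrm{area}(\Gamma)\,\alpha=\sum_{j=1}^{3}(\ell_j-s)^{-1}+R(\lambda_0+s,\mathbf{K}+\boldsymbol\kappa)$, where $R$ is real--analytic, hence bounded, near $(\lambda_0,\mathbf{K})$. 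Clearing denominators, the eigenvalues of $-\Delta_{\alpha,\{\mathbf{0}\}}(\mathbf{k})$ in the window are exactly the zeros of the real--analytic function $G(s,\boldsymbol\kappa):=(\mathrm{area}(\Gamma)\alpha-R)\prod_{j=1}^{3}(\ell_j-s)-(3s^2-2e_1s+e_2)$, with $e_1=\sum_j\ell_j$ and $e_2=\sum_{i<j}\ell_i\ell_j$ (the coinciding--pole values re-enter, correctly, as the unperturbed eigenvalues precisely when the second factor also vanishes). Since $G(s,\mathbf{0})=-s^2\big(s(\mathrm{area}(\Gamma)\alpha-R(\lambda_0+s,\mathbf{K}))+3\big)$ vanishes to order exactly two at $s=0$, the Weierstrass preparation theorem gives $G=(s^2+p(\boldsymbol\kappa)s+q(\boldsymbol\kappa))\cdot(\text{analytic unit})$ near $(0,\mathbf{0})$, with $p(\mathbf{0})=q(\mathbf{0})=0$; thus $\nu_2(\mathbf{k})-\lambda_0$ and $\nu_3(\mathbf{k})-\lambda_0$ are the two roots $\tfrac12\big(-p\pm\sqrt{p^2-4q}\big)$.

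\emph{Step 3 (computing the leading term).} Because $\mathbf{q}_1+\mathbf{q}_2+\mathbf{q}_3=\mathbf{0}$, one has $e_1=3|\boldsymbol\kappa|^2=O(|\boldsymbol\kappa|^2)$ and hence $e_2=\tfrac12(e_1^2-\sum_j\ell_j^2)=-2\sum_j(\mathbf{q}_j\cdot\boldsymbol\kappa)^2+O(|\boldsymbol\kappa|^4)$. Reading off the coefficients of $s^0$ and $s^1$ in $G$ yields $q(\boldsymbol\kappa)=\tfrac13 e_2+O(|\boldsymbol\kappa|^3)$ and $p(\boldsymbol\kappa)=O(|\boldsymbol\kappa|^2)$. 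Writing $\mathbf{q}_j\cdot\boldsymbol\kappa=|\mathbf{K}|\,|\boldsymbol\kappa|\cos\phi_j$ with the $\phi_j$ pairwise $\tfrac{2\pi}{3}$ apart and using the elementary identity $\sum_j\cos^2\phi_j=\tfrac32$, we get $q(\boldsymbol\kappa)=-|\mathbf{K}|^2|\boldsymbol\kappa|^2+O(|\boldsymbol\kappa|^3)$. In particular $p^2-4q=4|\mathbf{K}|^2|\boldsymbol\kappa|^2+O(|\boldsymbol\kappa|^3)>0$, so the two roots are real and equal $\pm|\mathbf{K}|\,|\boldsymbol\kappa|+O(|\boldsymbol\kappa|^2)$; pairing the signs with $\nu_2\le\nu_3$ and inserting $|\mathbf{K}|=\tfrac{4\pi}{3a}$, $|\boldsymbol\kappa|=|\mathbf{k}-\mathbf{K}|$ gives the two displayed asymptotics (in fact with error $O(|\mathbf{k}-\mathbf{K}|^2)$).

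\emph{Main obstacle.} I expect the delicate part to be the bookkeeping of Step~1 together with the italicized claim in Step~2: one must line up the triple collision of poles of $g_\lambda(\mathbf{0},\cdot)$ with the perturbed/unperturbed dichotomy around \eqref{1deltaglambda}--\eqref{1deltamult}, in particular handling the (codimension--one) directions of $\boldsymbol\kappa$ along which two of the $\ell_j$ coincide -- there one of the two nearby eigenvalues is an \emph{unperturbed} eigenvalue $\lambda_0+\ell_j$ rather than a genuine solution of $\alpha=g_\lambda(\mathbf{0},\mathbf{k})$, yet it must still be the root of $G$ with the correct sign -- and confirming, for fixed $\alpha$, that $\nu_1$ and the fourth band stay uniformly bounded away from $\lambda_0$ near $\mathbf{K}$. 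Once that is settled, the analytic remainder $R$ only enters through the bounded, non-vanishing unit of the Weierstrass factorization, and what is left is the trigonometric identity $\sum_j\cos^2\phi_j=\tfrac32$ and routine estimates on the renormalized sum \eqref{1delta}.
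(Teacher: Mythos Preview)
Your argument is correct and reaches the same conclusion as the paper, but the technical route differs. The paper also isolates the three ``colliding'' terms $1/(\ell_j-s)$ coming from $\mathbf m\in\{(0,0),(-1,0),(-1,-1)\}$ and puts the rest into a bounded remainder; however, instead of Weierstrass preparation it simply multiplies the equation $\alpha'=\sum_j(\ell_j-s)^{-1}+R$ by $|\delta\mathbf k|$, lets $|\delta\mathbf k|\to0$ under the \emph{genericity} assumption that the three $\ell_j$ are pairwise distinct, and obtains a limiting algebraic relation for $\delta\lambda/|\delta\mathbf k|$ whose solutions are $\pm|\mathbf k_1|/\sqrt3=\pm 4\pi/(3a)$. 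The non-generic directions (where two $\ell_j$ coincide and one of $\nu_2,\nu_3$ is an unperturbed eigenvalue) are then handled by continuity of the dispersion bands. Your Weierstrass factorization $G=U\,(s^2+ps+q)$ treats all directions uniformly, absorbs the unperturbed/perturbed dichotomy automatically (your italicized remark is exactly right: $G(\ell_j,\boldsymbol\kappa)=0$ iff two of the $\ell_j$ coincide), and yields the sharper error $O(|\mathbf k-\mathbf K|^2)$ in place of the paper's $o(|\mathbf k-\mathbf K|)$. The paper's argument is a bit more elementary; yours is cleaner and avoids the separate continuity step. Both hinge on the same symmetry input $\mathbf q_1+\mathbf q_2+\mathbf q_3=\mathbf0$ and the identity $\sum_j(\mathbf q_j\cdot\boldsymbol\kappa)^2=\tfrac32|\mathbf K|^2|\boldsymbol\kappa|^2$.
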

\begin{proof}
We have \(\nu_2(\mathbf{K},\alpha)=\nu_3(\mathbf{K},\alpha)=|\mathbf{K}|^2\) by \eqref{1deltamult}, whereas \(\nu_1(\mathbf{K},\alpha)<|\mathbf{K}|^2\) is obtained by \eqref{1deltaglambda}. 
Consider \(\delta\mathbf{k}\) with \(|\delta\mathbf{k}|\ll 1\) such that \(|\mathbf{K}+\delta\mathbf{k}|, |\mathbf{K}+R\delta\mathbf{k}|\), and \(|\mathbf{K}+R^2\delta\mathbf{k}|\) have distinct values where \(R\) is defined by \eqref{R}. 
Then we decompose the direction vector \(\mathbf{u}=\frac{\delta\mathbf{k}}{|\delta\mathbf{k}|} \in \mathbb{S}^1 \) into
\[\mathbf{u}= u_1\frac{\mathbf{k}_1}{|\mathbf{k}_1|}+u_2\frac{\mathbf{k}_2}{|\mathbf{k}_2|}\]
where 
\begin{equation}\label{u1u2}u_1^2 +u_2^2 -u_1 u_2 =1.\end{equation}
and 
\begin{equation}\label{k1k2}|\mathbf{k}_1|=|\mathbf{k}_2|=\frac{4\pi}{a\sqrt{3}}.\end{equation}
Therefore,
\begin{equation}\label{Rdeltak}\begin{aligned}
|\mathbf{K}+\delta\mathbf{k}|^2&=|\mathbf{K}|^2 + u_1 |\mathbf{k}_1||\delta\mathbf{k}| +|\delta\mathbf{k}|^2 \\
|\mathbf{K}+R\delta\mathbf{k}|^2&=|\mathbf{K}|^2 + (u_2-u_1) |\mathbf{k}_1||\delta\mathbf{k}| +|\delta\mathbf{k}|^2 \\
|\mathbf{K}+R^2\delta\mathbf{k}|^2&=|\mathbf{K}|^2 - u_2 |\mathbf{k}_1||\delta\mathbf{k}| +|\delta\mathbf{k}|^2 \\
\end{aligned}\end{equation}

Suppose \(\lambda'=|\mathbf{K}|^2+\delta\lambda\) solves \eqref{1deltaglambda} at \(\mathbf{k}=\mathbf{K}+\delta\mathbf{k}\), namely,
\begin{equation}\label{1deltadeltak}
\mathrm{area}(\Gamma)\alpha= \lim_{ r \rightarrow \infty}{\left[\sum_{\substack{\mathbf{m}\in\mathbb{Z}^2\\|\xi_\mathbf{m}+\mathbf{K}+\delta\mathbf{k}| \le r}}{\frac{1}{|\xi_\mathbf{m}+\mathbf{K}+\delta\mathbf{k}|^2-\lambda'}}-\frac{2\pi}{\mathrm{area}(\mathcal{B})} \ln r\right]}.
\end{equation}

To simplify the notation, let \[\mathcal{M}_0=\{\mathbf{m}\in\mathbb{Z}^2 ~|~ |\xi_\mathbf{m}+\mathbf{K}|=|\mathbf{K}|\}=\{(0,0),(-1,0),(-1,-1)\}\] and let
\[
C_0= \lim_{ r \rightarrow \infty}{\left[\sum_{\substack{\mathbf{m}\in\mathbb{Z}^2\setminus \mathcal{M}_0 \\|\xi_\mathbf{m}+\mathbf{K}| \le r}}{\frac{1}{|\xi_\mathbf{m}+\mathbf{K}|^2-\lambda'}}-\frac{2\pi}{\mathrm{area}(\mathcal{B})} \ln r\right]}
\]

Then due to the assumption on \(\delta\mathbf{k}\), \eqref{1deltadeltak} reads 
\[\begin{aligned}\mathrm{area}(\Gamma)\alpha-C_0 =&\frac{1}{|\mathbf{K}+\delta\mathbf{k}|^2-\lambda'}+\frac{1}{|\mathbf{K}+R\delta\mathbf{k}|^2-\lambda'}+\frac{1}{|\mathbf{K}+R^2\delta\mathbf{k}|^2-\lambda'} \\
=&\frac{1}{|\mathbf{k}_1||\delta\mathbf{k}| u_1 -|\delta\mathbf{k}|^2 -\delta\lambda} +\frac{1}{|\mathbf{k}_1||\delta\mathbf{k}| (-u_1+u_2) -|\delta\mathbf{k}|^2 -\delta\lambda} \\&+ \frac{1}{|\mathbf{k}_1||\delta\mathbf{k}| (-u_2) -|\delta\mathbf{k}|^2 -\delta\lambda}\\
\end{aligned}\]

Multiplying \(|\delta\mathbf{k}|\) on both sides, we obtain as \(|\delta\mathbf{k}|\rightarrow 0\),
\[\dfrac{1}{|\mathbf{k}_1|u_1- \dfrac{\delta\lambda}{|\delta\mathbf{k}|}}+\dfrac{1}{|\mathbf{k}_1|(u_2-u_1)- \dfrac{\delta\lambda}{|\delta\mathbf{k}|}}+\dfrac{1}{|\mathbf{k}_1|(-u_2)- \dfrac{\delta\lambda}{|\delta\mathbf{k}|}}\rightarrow 0 \]
By \eqref{u1u2} and \eqref{k1k2}, 
\[\lim_{|\delta\mathbf{k}|\rightarrow 0 } \frac{\delta\lambda}{\delta\mathbf{k}} = \pm \frac{|\mathbf{k}_1|}{\sqrt{3}} = \pm \frac{4\pi}{3a}\] 

By the continuity of each dispersion band, we can extend this result to any direction \(\mathbf{u}\in \mathbb{S}^1\). So we obtain the directional derivative of \(\mathbf{k} \mapsto \nu_j(\mathbf{k},\alpha)\) at \(\mathbf{k}=\mathbf{K}\), \(j=2,3\) as follows:
\[
\nabla_{\mathbf{u}} \nu_2(\mathbf{K},\alpha)=-\frac{4\pi}{3a},\qquad
\nabla_{\mathbf{u}} \nu_3(\mathbf{K},\alpha)=+\frac{4\pi}{3a}
\]
\end{proof}


Note that the second and third dispersion bands described in Theorem~\ref{prop:1detaconic} are not the only pair of conic dispersion bands. Using the same argument, we observe infinitely many additional pairs of dispersion bands with a \((3n-1)\)-fold conic singularity at each \((\mathbf{K},\lambda')\) where 
\[\lambda'\in \{|\xi_\mathbf{m} +\mathbf{K}|^2 ~|~\mathbf{m}\in\mathbb{Z}^2\} = \sigma(-\Delta(\mathbf{K}))\]
and 
\begin{equation}\label{n}
n= \frac{1}{3}\left(\#\{\mathbf{m}\in\mathbb{Z}^2 ~|~ |\xi_\mathbf{m}+\mathbf{K}|=\lambda'\}\right)
\end{equation}
Furthermore, these are the only conic singularities at Dirac point since all the other perturbed eigenvalues \(\lambda'\in\sigma(-\Delta_{\alpha,\{\mathbf{0}\}}(\mathbf{K}))\setminus\sigma(-\Delta(\mathbf{K}))\) have multiplicity 1 which does not allow dispersion bands to meet at the eigenvalue.
Note that \(n\ge 1\) in \eqref{n}  is an integer since for any \(\mathbf{m}\in\mathbb{Z}^2\), we have \[|\xi_\mathbf{m}+\mathbf{K}|=|\tilde{R}\xi_\mathbf{m}+\mathbf{K}|=|\tilde{R}^2 \xi_\mathbf{m}+\mathbf{K}|\]
where \(\tilde{R}:\mathbb{R}^2\rightarrow \mathbb{R}^2\) is a \(\frac{2\pi}{3}\)-rotation around a Dirac Point \(-\mathbf{K}\). 
For instance, the \(j\)-th dispersion bands with \(j=2,3\) and \(j=8,\cdots,12\) form a 2-fold and a 5-fold conic singularity, respectively. (See Figure~\ref{floqaloc}.)

Now we investigate the dispersion bands globally over Brillouin zone and the corresponding band spectra. A similar model has been already examined in \cite{solvable} for rectangular two-dimensional lattices with one scatterer in each fundamental domain. Here we introduce an application to the triangular lattice \(\Lambda\). The proof is similar to that of Theorem III.4.7 in \cite{solvable}. 

\begin{prop}\label{prop:floq1spec}
Let \(\nu_1 (\mathbf{k},\alpha) \le \nu_2 (\mathbf{k},\alpha) \le \cdots \) be the eigenvalues of \(-\Delta_{\alpha,\{\mathbf{0}\}}(\mathbf{k})\). For \(\alpha \in \mathbb{R}\) and \(\boldsymbol\beta = (\alpha,\alpha, \cdots)\), the spectrum of the operator \(-\Delta_{\boldsymbol\beta,\Lambda}\) is purely absolutely continuous and equals 
\begin{equation}\sigma(-\Delta_{\boldsymbol\beta,\Lambda})=[\nu_1(\alpha,\mathbf{0}),\nu_1(\alpha,\mathbf{K})] \cup [\nu_2(\alpha),\infty)\end{equation}
where
\begin{equation}\nu_2(\alpha)=\min \left\{\nu_2(\mathbf{0},\alpha), \nu_2 \left(\frac{\mathbf{k}_1+\mathbf{k}_2}{2},\alpha \right)\right\} > 0 \quad \alpha\in \mathbb{R}\end{equation}

In addition, \(\alpha\mapsto\nu_j(\mathbf{k},\alpha)\) is strictly increasing on \(\mathbb{R}\), namely,
\begin{equation}\frac{\partial \nu_j(\mathbf{k},\alpha)}{\partial \alpha} >0,\quad\alpha\in\mathbb{R},\quad \mathbf{k}\in \mathcal{B},\quad j=1,2,\cdots. \end{equation}

Hence, there exists \(\alpha_1 \) such that
\begin{equation}\sigma(-\Delta_{\boldsymbol\beta,\Lambda})=[\nu_1(\alpha,\mathbf{0}),\infty),\quad \alpha \ge \alpha_1\end{equation}
\end{prop}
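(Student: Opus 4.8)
The plan is to diagonalize $-\Delta_{\boldsymbol\beta,\Lambda}$ by the Floquet fiber decomposition and identify the spectrum with the union of the ranges of the band functions $\mathbf k\mapsto\nu_j(\mathbf k,\alpha)$, following the scheme of Theorem III.4.7 in \cite{solvable}. Since $\boldsymbol\beta=(\alpha,\alpha,\dots)$, Proposition~\ref{prop:fiber} gives $-\Delta_{\boldsymbol\beta,\Lambda}\cong\int_{\mathcal B}^{\oplus}\big(-\Delta_{\alpha,\{\mathbf 0\}}(\mathbf k)\big)\,d\mathbf k$, so that $\sigma(-\Delta_{\boldsymbol\beta,\Lambda})=\overline{\bigcup_{\mathbf k\in\mathcal B}\{\nu_j(\mathbf k,\alpha):j\ge 1\}}$. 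Each $\nu_j(\cdot,\alpha)$ is continuous on the compact set $\mathcal B$: locally it is either a real-analytic branch of a solution of \eqref{1deltaglambda} (by the implicit function theorem, the right-hand side of \eqref{1delta} being real-analytic in $(\lambda,\mathbf k)$ off its poles) or one of the free eigenvalues $|\xi_\mathbf m+\mathbf k|^2$; in particular no band is constant on a set of positive measure, which yields the purely absolutely continuous character of the spectrum. It then remains to compute the ranges $\nu_j(\mathcal B,\alpha)$, which are compact intervals.

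For the first band, the discussion around \eqref{1deltaglambda} and \eqref{1deltamult} shows that $\nu_1(\mathbf k,\alpha)$ is the unique solution $\lambda<\inf\sigma(-\Delta(\mathbf k))$ of $\alpha=g_\lambda(\mathbf 0,\mathbf k)$, hence a simple, continuous band. Using that the right-hand side of \eqref{1delta} is strictly increasing in $\lambda$ on $(-\infty,\inf\sigma(-\Delta(\mathbf k)))$, a comparison of $g_\lambda(\mathbf 0,\cdot)$ across $\mathbf k$ together with the $\mathcal T_\mathbf v$-, $\mathcal I_\mathbf y$- and $\mathcal R_\mathbf y$-symmetries of Proposition~\ref{honeycombsymm} (which force the critical points of $\nu_1(\cdot,\alpha)$ to lie over the high-symmetry points of $\mathcal B$) shows, exactly as in the proof of Theorem III.4.7 of \cite{solvable}, that $\nu_1(\cdot,\alpha)$ attains its minimum at $\mathbf k=\mathbf 0$ and its maximum at $\mathbf k=\mathbf K$. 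By Theorem~\ref{prop:1detaconic} we have $\nu_1(\mathbf K,\alpha)<|\mathbf K|^2=\nu_2(\mathbf K,\alpha)$, so the first band equals $[\nu_1(\mathbf 0,\alpha),\nu_1(\mathbf K,\alpha)]$.

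For the remaining bands I would show $\bigcup_{\mathbf k}\{\nu_j(\mathbf k,\alpha):j\ge2\}=[\nu_2(\alpha),\infty)$. The lower edge: again by the symmetry argument $\nu_2(\cdot,\alpha)$ attains its minimum at a high-symmetry point, and the candidates are $\mathbf k=\mathbf 0$, where $\nu_2(\mathbf 0,\alpha)$ is the perturbed eigenvalue in $(0,|\mathbf k_1|^2)$; the edge midpoint $\mathbf k=\tfrac12(\mathbf k_1+\mathbf k_2)$, where $|\xi_{(0,0)}+\mathbf k|=|\xi_{(-1,-1)}+\mathbf k|$ forces $\nu_2$ to be the unperturbed eigenvalue $\tfrac14|\mathbf k_1+\mathbf k_2|^2=\tfrac{4\pi^2}{3a^2}$; and $\mathbf k=\mathbf K$, where $\nu_2=|\mathbf K|^2=\tfrac{16\pi^2}{9a^2}$. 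Since $\tfrac{4\pi^2}{3a^2}<\tfrac{16\pi^2}{9a^2}$, the Dirac point is not the minimizer, so $\nu_2(\alpha)=\min\{\nu_2(\mathbf 0,\alpha),\nu_2(\tfrac12(\mathbf k_1+\mathbf k_2),\alpha)\}>0$. Absence of gaps above $\nu_2(\alpha)$ follows from the interlacing of the perturbed eigenvalues with the free eigenvalues $|\xi_\mathbf m+\mathbf k|^2$ (between consecutive distinct free eigenvalues the right-hand side of \eqref{1delta} runs monotonically from $-\infty$ to $+\infty$), together with $\bigcup_{\mathbf k\in\mathcal B}\sigma(-\Delta(\mathbf k))=[0,\infty)$ and the fact that consecutive bands $\nu_j,\nu_{j+1}$ meet at the high-symmetry points where a free eigenvalue has multiplicity $\ge 3$ (e.g.\ $\nu_2,\nu_3$ meet at $\mathbf K$, since $\#\mathcal M_0=3$); this reproduces the gap-counting argument of Theorem III.4.7 of \cite{solvable} in the hexagonal geometry.

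Finally, strict monotonicity in $\alpha$ is immediate by implicit differentiation of \eqref{1deltaglambda}: since $\partial_\lambda$ of the right-hand side of \eqref{1delta} equals $\sum_\mathbf m(|\xi_\mathbf m+\mathbf k|^2-\lambda)^{-2}>0$ (the $\ln r$ term being $\lambda$-independent), every perturbed branch satisfies $\partial\nu_j/\partial\alpha=\big(\partial_\lambda g_\lambda(\mathbf 0,\mathbf k)\big)^{-1}>0$. In particular $\alpha\mapsto\nu_1(\mathbf K,\alpha)$ is continuous and strictly increasing, and since $g_\lambda(\mathbf 0,\mathbf K)\to+\infty$ as $\lambda\uparrow|\mathbf K|^2$ we get $\nu_1(\mathbf K,\alpha)\uparrow|\mathbf K|^2=\tfrac{16\pi^2}{9a^2}$ as $\alpha\to+\infty$, whereas $\nu_2(\alpha)\le\nu_2(\tfrac12(\mathbf k_1+\mathbf k_2),\alpha)=\tfrac{4\pi^2}{3a^2}<\tfrac{16\pi^2}{9a^2}$ for every $\alpha$; hence there is $\alpha_1$ with $\nu_1(\mathbf K,\alpha)\ge\nu_2(\alpha)$ for all $\alpha\ge\alpha_1$, so the two pieces $[\nu_1(\mathbf 0,\alpha),\nu_1(\mathbf K,\alpha)]$ and $[\nu_2(\alpha),\infty)$ overlap and $\sigma(-\Delta_{\boldsymbol\beta,\Lambda})=[\nu_1(\mathbf 0,\alpha),\infty)$. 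I expect the main obstacle to be the Brillouin-zone optimization, namely rigorously locating the extrema of $\nu_1(\cdot,\alpha)$ and $\nu_2(\cdot,\alpha)$ at the high-symmetry points and ruling out spectral gaps above $\nu_2(\alpha)$; this is where the argument must carefully adapt Theorem III.4.7 of \cite{solvable}, the remainder being bookkeeping with \eqref{1delta} and elementary estimates.
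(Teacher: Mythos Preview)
Your approach is exactly what the paper does: it gives no proof of its own and simply states that ``the proof is similar to that of Theorem~III.4.7 in \cite{solvable}'', which is precisely the template you propose to adapt to the hexagonal geometry. Your identification of the main technical point---rigorously locating the extrema of $\nu_1(\cdot,\alpha)$ and $\nu_2(\cdot,\alpha)$ on $\mathcal B$---is accurate; note that symmetry alone only forces high-symmetry points to be \emph{critical}, not to be global extrema, so the actual argument (as in \cite{solvable} and as in the paper's Proposition~\ref{prop:l10} for the honeycomb case) goes through the Poisson summation representation of $g_\lambda$ rather than through pure symmetry.
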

See Figure~\ref{floq1aglobal} and Supplement Material \#1 for the global view of first five dispersion bands with a fixed \(\alpha\) and various \(\alpha\)'s, respectively.
See also Figure~\ref{floq1spec} for the graph of spectral bands \(\sigma(-\Delta_{\boldsymbol\beta,\Lambda})\) versus \(\alpha\) where \(\boldsymbol\beta=(\alpha,\alpha,\cdots)\).



\subsection{Point scatterers on the honeycomb structure}\label{sec:honeycomb}
Consider the honeycomb structure \(H=\Lambda+Y\) defined in Section~\ref{2lat} with
\[Y=\{0,\mathbf{x}_0\}, \quad \Lambda=\mathbb{Z}\mathbf{v}_1 \oplus \mathbb{Z}\mathbf{v}_2.\] 
Suppose the parameters for two scatterers at \(\mathbf{0}\) and \(\mathbf{x}_0\) are the same, say \(\alpha\). \[\boldsymbol\alpha=(\alpha,\alpha).\] This is a legitimate assumption for the model of crystal structure comprised of only one element, such as carbon atoms in graphene.
Then the set of perturbed eigenvalues, \(\sigma(-\Delta_{\alpha,Y} (\mathbf{k}))\setminus\sigma(-\Delta(\mathbf{k}))\), is determined as follows:


\begin{prop}\label{thm:2deltapert}
Suppose \(\lambda' \notin \sigma(-\Delta(\mathbf{k}))\). Then \(\lambda' \in \sigma(-\Delta_{\alpha,\{\mathbf{0},\mathbf{x}_0\}}(\mathbf{k}))\) if and only if
\begin{equation}
\label{2delta-}
\alpha=g_{\lambda'}(\mathbf{0},\mathbf{k})+|g_{\lambda'}(\mathbf{x}_0, \mathbf{k})|
\end{equation} 
or 
\begin{equation}\label{2delta+}
\alpha=g_{\lambda'}(\mathbf{0},\mathbf{k})-|g_{\lambda'}(\mathbf{x}_0, \mathbf{k})|
\end{equation} 

In addition,
\begin{equation}\label{pertevmult}
\mult (\lambda',-\Delta_{\alpha,Y} (\mathbf{k})) =\dim\ker(\Gamma_{\alpha,\{\mathbf{0},\mathbf{x}_0\}}(\lambda',\mathbf{k}))
\end{equation}
\end{prop}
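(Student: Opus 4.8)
The plan is to deduce the statement directly from the Krein-type resolvent formula for the decomposed point-scatterer operator, which is precisely what makes \(\Gamma_{\boldsymbol\alpha,X}(\lambda,\mathbf{k})\) enter the picture. By Proposition~\ref{ndelta}, for \(\lambda'\notin\sigma(-\Delta(\mathbf{k}))\) one has \(\lambda'\in\sigma(-\Delta_{\alpha,\{\mathbf{0},\mathbf{x}_0\}}(\mathbf{k}))\) if and only if the \(2\times2\) matrix \(\Gamma:=\Gamma_{\alpha,\{\mathbf{0},\mathbf{x}_0\}}(\lambda',\mathbf{k})\) of \eqref{gamma} fails to be invertible, and in that case the eigenvalue multiplicity equals \(\dim\ker\Gamma\). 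The latter statement is exactly \eqref{pertevmult}, so the entire task reduces to determining when \(\Gamma\) is singular. Note also that \(-\Delta_{\alpha,\{\mathbf{0},\mathbf{x}_0\}}(\mathbf{k})\) is self-adjoint, so any \(\lambda'\) in its spectrum is real; and, as observed below, the two conditions \eqref{2delta-}, \eqref{2delta+} can hold for real \(\alpha\) only when \(\lambda'\in\mathbb{R}\). Hence I may assume \(\lambda'\in\mathbb{R}\) throughout.

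The second step is to write \(\Gamma\) out explicitly. With \(\mathbf{x}_1=\mathbf{0}\), \(\mathbf{x}_2=\mathbf{x}_0\) and \(\alpha_1=\alpha_2=\alpha\), the definition \eqref{gamma} gives (in the natural indexing)
\[
\Gamma=\begin{pmatrix} \alpha-g_{\lambda'}(\mathbf{0},\mathbf{k}) & -g_{\lambda'}(-\mathbf{x}_0,\mathbf{k})\\[2pt] -g_{\lambda'}(\mathbf{x}_0,\mathbf{k}) & \alpha-g_{\lambda'}(\mathbf{0},\mathbf{k})\end{pmatrix},
\]
where \(g_{\lambda'}(\mathbf{0},\mathbf{k})\) is the renormalized diagonal value of \eqref{1delta}. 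From the series representation \eqref{glambda} of \(g_\lambda\) one reads off, for real \(\lambda'\notin\sigma(-\Delta(\mathbf{k}))\), that \(g_{\lambda'}(\mathbf{0},\mathbf{k})\in\mathbb{R}\) (it is a real limit, cf.\ \eqref{1delta}) and that \(g_{\lambda'}(-\mathbf{x}_0,\mathbf{k})=\overline{g_{\lambda'}(\mathbf{x}_0,\mathbf{k})}\); the same representation shows \(\Im g_{\lambda'}(\mathbf{0},\mathbf{k})\ne 0\) whenever \(\Im\lambda'\ne 0\), which is the remark used above to restrict to \(\lambda'\in\mathbb{R}\). Writing \(a:=\alpha-g_{\lambda'}(\mathbf{0},\mathbf{k})\in\mathbb{R}\) and \(b:=g_{\lambda'}(\mathbf{x}_0,\mathbf{k})\), the matrix \(\Gamma=\begin{pmatrix} a & -\bar b\\ -b & a\end{pmatrix}\) is Hermitian with equal diagonal entries.

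The last step is a two-by-two determinant: \(\det\Gamma=a^2-|b|^2=(a-|b|)(a+|b|)\). Hence \(\Gamma\) is singular if and only if \(a=|b|\) or \(a=-|b|\), i.e.\ if and only if \(\alpha-g_{\lambda'}(\mathbf{0},\mathbf{k})=\pm|g_{\lambda'}(\mathbf{x}_0,\mathbf{k})|\), which is exactly the dichotomy \eqref{2delta-}--\eqref{2delta+}; combined with Proposition~\ref{ndelta} this proves the equivalence, and \eqref{pertevmult} was already noted. As a consistency check one can add that the Hermitian matrix \(\Gamma\) has eigenvalues \(a\pm|b|\), so \(\dim\ker\Gamma=1\) when exactly one of the two conditions holds (equivalently \(b\ne 0\)) and \(\dim\ker\Gamma=2\) when both hold (equivalently \(a=b=0\)). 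I expect the only non-routine part — the main obstacle — to be pinning down the exact form of \(\Gamma\) from \eqref{gamma}, in particular the reality of the renormalized diagonal entry and the Hermitian symmetry \(g_{\lambda'}(-\mathbf{x}_0,\mathbf{k})=\overline{g_{\lambda'}(\mathbf{x}_0,\mathbf{k})}\) of the off-diagonal entries; once these are in hand, the factorization of the determinant, and hence the whole proposition, is immediate.
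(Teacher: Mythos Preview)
Your derivation of the eigenvalue characterization \eqref{2delta-}--\eqref{2delta+} is fine and in fact matches the paper's approach: both observe that \(\Gamma\) is Hermitian with equal diagonal entries and that its eigenvalues are \(a\pm|b|\) in your notation. The paper carries out the diagonalization explicitly via a unitary \(U\), while you factor the determinant; these are equivalent.

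The gap is in the multiplicity statement \eqref{pertevmult}. You assert that Proposition~\ref{ndelta} already gives \(\mult(\lambda',-\Delta_{\alpha,Y}(\mathbf{k}))=\dim\ker\Gamma\), but that proposition only records the resolvent formula; it does not compute the rank of the spectral projection at a point where \(\Gamma\) is singular. Concretely, from \eqref{infdelta} one sees that the finite-rank correction to the free resolvent involves \(\Gamma(\lambda,\mathbf{k})^{-1}\), and when \(\det\Gamma(\lambda',\mathbf{k})=0\) the resolvent may have a pole at \(\lambda'\); but the \emph{order} of that pole, and hence the rank of \(\oint_{|\lambda-\lambda'|=\epsilon}(-\Delta_{\alpha,Y}(\mathbf{k})-\lambda)^{-1}\,d\lambda\), is not determined by \(\dim\ker\Gamma(\lambda',\mathbf{k})\) alone without further information on how \(\Gamma(\lambda,\mathbf{k})\) varies in \(\lambda\). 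The paper closes this gap by diagonalizing \(\Gamma\) as \(\tilde\Gamma=\diag(\tilde\gamma_1,\tilde\gamma_2)\), proving the strict monotonicity \(\partial_\lambda\tilde\gamma_j<0\) (so each \(\tilde\gamma_j\) has at most a simple zero), and then computing the spectral projection as a residue to conclude that its rank equals \(\#\{j:\tilde\gamma_j(\lambda')=0\}=\dim\ker\Gamma(\lambda',\mathbf{k})\). Without this monotonicity-plus-residue argument (or an appeal to a general Krein-formula result that you have not cited), the multiplicity claim is unproved; your proposal treats it as given when it is precisely the nontrivial part of the proof.
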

\begin{proof}
Suppose \(\lambda' \in \sigma(-\Delta_{\alpha,\{\mathbf{0},\mathbf{x}_0\}} (\mathbf{k})) \setminus \sigma(-\Delta(\mathbf{k})) \). By \eqref{gamma}, \(\Gamma_{\boldsymbol\alpha,\{\mathbf{0},\mathbf{x}_0\}}(\lambda,\mathbf{k})\) is Hermitian for all \(\lambda \notin \sigma(-\Delta(\mathbf{k}))\). 
So we can decompose \(\Gamma_{\boldsymbol\alpha,\{\mathbf{0},\mathbf{x}_0\}}(\lambda,\mathbf{k})\) into
\[\Gamma_{\alpha,\{\mathbf{0},\mathbf{x}_0\}}(\lambda,\mathbf{k}) = U_{\alpha,\{\mathbf{0},\mathbf{x}_0\}}(\lambda,\mathbf{k}) \tilde\Gamma_{\alpha,\{\mathbf{0},\mathbf{x}_0\}}(\lambda,\mathbf{k}) U^*_{\alpha,\{\mathbf{0},\mathbf{x}_0\}} (\lambda,\mathbf{k})\] 
where
\begin{equation}\label{gammaunitary}
U_{\alpha,\{\mathbf{0},\mathbf{x}_0\}}(\lambda,\mathbf{k})= 
\begin{dcases}
\frac{1}{\sqrt{2}}\begin{bmatrix}
1 & -\dfrac{g_\lambda (\mathbf{x}_0,\mathbf{k})}{|g_\lambda (\mathbf{x}_0,\mathbf{k})|}\\\dfrac{\overline{g_\lambda} (\mathbf{x}_0,\mathbf{k})}{|g_\lambda (\mathbf{x}_0,\mathbf{k})|}&1
\end{bmatrix} &\mbox{if } g_\lambda(\mathbf{x}_0, \mathbf{k})\ne 0\\\\
\begin{bmatrix}
1&0\\0&1
\end{bmatrix}& \mbox{if } g_\lambda(\mathbf{x}_0, \mathbf{k})=0
\end{dcases}\end{equation}
and 
\begin{equation}\label{gammadiag}\begin{aligned}
\tilde\Gamma_{\alpha,\{\mathbf{0},\mathbf{x}_0\}}(\lambda,\mathbf{k})&=\diag \left( \tilde\gamma_{1,\alpha,\{\mathbf{0},\mathbf{x}_0\}}(\lambda,\mathbf{k}),~\tilde\gamma_{2,\alpha,\{\mathbf{0},\mathbf{x}_0\}}(\lambda,\mathbf{k})\right) \\&=
\diag \left(
\alpha-g_\lambda(\mathbf{0},\mathbf{k})+|g_\lambda(\mathbf{x}_0, \mathbf{k})| ,~\alpha-g_\lambda(\mathbf{0},\mathbf{k})-|g_\lambda(\mathbf{x}_0, \mathbf{k})| \right)
\end{aligned}\end{equation}

Then for \(\lambda \notin \sigma(-\Delta_{\alpha,\{\mathbf{0},\mathbf{x}_0\}} (\mathbf{k}))\), we can rewrite \eqref{infdelta} as
\begin{multline}\label{infdeltadiag}
\left( -\Delta_{\alpha,\{\mathbf{0},\mathbf{x}_0\}}(\mathbf{k})-\lambda \right)^{-1} f(\mathbf{x}) \\ = \left(-\Delta (\mathbf{k})-\lambda \right)^{-1}f(\mathbf{x}) + \frac{1}{\mathrm{area}(\mathcal{B})}\sum_{j=1}^{N}{\tilde\gamma^{-1}_{j,\alpha,\{\mathbf{0},\mathbf{x}_0\}}(\lambda,\mathbf{k}) \left(\overline{\tilde g_{j,\lambda} \left(\bullet,\mathbf{k} \right)},f\right) \tilde g_{j,\lambda} \left(\mathbf{x}, \mathbf{k}\right)}
\end{multline}
where 
\(\tilde g_{j,\lambda}(\mathbf{x},\mathbf{k})\) is the \(j\)-th entry of the vector
\[U^*_{\alpha,\{\mathbf{0},\mathbf{x}_0\}} (\lambda,\mathbf{k}) 
\begin{bmatrix}
g_\lambda \left(\mathbf{x}, \mathbf{k}\right) \\
g_\lambda \left(\mathbf{x}-\mathbf{x}_0, \mathbf{k}\right) 
\end{bmatrix}.\]

In addition, we observe from \eqref{gammadiag} that \[\partial_\lambda \tilde\gamma_{j,\alpha,\{\mathbf{0},\mathbf{x}_0\}}(\lambda,\mathbf{k})< 0, \quad j=1,2\] 
which implies for \(\lambda' \in \sigma(-\Delta_{\alpha,\{0,\mathbf{x}_0\}} (\mathbf{k})) \setminus \sigma(-\Delta(\mathbf{k}))\),
\[\tilde\gamma_{j,\alpha,\{\mathbf{0},\mathbf{x}_0\}}(\lambda',\mathbf{k})=0 \quad\text{ if and only if }\quad \mathrm{Res}_{\lambda=\lambda'}\left(\tilde\gamma^{-1}_{j,\alpha,\{\mathbf{0},\mathbf{x}_0\}}(\lambda,\mathbf{k})\right)\ne 0, \quad j=1,2.\]
Therefore, we obtain the multiplicity of \(\lambda=\lambda'\) as follows: For \(\epsilon >0\) sufficiently small,
\[\begin{aligned}
&\mult (\lambda',-\Delta_{\alpha,\{\mathbf{0},\mathbf{x}_0\}} (\mathbf{k})) \\&= \mathrm{rank}\oint_{|\lambda-\lambda'|=\epsilon}\left( -\Delta_{\alpha,\{\mathbf{0},\mathbf{x}_0\}}(\mathbf{k})-\lambda \right)^{-1}d\lambda\\
&= \mathrm{rank}\oint_{|\lambda-\lambda'|=\epsilon} \frac{1}{\mathrm{area}(\mathcal{B})}\sum_{j=1}^{2}{\tilde\gamma^{-1}_{j,\alpha,\{\mathbf{0},\mathbf{x}_0\}}(\lambda,\mathbf{k}) \left(\overline{\tilde g_{j,\lambda} \left(\mathbf{k} \right)},\bullet\right) \tilde g_{j,\lambda} \left(\mathbf{k}\right)d\lambda} \\
&= \# \left\{j=1, 2 ~\middle|~ \mathrm{Res}_{\lambda=\lambda'}\left(\tilde\gamma^{-1}_{j,\alpha,\{\mathbf{0},\mathbf{x}_0\}}(\lambda,\mathbf{k})\right) \ne 0\right\}\\
&= \# \left\{j=1, 2 ~\middle|~ \tilde\gamma_{j,\alpha,\{\mathbf{0},\mathbf{x}_0\}}(\lambda',\mathbf{k})=0\right\}\\
&=\dim\ker(\Gamma_{\alpha,\{\mathbf{0},\mathbf{x}_0\}}(\lambda',\mathbf{k}))
\end{aligned}\]

In addition, the corresponding eigenspace is the range of the operator:
\begin{equation} \label{pertes}
\mathrm{span}\left\{\tilde g_{j,\lambda'}(\bullet,\mathbf{k}) ~\middle|~\tilde\gamma_{j,\alpha,\{\mathbf{0},\mathbf{x}_0\}}(\lambda',\mathbf{k})=0, ~j=1, 2\right\}
\end{equation}
\end{proof}

\begin{figure}
\centering
\begin{subfigure}{0.47\textwidth}
\centering
\includegraphics[width=\textwidth]{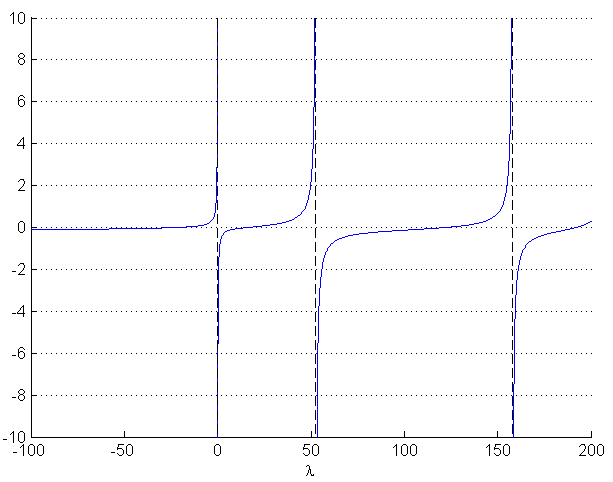} 
\caption{RHS of \eqref{1delta}}
\label{flop1lpltk0}
\end{subfigure}
\begin{subfigure}{0.47\textwidth}
\centering
\includegraphics[width=\textwidth]{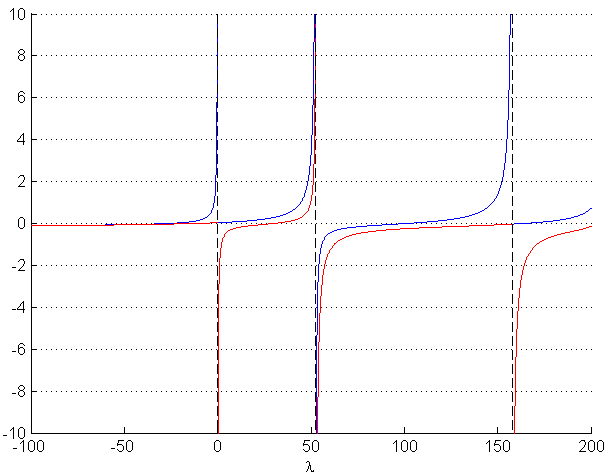}
\caption{RHS of \eqref{2delta-} (dashed) and \eqref{2delta+} (solid)}
\label{flop2lpltk0}
\end{subfigure}
\caption{RHS of \eqref{1delta}, \eqref{2delta-} and \eqref{2delta+} as a function of \(\lambda\). The dashed vertical lines represent \(\lambda \in \sigma(-\Delta(\mathbf{k}))\).}
\label{floplpltk0}
\end{figure}

On the other hand, we observe that some eigenvalues of \(-\Delta(\mathbf{k}) \) remain in the spectrum of \(-\Delta_{\alpha,\{0,\mathbf{x}_0\}}(\mathbf{k}) \) as unperturbed eigenvalues with multiplicity decreased by 0, 1 or 2.
\begin{prop}\label{thm:2deltaunpert}
Suppose \(\lambda'=\lambda_{\mathbf{m}_1} = \cdots =\lambda_{\mathbf{m}_\mu} \) is an eigenvalue of the unperturbed operator \(-\Delta(\mathbf{k}) \) of multiplicity \(\mu\) where \(\lambda_\mathbf{m}=|\xi_\mathbf{m}+\mathbf{k}|^2 \). Then 
\begin{multline}\label{2deltamult}
\mult (\lambda',-\Delta_{\alpha,\{\mathbf{0},\mathbf{x}_0\}}(\mathbf{k})) =\\
\begin{cases}
\mu -2 &\mbox{if } \mu \ne \left| \sum_{j=1}^\mu e^{i\xi_{\mathbf{m}_j} \cdot \mathbf{x}_0}\right| \text{ (Case 1)}\\\\
\mu-1 &\mbox{if } \mu = \left| \sum_{j=1}^\mu e^{i\xi_{\mathbf{m}_j} \cdot \mathbf{x}_0}\right| \\&\text{and } \alpha \ne \lim_{\lambda\nearrow \lambda'} \left(g_\lambda(\mathbf{0},\mathbf{k})-\left|g_\lambda(\mathbf{x}_0,\mathbf{k})\right|\right) \text{ (Case 2)}\\\\
\mu &\mbox{if } \mu = \left| \sum_{j=1}^\mu e^{i\xi_{\mathbf{m}_j} \cdot \mathbf{x}_0}\right| \\&\text{and } \alpha = \lim_{\lambda\nearrow \lambda'} \left(g_\lambda(\mathbf{0},\mathbf{k})-\left|g_\lambda(\mathbf{x}_0,\mathbf{k})\right|\right) \text{ (Case 3)}
\end{cases}
\end{multline}
with the corresponding eigenspaces in the same order
\begin{equation}\label{es1}
\left\{ f: \mathbf{x}\mapsto \sum_{j=1}^\mu c_j e^{i(\xi_{\mathbf{m}_j}+\mathbf{k})\cdot \mathbf{x}}~\middle|~ f(\mathbf{0})=f(\mathbf{x}_0)=0,~ c_1,\cdots, c_\mu\in\mathbb{C} \right\} \end{equation}\begin{equation} \label{es2}
\left\{ f:\mathbf{x}\mapsto \sum_{j=1}^\mu c_j e^{i(\xi_{\mathbf{m}_j}+\mathbf{k})\cdot \mathbf{x}}~\middle|~ f(\mathbf{0}) =0,~ c_1,\cdots, c_\mu\in\mathbb{C}\right\} \end{equation}\begin{equation}\label{es3}
\left\{ f:\mathbf{x}\mapsto \sum_{j=1}^\mu c_j e^{i(\xi_{\mathbf{m}_j}+\mathbf{k})\cdot \mathbf{x}}~\middle|~ f(\mathbf{0}) =0,~ c_1,\cdots, c_\mu\in\mathbb{C}\right\} \oplus \mathrm{span}\left\{\tilde g_{2,\lambda'_-}(\bullet,\mathbf{k})\right\}
\end{equation} with \(\tilde g_{2,\lambda'_-}(\bullet,\mathbf{k})=\lim_{\lambda\nearrow\lambda'}\tilde g_{2,\lambda}(\bullet,\mathbf{k})\) as in \eqref{pertes}.
Note that \[\mult (\lambda',-\Delta_{\alpha,\{0,\mathbf{x}_0\}}(\mathbf{k}))=0\] means \(\lambda' \notin \sigma(-\Delta_{\alpha,\{0,\mathbf{x}_0\}}(\mathbf{k}))\).
\end{prop}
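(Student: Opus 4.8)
The plan is to describe the \(\lambda'\)-eigenspace of \(-\Delta_{\alpha,\{\mathbf{0},\mathbf{x}_0\}}(\mathbf{k})\) explicitly, by splitting an eigenfunction into a smooth part in \(E:=\ker(-\Delta(\mathbf{k})-\lambda')=\mathrm{span}\{e^{i(\xi_{\mathbf{m}_j}+\mathbf{k})\cdot\mathbf{x}}:j=1,\dots,\mu\}\) and a part carrying the Green's-function singularities at the two scatterers. Using the domain description of Appendix~\ref{sec:introtoptsctr}, \(\psi\) is an eigenfunction at \(\lambda'\) if and only if \((-\Delta(\mathbf{k})-\lambda')\psi=q_1\delta_{\mathbf{0}}+q_2\delta_{\mathbf{x}_0}\) for some charges \((q_1,q_2)\) and \(\psi\) meets the point-scatterer boundary conditions of strength \(\alpha\) at \(\mathbf{0}\) and \(\mathbf{x}_0\). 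Comparing Fourier coefficients of both sides against \(e^{i(\xi_\mathbf{m}+\mathbf{k})\cdot\mathbf{x}}\), the resonant indices \(\mathbf{m}=\mathbf{m}_j\) (where the left side contributes \(0\)) force the compatibility condition, call it \((\star)\):
\[ q_1+q_2\,e^{-i(\xi_{\mathbf{m}_j}+\mathbf{k})\cdot\mathbf{x}_0}=0,\qquad j=1,\dots,\mu, \]
the non-resonant coefficients are then determined, the resonant ones are free, and \(\psi=\psi_0+q_1\tilde G_{\lambda'}(\bullet,\mathbf{k})+q_2\tilde G_{\lambda'}(\bullet-\mathbf{x}_0,\mathbf{k})\) with \(\psi_0\in E\) arbitrary and \(\tilde G_{\lambda'}\) the Green's function of \(-\Delta(\mathbf{k})-\lambda'\) with the resonant modes deleted.

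Next I would apply rank--nullity to the linear charge map \(\psi\mapsto(q_1,q_2)\) on this eigenspace: \(\mult(\lambda',-\Delta_{\alpha,\{\mathbf{0},\mathbf{x}_0\}}(\mathbf{k}))=\dim(\text{zero-charge subspace})+\dim(\text{image of the charge map})\). For \(q=(0,0)\) the boundary conditions force \(\psi_0\in E\) to vanish at \(\mathbf{0}\) and \(\mathbf{x}_0\), and such \(\psi_0\) are automatically in the domain (they lie in the common core of all extensions); so the zero-charge subspace is exactly \(\{\psi_0\in E:\psi_0(\mathbf{0})=\psi_0(\mathbf{x}_0)=0\}\). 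Writing \(\psi_0=\sum_j c_j e^{i(\xi_{\mathbf{m}_j}+\mathbf{k})\cdot\mathbf{x}}\), the evaluation map \(\psi_0\mapsto(\psi_0(\mathbf{0}),\psi_0(\mathbf{x}_0))\) has, up to a constant, matrix with rows \((1,\dots,1)\) and \((e^{i\xi_{\mathbf{m}_1}\cdot\mathbf{x}_0},\dots,e^{i\xi_{\mathbf{m}_\mu}\cdot\mathbf{x}_0})\), vectors of unit complex numbers; by the equality case of the triangle inequality these rows are independent — rank \(2\) — exactly when \(\mu\neq\bigl|\sum_j e^{i\xi_{\mathbf{m}_j}\cdot\mathbf{x}_0}\bigr|\), and otherwise the rank is \(1\) and the two conditions \(\psi_0(\mathbf{0})=0\) and \(\psi_0(\mathbf{x}_0)=0\) coincide. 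This yields the zero-charge dimensions \(\mu-2\) in Case 1 and \(\mu-1\) in Cases 2 and 3, together with the eigenspaces \eqref{es1} and \eqref{es2}. For the image, in Case 1 the numbers \(e^{-i(\xi_{\mathbf{m}_j}+\mathbf{k})\cdot\mathbf{x}_0}\) are not all equal, so \((\star)\) forces \((q_1,q_2)=(0,0)\) and \(\mult=\mu-2\); in Cases 2 and 3 they coincide, \((\star)\) cuts out a one-dimensional charge line, and the remaining task is to decide whether a nonzero charge on that line extends to an eigenfunction.

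To finish I would impose the two boundary conditions on \(\psi=\psi_0+q_1\tilde G_{\lambda'}(\bullet)+q_2\tilde G_{\lambda'}(\bullet-\mathbf{x}_0)\), each of the shape (regular value of \(\psi\) at \(\mathbf{x}_\ell\)) \(=\alpha q_\ell\). Because the resonant modes now share a common value \(\omega_0\) at \(\mathbf{x}_0\), both conditions involve \(\psi_0\) only through \(\psi_0(\mathbf{0})\); eliminating that single parameter collapses the pair to one scalar equation \(t\bigl[\alpha-G_{00}+\tfrac12(\omega_0\gamma_{01}+\overline{\omega_0}\gamma_{10})\bigr]=0\) in the charge amplitude \(t\), where \(G_{00}\) and \(\gamma_{10}=\overline{\gamma_{01}}\) are the regularized Green's-function constants at coincident and at shifted arguments. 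Expanding \(g_\lambda(\mathbf{0},\mathbf{k})\) and \(g_\lambda(\mathbf{x}_0,\mathbf{k})\) about the pole at \(\lambda'\) — the residues having equal modulus \(\mu/\mathrm{area}(\Gamma)\) precisely because \(\mu=\bigl|\sum_j e^{i\xi_{\mathbf{m}_j}\cdot\mathbf{x}_0}\bigr|\) — identifies \(G_{00}-\tfrac12(\omega_0\gamma_{01}+\overline{\omega_0}\gamma_{10})\) with \(\lim_{\lambda\nearrow\lambda'}\bigl(g_\lambda(\mathbf{0},\mathbf{k})-|g_\lambda(\mathbf{x}_0,\mathbf{k})|\bigr)\); the absolute value and the one-sided limit appear exactly here, since \(g_\lambda(\mathbf{0},\mathbf{k})\) and \(|g_\lambda(\mathbf{x}_0,\mathbf{k})|\) each diverge but their difference converges, and only as \(\lambda\nearrow\lambda'\). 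Thus in Case 2 the bracket is nonzero, \(t=0\), and \(\mult=\mu-1\); in Case 3 the equation holds identically, \(t\) is free, and the new eigenfunction is \(\tilde g_{2,\lambda'_-}=\lim_{\lambda\nearrow\lambda'}\tilde g_{2,\lambda}(\bullet,\mathbf{k})\) from \eqref{infdeltadiag} — whose resonant pole cancels precisely when \(\mu=\bigl|\sum_j e^{i\xi_{\mathbf{m}_j}\cdot\mathbf{x}_0}\bigr|\) and whose limit supplies the correct smooth part — so \(\mult=\mu\) with eigenspace \eqref{es3}. The step I expect to be the main obstacle is this last identification: combining the two boundary conditions, seeing that only one independent scalar constraint survives, and carrying the pole expansion far enough to recognize the threshold as the stated one-sided limit; the rest is bookkeeping with the two-point evaluation map and the triangle inequality.
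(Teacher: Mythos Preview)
Your argument is correct and takes a genuinely different route from the paper. The paper works entirely with the resolvent formula \eqref{infdelta}: it writes the spectral projection as $P=\lim_{\lambda\to\lambda'}(\lambda-\lambda')\bigl(-\Delta_{\alpha,\{\mathbf{0},\mathbf{x}_0\}}(\mathbf{k})-\lambda\bigr)^{-1}$, expands $\Gamma_{\alpha,\{\mathbf{0},\mathbf{x}_0\}}(\lambda,\mathbf{k})^{-1}=\mathrm{adj}/\det$ in a Laurent series about $\lambda'$, and reads off the range of $P$ in each case from the leading coefficients $C_{-2},C_{-1},C_0$ of $\det\Gamma$. Your approach instead characterizes eigenfunctions directly via the point-scatterer boundary conditions and a charge map $\psi\mapsto(q_1,q_2)$: the rank--nullity split into zero-charge smooth eigenfunctions and a possible one-dimensional charged piece makes the trichotomy structurally transparent, and the equality case of the triangle inequality for unit complex numbers is exactly what decides whether the evaluation map on $E$ has rank $2$ or $1$. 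The paper's route is more mechanical and delivers the eigenprojection as an operator; yours is lighter (no adjugate Laurent expansion) and explains \emph{why} the threshold is the one-sided limit $\lim_{\lambda\nearrow\lambda'}\bigl(g_\lambda(\mathbf{0},\mathbf{k})-|g_\lambda(\mathbf{x}_0,\mathbf{k})|\bigr)$, namely as the constant left over when two poles of equal modulus cancel. One point to tighten in your write-up: when you eliminate $\psi_0(\mathbf{0})$ from the two boundary equations, record that the \emph{other} linear combination fixes $\psi_0(\mathbf{0})$ in terms of the charge amplitude, so that the extra eigenfunction in Case~3 carries a specific resonant component---this is what makes your limit $\tilde g_{2,\lambda'_-}$ agree with the paper's, since the subleading term of $g_\lambda(\mathbf{x}_0,\mathbf{k})/|g_\lambda(\mathbf{x}_0,\mathbf{k})|$ contributes an $E$-piece in that limit.
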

\begin{proof} First, consider the Laurent expansion of \(g_\lambda \) as \(\lambda \rightarrow \lambda'\). 
\[\mathrm{area}(\Gamma) g_\lambda (\mathbf{x},\mathbf{k})=-\sum_{j=1}^\mu e^{i\xi_{\mathbf{m}_j} \cdot \mathbf{x}_0}(\lambda-\lambda')^{-1}+R_{\lambda'}(\mathbf{x},\mathbf{k})+O(|\lambda-\lambda'|),\quad \lambda\rightarrow\lambda'\] 
with the remainder term \(R_{\lambda}(\mathbf{x},\mathbf{k})=O(1)\). Also, suppose \(f\in L_\mathbf{k}^2(\Gamma)\) has the expansion
\[f(\mathbf{x})=\sum_{\mathbf{m}\in\mathbb{Z}^2}f_\mathbf{m} e^{i(\xi_\mathbf{m}+\mathbf{k})\cdot \mathbf{x}},\quad f_\mathbf{m}=\frac{1}{\mathrm{area}(\mathcal{B})}\int_\mathcal{B} f(\mathbf{x})e^{-i(\xi_\mathbf{m}+\mathbf{k})\cdot\mathbf{x}}d\mathbf{x}\]
Then we can rewrite the integral kernel of \eqref{infdelta} for \(-\Delta_{\alpha,\{\mathbf{0},\mathbf{x}_0\}}(\mathbf{k})\) using the matrix and vector notation.
\begin{multline}\label{kernel}
\left(-\Delta_{\alpha,\{\mathbf{0},\mathbf{x}_0\}}(\mathbf{k})-\lambda \right)^{-1}(\mathbf{x},\mathbf{x}') =\\ g_\lambda(\mathbf{x}-\mathbf{x}') + \frac{1}{\mathrm{area}(\mathcal{B})}\left[\overline{\vec{g}_\lambda} (\mathbf{x}',\mathbf{k})\right]^T \left[\Gamma_{\alpha,\{\mathbf{0},\mathbf{x}_0\}} (\lambda,\mathbf{k})\right]^{-1} \left[ \vec{g}_\lambda(\mathbf{x},\mathbf{k})\right]
\end{multline}
where each term has the Laurent expansion
\[\vec{g}_\lambda (\mathbf{x},\mathbf{k})=\begin{bmatrix}-\sum_{j=1}^\mu e^{i(\xi_{\mathbf{m}_j}+\mathbf{k})\cdot\mathbf{x}} \\ -\sum_{j=1}^\mu e^{i(\xi_{\mathbf{m}_j}+\mathbf{k})\cdot(\mathbf{x}-\mathbf{x}_0)}\end{bmatrix}(\lambda-\lambda')^{-1}+
\begin{bmatrix}
R_{\lambda'}(\mathbf{x},\mathbf{k}) \\
R_{\lambda'}(\mathbf{x}-\mathbf{x}_0,\mathbf{k})
\end{bmatrix} + O(|\lambda-\lambda'|)\]

\begin{multline*}
\mathrm{adj}~\Gamma_{\alpha,\{\mathbf{0},\mathbf{x}_0\}} (\lambda,\mathbf{k})= 
\begin{bmatrix}
\mu & -\sum_{j=1}^\mu e^{i(\xi_{\mathbf{m}_j}+\mathbf{k})\cdot\mathbf{x}_0}\\
-\sum_{j=1}^\mu e^{-i(\xi_{\mathbf{m}_j}+\mathbf{k})\cdot\mathbf{x}_0}& \mu
\end{bmatrix} (\lambda-\lambda')^{-1}\\+
\begin{bmatrix}
\alpha-R_{\lambda'}(\mathbf{0},\mathbf{k}) & R_{\lambda'}(\mathbf{x}_0,\mathbf{k})\\
R_{\lambda'}(-\mathbf{x}_0,\mathbf{k}) & \alpha-R_{\lambda'}(\mathbf{0},\mathbf{k})
\end{bmatrix} +O(|\lambda-\lambda'|)
\end{multline*}

\[\begin{aligned}
\det\Gamma_{\alpha,\{\mathbf{0},\mathbf{x}_0\}} (\lambda,\mathbf{k})=&
\left(\mu^2-\left| \sum_{j=1}^\mu e^{i\xi_{\mathbf{m}_j} \cdot \mathbf{x}_0}\right|^2\right)(\lambda-\lambda')^{-2}
\\&+ 2\left(\mu(\alpha-R_{\lambda'}(\mathbf{0}))+\Re\left( \sum_{j=1}^\mu e^{-i\xi_{\mathbf{m}_j} \cdot \mathbf{x}_0} R_{\lambda'}(\mathbf{x}_0)\right)\right)(\lambda-\lambda')^{-1}\\&+O(1)
\\=& C_{-2}(\lambda-\lambda')^{-2}+C_{-1}(\lambda-\lambda')^{-1}+C_0+O(|\lambda-\lambda'|)
\end{aligned}\]

\emph{Case 1:} Suppose \(\mu \ne \left| \sum_{j=1}^\mu e^{i\xi_{\mathbf{m}_j} \cdot \mathbf{x}_0}\right|\). Consider an operator \(P:L_\mathbf{k}^2 (\Gamma)\rightarrow L_\mathbf{k}^2(\Gamma)\) as the norm limit \[P=\lim_{\lambda\rightarrow\lambda'}(\lambda-\lambda')\left(-\Delta_{\alpha,\{\mathbf{0},\mathbf{x}_0\}}(\mathbf{k})-\lambda \right)^{-1}.\]
Then \(P\) is a projection onto the eigenspace corresponding to the eigenvalue \(\lambda=\lambda'\).
Since \(C_{-2}\ne 0\), for all \(f\in L_\mathbf{k}^2(\Gamma)\),
\begin{multline*}
Pf(\mathbf{x}) =
\sum_{j=1}^\mu e^{i(\xi_{\mathbf{m}_j}+\mathbf{k})\cdot \mathbf{x}} f_{\mathbf{m}_j} -\frac{1}{C_{-2}}
\begin{bmatrix}
\sum f_{\mathbf{m}_j} \\ \sum f_{\mathbf{m}_j}e^{i(\xi_{\mathbf{m}_j}+\mathbf{k})\cdot\mathbf{x}_0}
\end{bmatrix}^T \\ \begin{bmatrix}
\mu & -\sum e^{i(\xi_{\mathbf{m}_j}+\mathbf{k})\cdot\mathbf{x}_0} \\ -\sum e^{-i(\xi_{\mathbf{m}_j}+\mathbf{k})\cdot\mathbf{x}_0} & \mu
\end{bmatrix}\begin{bmatrix}
\sum e^{i(\xi_{\mathbf{m}_j}+\mathbf{k})\cdot \mathbf{x}} \\ \sum e^{i(\xi_{\mathbf{m}_j}+\mathbf{k})\cdot (\mathbf{x}-\mathbf{x}_0)}
\end{bmatrix}\end{multline*}
Hence, we can show that \(P\) is the projection of \(f\) onto the eigenspace \eqref{es1} since
\[Pf(\mathbf{0})=0,\quad Pf(\mathbf{x}_0)=0\]

\emph{Case 2:} Suppose \(\mu = \left| \sum_{j=1}^\mu e^{i\xi_{\mathbf{m}_j} \cdot \mathbf{x}_0}\right| \) and \(\alpha \ne \lim_{\lambda\nearrow \lambda'} \left(g_\lambda(\mathbf{0},\mathbf{k})-\left|g_\lambda(\mathbf{x}_0,\mathbf{k})\right|\right)\). Since \(C_{-2}= 0\) and \(C_{-1}\ne 0\),
\[\begin{aligned}
Pf(\mathbf{x}) =&
\sum_{j=1}^\mu e^{i(\xi_{\mathbf{m}_j}+\mathbf{k})\cdot \mathbf{x}} f_{\mathbf{m}_j} -\frac{1}{C_{-1}}
\left( \sum f_{\mathbf{m}_j}\right)\begin{bmatrix}
1 \\ e^{i(\xi_{\mathbf{m}_1}+\mathbf{k})\cdot\mathbf{x}_0}
\end{bmatrix}^T \\& \begin{bmatrix}
\alpha-R_{\lambda'}(\mathbf{0},\mathbf{k}) & R_{\lambda'}(\mathbf{x}_0,\mathbf{k}) \\ R_{\lambda'}(-\mathbf{x}_0,\mathbf{k}) & \alpha-R_{\lambda'}(\mathbf{0},\mathbf{k})
\end{bmatrix}\begin{bmatrix}
1 \\ e^{-i(\xi_{\mathbf{m}_1}+\mathbf{k})\cdot \mathbf{x}_0}
\end{bmatrix}\left(\sum e^{i(\xi_{\mathbf{m}_j}+\mathbf{k})\cdot \mathbf{x}}\right)
\\=& \sum_{j=1}^\mu e^{i(\xi_{\mathbf{m}_j}+\mathbf{k})\cdot \mathbf{x}} f_{\mathbf{m}_j} - \left(\sum_{j=1}^\mu f_{\mathbf{m}_j}\right) \left( \sum_{j=1}^\mu e^{i(\xi_{\mathbf{m}_j}+\mathbf{k})\cdot\mathbf{x}}\right)
\end{aligned}\]
Hence, \(P\) is the projection onto \eqref{es2}.

\emph{Case 3:} Suppose \(\mu = \left| \sum_{j=1}^\mu e^{i\xi_{\mathbf{m}_j} \cdot \mathbf{x}_0}\right| \) and \(\alpha = \lim_{\lambda\nearrow \lambda'} \left(g_\lambda(\mathbf{0},\mathbf{k})-\left|g_\lambda(\mathbf{x}_0,\mathbf{k})\right|\right)\). Since \(C_{-2}=0, ~C_{-1}= 0\) and \(C_0 \ne 0\),
\[\begin{aligned}
Pf(\mathbf{x}) =&
\sum_{j=1}^\mu e^{i(\xi_{\mathbf{m}_j}+\mathbf{k})\cdot \mathbf{x}} f_{\mathbf{m}_j} -\frac{1}{C_0}
\begin{bmatrix}
\left(\overline{R_{\lambda'}}(\bullet,\mathbf{k}),f\right) \\
\left(\overline{R_{\lambda'}}(\bullet-\mathbf{x}_0,\mathbf{k}),f \right)
\end{bmatrix}^T \\& \begin{bmatrix}
\mu & -\sum_{j=1}^\mu e^{i(\xi_{\mathbf{m}_j}+\mathbf{k})\cdot\mathbf{x}_0}\\
-\sum_{j=1}^\mu e^{-i(\xi_{\mathbf{m}_j}+\mathbf{k})\cdot\mathbf{x}_0}& \mu
\end{bmatrix}\begin{bmatrix}
R_{\lambda'}(\mathbf{x},\mathbf{k}) \\
R_{\lambda'}(\mathbf{x}-\mathbf{x}_0,\mathbf{k})
\end{bmatrix}
\\&-\frac{1}{C_0}\left( \sum f_{\mathbf{m}_j}\right)\begin{bmatrix}
1 \\ e^{i(\xi_{\mathbf{m}_1}+\mathbf{k})\cdot\mathbf{x}_0}
\end{bmatrix}^T \begin{bmatrix}
-R_{\lambda'}^2(\mathbf{0}) & R_{\lambda'}^2(\mathbf{x}_0)\\ -R_{\lambda'}^2(\mathbf{0}) & R_{\lambda'}^2(\mathbf{x}_0)
\end{bmatrix} \\ & \begin{bmatrix}
1 \\ e^{-i(\xi_{\mathbf{m}_1}+\mathbf{k})\cdot \mathbf{x}_0}
\end{bmatrix}\left(\sum e^{i(\xi_{\mathbf{m}_j}+\mathbf{k})\cdot \mathbf{x}}\right)
\\=& \sum_{j=1}^\mu e^{i(\xi_{\mathbf{m}_j}+\mathbf{k})\cdot \mathbf{x}} f_{\mathbf{m}_j} - \left(\sum_{j=1}^\mu f_{\mathbf{m}_j}\right) \left( \sum_{j=1}^\mu e^{i(\xi_{\mathbf{m}_j}+\mathbf{k})\cdot\mathbf{x}}\right) \\&+
C \left(\tilde g_{2,\lambda'_-}(\bullet,\mathbf{k}),f\right) \tilde g_{2,\lambda'_-}(\mathbf{x},\mathbf{k}),\quad \text{for some }C\ne 0
\end{aligned}\]
Hence, \(P\) is the projection onto \eqref{es3}.
\end{proof}

Although the conditions for \emph{Case 2} and \emph{Case 3} seem restrictive in some sense, we can observe various cases satisfying those conditions. For example, suppose \(\mathbf{k}=(k_x,0) \in \mathcal{B}\), \(k_x>0\) and choose \(\lambda'\) as \[\lambda'=|\xi_{(0,-1)}+\mathbf{k}|=|\xi_{(-1,0)}+\mathbf{k}|\] so that \(\lambda'\) is an eigenvalue of \(-\Delta(\mathbf{k})\) of multiplicity \(\mu=2\). Then we have
\[\left| e^{i\xi_{(0,-1)}\cdot \mathbf{x}_0} + e^{i\xi_{(-1,0)}\cdot \mathbf{x}_0} \right| = \left| 2 e^{-i\frac{4\pi}{3}} \right|=2,\] which falls into either \emph{Case 2} or \emph{Case 3} of \eqref{2deltamult} depending on the value of \(\alpha\). Hence, 
\[\mult \left(\lambda',-\Delta_{\alpha,\{\mathbf{0},\mathbf{x}_0\}}(\mathbf{k})\right) = \begin{cases} 1 &\mbox{ if } \alpha \ne \lim_{\lambda\nearrow \lambda'} \left(g_\lambda(\mathbf{0},\mathbf{k})-\left|g_\lambda(\mathbf{x}_0,\mathbf{k})\right|\right) \\ 2 &\mbox{ if } \alpha = \lim_{\lambda\nearrow \lambda'} \left(g_\lambda(\mathbf{0},\mathbf{k})-\left|g_\lambda(\mathbf{x}_0,\mathbf{k})\right|\right) \end{cases}.\]

On the other hand, suppose \(\mathbf{k}=(0,k_y) \in \mathcal{B}\), \(k_y>0\) and choose \(\lambda'\) as \[\lambda'=|\xi_{(0,1)}+\mathbf{k}|=|\xi_{(-1,0)}+\mathbf{k}|.\] so that \(\mu=2\).
Then we observe 
\[\left| e^{i\xi_{(0,1)}\cdot \mathbf{x}_0} + e^{i\xi_{(-1,0)}\cdot \mathbf{x}_0} \right| = \left| e^{i\frac{4\pi}{3}}+e^{-i\frac{4\pi}{3}} \right| \ne 2,\]
which corresponds to \emph{Case 1} of \eqref{2deltamult}. Therefore, \(\mult \left(\lambda',-\Delta_{\alpha,\{\mathbf{0},\mathbf{x}_0\}}(\mathbf{k})\right) = 0\), which implies
\[\lambda'\notin \sigma \left(-\Delta_{\alpha,\{\mathbf{0},\mathbf{x}_0\}}(\mathbf{k})\right),\quad \alpha\in\mathbb{R}.\]

\begin{rmk}
If \(\mult \left(\lambda,-\Delta(\mathbf{k})\right)=1\), then \(\left|\sum_{j=1}^\mu e^{i\xi_{\mathbf{m}_j} \cdot \mathbf{x}_0}\right| =\left|e^{i\xi_{\mathbf{m}_1} \cdot \mathbf{x}_0}\right|=1 \) so this falls into either Case 2 or Case 3.
\[\begin{dcases}
\lambda \notin \sigma(-\Delta_{\alpha,\{0,\mathbf{x}_0\}}(\mathbf{k})) &\mbox{ if } \alpha \ne \lim_{\lambda\nearrow \lambda'} \left(g_\lambda(\mathbf{0},\mathbf{k})-\left|g_\lambda(\mathbf{x}_0,\mathbf{k})\right|\right)\\
\lambda \in \sigma(-\Delta_{\alpha,\{0,\mathbf{x}_0\}}(\mathbf{k})) &\mbox{ if } \alpha = \lim_{\lambda\nearrow \lambda'} \left(g_\lambda(\mathbf{0},\mathbf{k})-\left|g_\lambda(\mathbf{x}_0,\mathbf{k})\right|\right)
\end{dcases}\]
On the other hand, if \(\mult \left(\lambda,-\Delta(\mathbf{k})\right) \ge 3\), then \[\lambda \in \sigma(-\Delta_{\alpha,\{0,\mathbf{x}_0\}}(\mathbf{k})).\]
\end{rmk}

Also, we can easily show that the dispersion bands \(\lambda_1(\mathbf{k},\alpha)\le \lambda_2(\mathbf{k},\alpha)\le \cdots \) given as functions of \(\mathbf{k}\) by Theorem~\ref{thm:2deltapert} and Theorem~\ref{thm:2deltaunpert} are continuous and \(\mathcal{B}\)-periodic so they are literally "surfaces" over the Brillouin Zone.
Now consider the spectrum as a function of \(\alpha\). Note that \(\alpha=\infty\) corresponds to the free Hamiltonian \(-\Delta(\mathbf{k})\).

\begin{prop}\label{globalasymp}
Let \(\lambda_1(\mathbf{k},\alpha) \le \lambda_2(\mathbf{k},\alpha)\le \cdots \) and \(\lambda_1^\infty (\mathbf{k})\le \lambda_2^\infty (\mathbf{k})\le \cdots\) be the eigenvalues of \(-\Delta_{\alpha,\{0,\mathbf{x}_0\}}(\mathbf{k})\) and \(-\Delta(\mathbf{k})\), respectively. Then \(\alpha \mapsto \lambda_j(\mathbf{k},\alpha)\) is increasing for all \(j\) and 
\[\lambda_j(\mathbf{k},\alpha)\ge \lambda_1^\infty (\mathbf{k}) \ge 0, \quad j\ge 3,\quad \forall \alpha\in (-\infty,\infty]\]
\[\lambda_j(\mathbf{k},\alpha) \rightarrow -\infty \text{ as }\alpha \rightarrow -\infty \quad j=1,2 \]
In addition, 
\[\lim_{\alpha \rightarrow \infty} \lambda_{j}(\mathbf{k},\alpha)= \lim_{\alpha \rightarrow -\infty} \lambda_{j+2}(\mathbf{k},\alpha) =\lambda_j^\infty (\mathbf{k})\quad \text{ for all } j \ge 1\]
\end{prop}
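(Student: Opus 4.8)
The plan is to reduce the whole statement to the behaviour of the two scalar functions
\[\gamma_{\pm}(\lambda,\mathbf{k}):=g_\lambda(\mathbf{0},\mathbf{k})\pm\bigl|g_\lambda(\mathbf{x}_0,\mathbf{k})\bigr|.\]
By Proposition~\ref{thm:2deltapert} and the diagonalization $\tilde\Gamma_{\alpha,\{\mathbf{0},\mathbf{x}_0\}}=\diag(\tilde\gamma_1,\tilde\gamma_2)$ in its proof, a $\lambda'\notin\sigma(-\Delta(\mathbf{k}))$ is a perturbed eigenvalue of $-\Delta_{\alpha,\{\mathbf{0},\mathbf{x}_0\}}(\mathbf{k})$ exactly when $\gamma_+(\lambda',\mathbf{k})=\alpha$ or $\gamma_-(\lambda',\mathbf{k})=\alpha$, and the remaining eigenvalues are the reduced-multiplicity unperturbed levels of Proposition~\ref{thm:2deltaunpert}. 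Write $\tau_1<\tau_2<\cdots$ for the distinct eigenvalues of $-\Delta(\mathbf{k})$ and put $I_0=(-\infty,\tau_1)$, $I_m=(\tau_m,\tau_{m+1})$. Since $\Gamma_{\alpha,\{\mathbf{0},\mathbf{x}_0\}}(\lambda,\mathbf{k})=\alpha I-M(\lambda,\mathbf{k})$ with $M$ Hermitian and $\partial_\lambda M$ of one sign (this is the inequality $\partial_\lambda\tilde\gamma_j<0$ from the proof of Proposition~\ref{thm:2deltapert}), each of $\gamma_\pm(\cdot,\mathbf{k})$ is strictly increasing on every $I_m$, so $\gamma_\pm(\cdot,\mathbf{k})-\alpha$ has at most one zero there; in particular at most two eigenvalues of $-\Delta_{\alpha,\{\mathbf{0},\mathbf{x}_0\}}(\mathbf{k})$ lie strictly below $\tau_1$. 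Since $\tau_1=\lambda_1^\infty(\mathbf{k})=\min_{\mathbf{m}}|\xi_\mathbf{m}+\mathbf{k}|^2\ge0$ and $-\Delta_{\infty,\{\mathbf{0},\mathbf{x}_0\}}(\mathbf{k})=-\Delta(\mathbf{k})$, this already gives $\lambda_j(\mathbf{k},\alpha)\ge\lambda_1^\infty(\mathbf{k})\ge0$ for $j\ge3$ and all $\alpha\in(-\infty,\infty]$. I will also record the endpoint behaviour, from a Laurent expansion at $\tau_m$ as in the proof of Proposition~\ref{thm:2deltaunpert}: as $\lambda\nearrow\tau_m$ both $\gamma_\pm\to+\infty$ and as $\lambda\searrow\tau_m$ both $\gamma_\pm\to-\infty$, except that $\gamma_-$ (resp.\ $\gamma_+$) stays finite from the left (resp.\ right) precisely when $\bigl|\sum_j e^{i\xi_{\mathbf{m}_j}\cdot\mathbf{x}_0}\bigr|$ equals the multiplicity, i.e.\ in the Case~2/Case~3 regime; and on $I_0$ one has $\gamma_\pm(\lambda,\mathbf{k})\to-\infty$ as $\lambda\to-\infty$, since $g_\lambda(\mathbf{0},\mathbf{k})\to-\infty$ (standard estimate of the renormalized lattice sum in \eqref{1deltamod}) while $g_\lambda(\mathbf{x}_0,\mathbf{k})\to0$.

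For the monotonicity in $\alpha$: on each $I_m$ the solution $\lambda$ of $\gamma_\pm(\lambda,\mathbf{k})=\alpha$, where it exists, is a strictly increasing $C^1$ function of $\alpha$ by the implicit function theorem, the unperturbed levels are fixed in $\alpha$, and their multiplicities change continuously as a perturbed branch runs into or out of a level; carrying this bookkeeping through the counting function $t\mapsto\#\{j:\lambda_j(\mathbf{k},\alpha)<t\}$ --- equivalently, running the single-scatterer argument of Proposition~\ref{prop:floq1spec} (cf.\ Theorem~III.4.7 of \cite{solvable}) separately in the two channels $\tilde\gamma_1,\tilde\gamma_2$ --- shows that counting function is non-increasing in $\alpha$, so $\alpha\mapsto\lambda_j(\mathbf{k},\alpha)$ is non-decreasing for every $j$. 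As $\alpha\to-\infty$: $\gamma_+$ is a strictly increasing bijection $I_0\to\mathbb{R}$ and $\gamma_-$ is strictly increasing on $I_0$ with $\gamma_-\to-\infty$ as $\lambda\to-\infty$, so for $\alpha$ sufficiently negative each equation $\gamma_\pm(\lambda,\mathbf{k})=\alpha$ has a unique solution in $I_0$ and both tend to $-\infty$; by the first paragraph these are $\lambda_1(\mathbf{k},\alpha)\le\lambda_2(\mathbf{k},\alpha)$, so $\lambda_1,\lambda_2\to-\infty$. For the remaining eigenvalues I would argue that on any compact $K$ at distance $\ge\varepsilon$ from $\sigma(-\Delta(\mathbf{k}))$, $\Gamma_{\alpha,\{\mathbf{0},\mathbf{x}_0\}}(z,\mathbf{k})=\alpha I-M(z,\mathbf{k})$ is invertible for $|\alpha|$ large with inverse $\to0$ uniformly on $K$ (its only real singularities, the perturbed eigenvalues, lie near $-\infty$ for the two escaping ones and near $\sigma(-\Delta(\mathbf{k}))$ for all others by the endpoint behaviour, hence eventually leave $K$), so by \eqref{infdeltadiag} $(-\Delta_{\alpha,\{\mathbf{0},\mathbf{x}_0\}}(\mathbf{k})-z)^{-1}\to(-\Delta(\mathbf{k})-z)^{-1}$ uniformly on $K$; since $(\tau_1-1,\tau_1)$ contains no eigenvalue of $-\Delta_{\alpha,\{\mathbf{0},\mathbf{x}_0\}}(\mathbf{k})$ for $\alpha$ very negative and exactly two eigenvalues sit below $\tau_1$, this forces $\#\{j:\lambda_j(\mathbf{k},\alpha)<b\}\to\#\{j:\lambda_j^\infty(\mathbf{k})<b\}+2$ for every $b\notin\sigma(-\Delta(\mathbf{k}))$, i.e.\ $\lambda_{j+2}(\mathbf{k},\alpha)\to\lambda_j^\infty(\mathbf{k})$ as $\alpha\to-\infty$.

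The limit $\alpha\to\infty$ is the same argument but easier: now the inverse of $\Gamma_{\alpha,\{\mathbf{0},\mathbf{x}_0\}}(z,\mathbf{k})$ tends to $0$ on every $\varepsilon$-separated compact with no eigenvalue escaping (each perturbed branch runs into some $\tau_m$), so $-\Delta_{\alpha,\{\mathbf{0},\mathbf{x}_0\}}(\mathbf{k})\to-\Delta(\mathbf{k})=-\Delta_{\infty,\{\mathbf{0},\mathbf{x}_0\}}(\mathbf{k})$ in the norm-resolvent sense away from $\sigma(-\Delta(\mathbf{k}))$, and, both operators having compact resolvent, $\lambda_j(\mathbf{k},\alpha)\to\lambda_j^\infty(\mathbf{k})$ for all $j$. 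The step I expect to demand the most care is the $\alpha\to-\infty$ analysis: one must verify that exactly two eigenvalues run off to $-\infty$ while every other eigenvalue is pushed onto $\sigma(-\Delta(\mathbf{k}))$, which comes down to matching the endpoint Laurent behaviour of $\gamma_\pm$ with the Case~1/2/3 distinctions of Proposition~\ref{thm:2deltaunpert} --- so as to count exactly $\mult(\tau_m,-\Delta(\mathbf{k}))$ eigenvalues of $-\Delta_{\alpha,\{\mathbf{0},\mathbf{x}_0\}}(\mathbf{k})$ clustered near each $\tau_m$ for very negative $\alpha$ --- and to the fact that the non-uniformity of the resolvent convergence is confined to the two escaping poles.
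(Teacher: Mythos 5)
Your proposal is correct and follows essentially the same route as the paper: everything is reduced to the two scalar equations \(g_\lambda(\mathbf{0},\mathbf{k})\pm|g_\lambda(\mathbf{x}_0,\mathbf{k})|=\alpha\), their strict monotonicity in \(\lambda\), their divergence as \(\lambda\to-\infty\) (giving the two escaping bands and the bound \(\lambda_j\ge\lambda_1^\infty\) for \(j\ge3\)), and the Laurent/endpoint behaviour at each unperturbed level matched against the multiplicity count of Proposition~\ref{thm:2deltaunpert}. Your counting-function/resolvent-convergence packaging of the \(\alpha\to\pm\infty\) limits is just a more explicit version of the paper's local multiplicity-conservation bookkeeping, and your endpoint table for \(g_\lambda(\mathbf{0},\mathbf{k})\pm|g_\lambda(\mathbf{x}_0,\mathbf{k})|\) near an unperturbed level is in fact the accurate one.
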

\begin{proof}
First, \(\lambda\mapsto g_\lambda(\mathbf{0},\mathbf{k})\pm|g_\lambda(\mathbf{x}_0, \mathbf{k})|\) is continuous and increasing for both signs for \(\lambda \in \left(-\infty,\lambda_1^\infty(\mathbf{k}) \right)\). In addition, we observe that
\[\lim_{\lambda\rightarrow-\infty}g_\lambda(\mathbf{0},\mathbf{k})\pm|g_\lambda(\mathbf{x}_0, \mathbf{k})|=-\infty\]

Therefore, by \eqref{2delta+} and \eqref{2delta-}, there exist exactly two perturbed eigenvalues \(\lambda_1(\mathbf{k},\alpha)\) and \(\lambda_2(\mathbf{k},\alpha)\) in \((-\infty, \lambda_1^\infty(\mathbf{k}))\) whenever \[\alpha \le \lim_{\lambda\nearrow \lambda_1^\infty(\mathbf{k})}\left(g_\lambda(\mathbf{0},\mathbf{k})-|g_\lambda(\mathbf{x}_0, \mathbf{k})|\right).\] This also implies that 
\[\lim_{\alpha \rightarrow -\infty}\lambda_j(\mathbf{k},\alpha) \rightarrow -\infty, \quad j=1,2 \]
On the other hand, consider the other eigenvalues near an arbitrary \(\lambda'\in \sigma(-\Delta(\mathbf{k}))\). Since \(g_\lambda(\mathbf{0},\mathbf{k})\pm|g_\lambda(\mathbf{x}_0, \mathbf{k})|\) is finite for \(\lambda\notin\sigma(-\Delta(\mathbf{k}))\), we observe that 
\[\begin{dcases}\lim_{\alpha \rightarrow \infty} \lambda_{j}(\mathbf{k},\alpha) &\in \sigma(-\Delta(\mathbf{k}))\\ \lim_{\alpha \rightarrow -\infty} \lambda_{j+2}(\mathbf{k},\alpha) &\in \sigma(-\Delta(\mathbf{k})) \end{dcases} \quad \text{ for all } j \ge 1\]
Therefore, it suffices to show that those two limits agree with the same multiplicity at \(\lambda=\lambda'\) for all \(j \ge 1\). Note that as \(\lambda \rightarrow \lambda'\), \[\begin{aligned} g_\lambda(\mathbf{0},\mathbf{k})&=-\frac{\mu}{\mathrm{area}(\Gamma)} (\lambda-\lambda')^{-1} +O(1) \\
|g_\lambda(\mathbf{x}_0, \mathbf{k})|&=\frac{1}{\mathrm{area}(\Gamma)}\left| \sum_{j=1}^\mu e^{i\xi_{\mathbf{m}_j} \cdot \mathbf{x}_0}\right| |\lambda-\lambda'|^{-1} +O(1)\end{aligned}\]

Let \(\mu=\mult(\lambda',-\Delta(\mathbf{k}))\). We consider two cases as follows:

\emph{Case 1:} If \(\mu \ne \left| \sum_{j=1}^\mu e^{i\xi_{\mathbf{m}_j} \cdot \mathbf{x}_0}\right|\), then the unperturbed eigenvalue \(\lambda'\) satisfies 
\[\mult(\lambda',-\Delta_{\alpha,\{\mathbf{0},\mathbf{x}_0\}}(\mathbf{k}))=\mu-2\] according to Theorem~\ref{thm:2deltaunpert}. We also observe
\[\begin{dcases}
\lim_{\lambda\nearrow\lambda'}g_\lambda(\mathbf{0},\mathbf{k})+|g_\lambda(\mathbf{x}_0, \mathbf{k})|&=\infty\\
\lim_{\lambda\nearrow\lambda'}g_\lambda(\mathbf{0},\mathbf{k})-|g_\lambda(\mathbf{x}_0, \mathbf{k})|&=\infty\\
\lim_{\lambda\searrow\lambda'}g_\lambda(\mathbf{0},\mathbf{k})+|g_\lambda(\mathbf{x}_0, \mathbf{k})|&=\infty\\
\lim_{\lambda\searrow\lambda'}g_\lambda(\mathbf{0},\mathbf{k})-|g_\lambda(\mathbf{x}_0, \mathbf{k})|&=-\infty
\end{dcases}\]
which imply by Theorem~\ref{thm:2deltapert} that there exists exactly two perturbed eigenvalues converging to \(\lambda=\lambda'\) as \(\alpha \rightarrow \pm \infty\). So the multiplicity of \(\lambda=\lambda'\) is conserved for both cases when \(\alpha=+\infty\) and \(\alpha=-\infty\).

\emph{Case 2:} If \(\mu = \left| \sum_{j=1}^\mu e^{i\xi_{\mathbf{m}_j} \cdot \mathbf{x}_0}\right|\), then the unperturbed eigenvalue \(\lambda'\) satisfies
\[\mult(\lambda',-\Delta_{\alpha,\{\mathbf{0},\mathbf{x}_0\}}(\mathbf{k}))=\mu-1\] for \(|\alpha|\) sufficiently large according to Theorem~\ref{thm:2deltaunpert}. We also observe
\[\begin{dcases}
\lim_{\lambda\nearrow\lambda'}g_\lambda(\mathbf{0},\mathbf{k})+|g_\lambda(\mathbf{x}_0, \mathbf{k})|&=\infty\\
\lim_{\lambda\nearrow\lambda'}g_\lambda(\mathbf{0},\mathbf{k})-|g_\lambda(\mathbf{x}_0, \mathbf{k})|&=C\\
\lim_{\lambda\searrow\lambda'}g_\lambda(\mathbf{0},\mathbf{k})+|g_\lambda(\mathbf{x}_0, \mathbf{k})|&=C\\
\lim_{\lambda\searrow\lambda'}g_\lambda(\mathbf{0},\mathbf{k})-|g_\lambda(\mathbf{x}_0, \mathbf{k})|&=-\infty
\end{dcases},\quad C \text{ is finite}\]
which imply by Theorem~\ref{thm:2deltapert} that there exists exactly one perturbed eigenvalue converging to \(\lambda'\) as \(\alpha \rightarrow \pm \infty\). So the multiplicity of \(\lambda'\) is conserved for both cases when \(\alpha\rightarrow+\infty\) and \(\alpha\rightarrow-\infty\).

This concludes the proof.
\end{proof}



\subsubsection{Eigenvalues near Dirac points}
Now we investigate local behavior of dispersion bands near Dirac points, which are located at the vertices of the Brillouin Zone boundary. (See Figure~\ref{dual}.) Without loss of generality, we choose one Dirac point \[\mathbf{K}= \frac{2}{3} \mathbf{k}_1 +\frac{1}{3}\mathbf{k}_2\] out of six points and observe the conic dispersion bands generated by \(-\Delta_{\alpha,\{\mathbf{0},\mathbf{x}_0\}}(\mathbf{k})\) near \(\mathbf{k}=\mathbf{K}\).

\begin{prop}\label{diracdouble}
At Dirac point \(\mathbf{K}\), the perturbed eigenvalues of \(-\Delta_{\alpha,\{0,\mathbf{x}_0\}}(\mathbf{K})\) has multiplicity \(2\) and coincides with those of the triangular lattice operator \(-\Delta_{\alpha,\{0\}}(\mathbf{K})\), namely,
\[\sigma(-\Delta_{\alpha,\{0,\mathbf{x}_0\}}(\mathbf{K}))\setminus\sigma(-\Delta(\mathbf{K})) = \sigma(-\Delta_{\alpha,\{0\}}(\mathbf{K})\setminus\sigma(-\Delta(\mathbf{K}))\] and for all \(\lambda' \in \sigma(-\Delta_{\alpha,\{0,\mathbf{x}_0\}}(\mathbf{K}))\setminus\sigma(-\Delta(\mathbf{K}))\),
\[\mult (\lambda',-\Delta_{\alpha,\{0,\mathbf{x}_0\}}(\mathbf{K}))=2 \]

\end{prop}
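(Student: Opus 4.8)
The plan is to derive the whole proposition from the single identity
\[
g_\lambda(\mathbf{x}_0,\mathbf{K})=0\qquad\text{for all }\lambda\notin\sigma(-\Delta(\mathbf{K})),
\]
after which everything else is bookkeeping with facts already established. Granting this identity, both conditions \eqref{2delta-}--\eqref{2delta+} of Proposition~\ref{thm:2deltapert} collapse to $\alpha=g_{\lambda'}(\mathbf{0},\mathbf{K})$, which is exactly condition \eqref{1deltaglambda} for $\lambda'$ to be a perturbed eigenvalue of the triangular operator $-\Delta_{\alpha,\{\mathbf 0\}}(\mathbf{K})$; so the two sets of perturbed eigenvalues coincide, both being $\{\lambda'\notin\sigma(-\Delta(\mathbf{K})):\alpha=g_{\lambda'}(\mathbf{0},\mathbf{K})\}$. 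For the multiplicity, when $|g_{\lambda'}(\mathbf{x}_0,\mathbf{K})|=0$ the unitary $U_{\alpha,\{\mathbf{0},\mathbf{x}_0\}}$ in \eqref{gammaunitary} is the identity and $\tilde\Gamma_{\alpha,\{\mathbf{0},\mathbf{x}_0\}}(\lambda',\mathbf{K})$ of \eqref{gammadiag} equals $(\alpha-g_{\lambda'}(\mathbf{0},\mathbf{K}))I_2$, so $\Gamma_{\alpha,\{\mathbf{0},\mathbf{x}_0\}}(\lambda',\mathbf{K})$ is either invertible (hence $\lambda'\notin\sigma(-\Delta_{\alpha,\{\mathbf{0},\mathbf{x}_0\}}(\mathbf{K}))$) or identically zero; in the latter case \eqref{pertevmult} gives $\mult(\lambda',-\Delta_{\alpha,\{\mathbf{0},\mathbf{x}_0\}}(\mathbf{K}))=\dim\ker\Gamma_{\alpha,\{\mathbf{0},\mathbf{x}_0\}}(\lambda',\mathbf{K})=2$.

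To prove the vanishing I would use the $\tfrac{2\pi}{3}$-rotation symmetry at the Dirac point. For $\mathbf{x}\notin\Lambda$ the kernel $g_\lambda(\mathbf{x},\mathbf{k})$ is, up to a positive constant, the radial limit of the partial sums $\sum_{|\xi_\mathbf{m}+\mathbf{k}|\le r}e^{i(\xi_\mathbf{m}+\mathbf{k})\cdot\mathbf{x}}(|\xi_\mathbf{m}+\mathbf{k}|^2-\lambda)^{-1}$. First I would note that the orthogonal rotation $R$ of \eqref{R} maps the coset $\Lambda^*+\mathbf{K}$ onto itself: one has $R\mathbf{k}_2=\mathbf{k}_1$ and $R\mathbf{k}_1=-\mathbf{k}_1-\mathbf{k}_2$, hence $R\Lambda^*=\Lambda^*$, and $R\mathbf{K}-\mathbf{K}=-(\mathbf{k}_1+\mathbf{k}_2)\in\Lambda^*$. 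Since $R$ preserves lengths, each radial truncation $\{\mathbf{m}:|\xi_\mathbf{m}+\mathbf{K}|\le r\}$ is $R$-stable; and since $R$ fixes no nonzero vector and $-\mathbf{K}\notin\Lambda^*$, every $R$-orbit in $\Lambda^*+\mathbf{K}$ consists of three distinct points $\{w,Rw,R^2w\}$ sharing the common denominator $|w|^2-\lambda$. Thus each radial partial sum splits into complete orbit blocks, and it suffices to prove $e^{iw\cdot\mathbf{x}_0}+e^{iRw\cdot\mathbf{x}_0}+e^{iR^2w\cdot\mathbf{x}_0}=0$ for each orbit.

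For that last step I would write $Rw\cdot\mathbf{x}_0=w\cdot R^{-1}\mathbf{x}_0$ (using $R^{T}=R^{-1}$) and compute from $\mathbf{x}_0=\tfrac23(\mathbf{v}_1+\mathbf{v}_2)$, $R\mathbf{v}_1=-\mathbf{v}_2$, $R\mathbf{v}_2=\mathbf{v}_1-\mathbf{v}_2$ that $R^{-1}\mathbf{x}_0=\mathbf{x}_0-2\mathbf{v}_1$ and $R^{-2}\mathbf{x}_0=\mathbf{x}_0-2\mathbf{v}_2$; then with $w=\xi_\mathbf{m}+\mathbf{K}$, $\xi_\mathbf{m}\cdot\mathbf{v}_j\in 2\pi\mathbb{Z}$, $\mathbf{K}\cdot\mathbf{v}_1=\tfrac{4\pi}{3}$, and $\mathbf{K}\cdot\mathbf{v}_2=\tfrac{2\pi}{3}$, one obtains $Rw\cdot\mathbf{x}_0\equiv w\cdot\mathbf{x}_0-\tfrac{2\pi}{3}$ and $R^2w\cdot\mathbf{x}_0\equiv w\cdot\mathbf{x}_0+\tfrac{2\pi}{3}$ modulo $2\pi$, so the three exponentials equal $e^{iw\cdot\mathbf{x}_0}$ times $1,e^{-2\pi i/3},e^{2\pi i/3}$ and sum to zero. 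Hence every radial partial sum is $0$, so $g_\lambda(\mathbf{x}_0,\mathbf{K})=0$. The one place requiring care is exactly this phase bookkeeping modulo $2\pi$ — one must check that the three exponents are genuinely spaced by $\tfrac{2\pi}{3}$ and do not accidentally coincide — together with the routine but nonvacuous observation that the radial regularization in the definition of $g_\lambda(\mathbf{x}_0,\mathbf{K})$ is compatible with grouping the sum into $R$-orbits, which is what legitimizes passing from "each partial sum is $0$" to "the limit is $0$".
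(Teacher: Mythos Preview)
Your proof is correct and follows essentially the same strategy as the paper: both reduce the proposition to the vanishing \(g_\lambda(\mathbf{x}_0,\mathbf{K})=0\), prove this by grouping the dual-lattice sum into \(\tfrac{2\pi}{3}\)-rotation orbits with common denominator, and verify that the three phases in each orbit are the cube roots of unity. The only cosmetic difference is the choice of rotation center---the paper rotates \(\xi_{\mathbf m}\) about \(-\mathbf{K}\) via \(\tilde R\) and computes \(\xi_{\tilde R\mathbf m}\cdot\mathbf{x}_0\) directly from index formulas, whereas you rotate \(w=\xi_{\mathbf m}+\mathbf{K}\) about the origin via \(R\) and track the phase through \(R^{-1}\mathbf{x}_0=\mathbf{x}_0-2\mathbf{v}_1\); these are the same map in different coordinates. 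Your explicit remark that the radial truncation is \(R\)-stable (so orbit-grouping is legitimate for the conditionally convergent series) is a point the paper glosses over.
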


\begin{proof}
We use the symmetry of the honeycomb structure and dual lattice as in Section 2.4 of \cite{honeycomb}.Let \(\tilde{R}:\mathbb{R}^2 \rightarrow \mathbb{R}^2\) be a \(\frac{2\pi}{3}\)-rotation around a Dirac Point \(-\mathbf{K}\). Then we see
\begin{equation}\label{tildeR}\begin{aligned}
\tilde{R}\xi_\mathbf{m}&=\tilde{R}\xi_{(m_1,m_2)}=\xi_{(-m_1+m_2-1,-m_1-1)}\\
\tilde{R}^2\xi_\mathbf{m}&=\tilde{R}^2\xi_{(m_1,m_2)}=\xi_{(-m_2-1,m_1-m_2)}\\
\tilde{R}^3\xi_\mathbf{m}&=\mathrm{Id}~\xi_{(m_1,m_2)}=\xi_{(m_1,m_2)}
\end{aligned}\end{equation}
Here we abuse the notation and write \(\tilde{R}\xi_\mathbf{m} = \xi_{\tilde{R}\mathbf{m}}\) so that
\[\begin{aligned}
\tilde{R}\mathbf{m}&=\tilde{R}(m_1,m_2)=(-m_1+m_2-1,-m_1-1)\\
\tilde{R}^2\mathbf{m}&=\tilde{R}^2(m_1,m_2)=(-m_2-1,m_1-m_2)\\
\tilde{R}^3\mathbf{m}&=\mathrm{Id}~(m_1,m_2)=(m_1,m_2)
\end{aligned}\]
So we can decompose \(\mathbb{Z}^2 \) into three disjoint subsets \(\mathcal{S}, \tilde{R}\mathcal{S},\) and \(\tilde{R}^2\mathcal{S}\) where \[\mathbb{Z}^2=\mathcal{S}\cup \tilde{R}\mathcal{S} \cup \tilde{R}^2\mathcal{S}\]

For instance, \(\{(0,0), (-1,-1), (-1,0)\}\) is an orbit of \(\tilde{R}\). So we choose exactly one of them, say \((0,0)\), as an element of \(\mathcal{S}\). (See Definition 2.4 of \cite{honeycomb}.) Since \(\mathbf{x}_0=\frac{2}{3}(\mathbf{v}_1+\mathbf{v}_2)\), we obtain
\[\begin{aligned}
\xi_\mathbf{m} \cdot \mathbf{x}_0 &= \frac{2\pi}{3}(m_1+m_2)\\
\xi_{\tilde{R}m} \cdot \mathbf{x}_0 &= \frac{2\pi}{3}(-2m_1+m_2-2)=\frac{2\pi}{3}(m_1+m_2)-(2\pi m_1+\frac{4\pi}{3})\\
\xi_{\tilde{R}^2m} \cdot \mathbf{x}_0 &= \frac{2\pi}{3}(m_1-2m_2-1)=\frac{2\pi}{3}(m_1+m_2)-(2\pi m_2+\frac{2\pi}{3})\\
\end{aligned}\]
Therefore,
\[e^{i\xi_\mathbf{m}\cdot \mathbf{x}_0}+e^{i\xi_{\tilde{R}m}\cdot \mathbf{x}_0}+e^{i\xi_{\tilde{R}^2m}\cdot \mathbf{x}_0} = e^{i\frac{2\pi}{3}(m_1+m_2)}\left( 1+e^{i\frac{2\pi}{3}}+e^{i\frac{4\pi}{3}}\right)=0\]
In addition, note that \[|\xi_\mathbf{m}+\mathbf{K}|=|\xi_{\tilde{R}m}+\mathbf{K}|=|\xi_{\tilde{R}^2m}+\mathbf{K}|\]
We can conclude that 
\[\begin{aligned}
g_{\lambda'}(\mathbf{x}_0,\mathbf{K})
&=\sum_{\mathbf{m}\in\mathbb{Z}^2}{\frac{e^{i\xi_\mathbf{m}\cdot \mathbf{x}_0}}{|\xi_\mathbf{m}^2+\mathbf{K}|^2-\lambda'}}\\
&=\sum_{\mathbf{m}\in \mathcal{S}}{\left[ \frac{e^{i\xi_\mathbf{m}\cdot \mathbf{x}_0}}{|\xi_\mathbf{m}^2+\mathbf{K}|^2-\lambda'}
+\frac{e^{i\xi_{\tilde{R}\mathbf{m}}\cdot \mathbf{x}_0}}{|\xi_{\tilde{R}\mathbf{m}}^2+\mathbf{K}|^2-\lambda'}
+\frac{e^{i\xi_{\tilde{R}^2\mathbf{m}}\cdot \mathbf{x}_0}}{|\xi_{\tilde{R}^2\mathbf{m}}^2+\mathbf{K}|^2-\lambda'}\right]}\\
&=\sum_{\mathbf{m}\in\mathcal{S}}{\frac{e^{i\xi_\mathbf{m}\cdot \mathbf{x}_0}+e^{i\xi_{\tilde{R}\mathbf{m}}\cdot \mathbf{x}_0}+e^{i\xi_{\tilde{R}^2\mathbf{m}}\cdot \mathbf{x}_0}}{|\xi_\mathbf{m}^2+\mathbf{K}|^2-\lambda'}}\\
&=0
\end{aligned}\]

Hence, \eqref{2delta+} and \eqref{2delta-} become two identical formulae and 
\[\mult (\lambda',-\Delta_{\alpha,\{0,\mathbf{x}_0\}}(\mathbf{K}))=2\quad \forall \lambda' \in \sigma(-\Delta_{\alpha,\{0,\mathbf{x}_0\}}(\mathbf{K}))\setminus\sigma(-\Delta(\mathbf{K})) .\]
Moreover, \eqref{2delta+} and \eqref{2delta-} coincide with \eqref{1delta} of the triangular lattice case.
\[\sigma(-\Delta_{\alpha,\{0,\mathbf{x}_0\}}(\mathbf{K}))\setminus\sigma(-\Delta(\mathbf{K})) = \sigma(-\Delta_{\alpha,\{0\}}(\mathbf{K})\setminus\sigma(-\Delta(\mathbf{K}))\]
\end{proof}

In addition, those eigenvalues of multiplicity 2 given in the previous proposition are the conic points on pairs of dispersion bands.
\begin{lem}\label{coniclem}
For each \(\lambda' \in \sigma(-\Delta_{\alpha,\{0,\mathbf{x}_0\}}(\mathbf{K}))\setminus\sigma(-\Delta(\mathbf{K}))\), there exist a pair of dispersion bands \(\mathbf{k} \mapsto \lambda_-(\mathbf{k})\) and \(\mathbf{k} \mapsto \lambda_+(\mathbf{k})\) of the operator \(-\Delta_{\alpha\{0,\mathbf{x}_0\}}(\mathbf{k})\) such that \[\lambda'=\lambda_-(\mathbf{K}) = \lambda_+(\mathbf{K}). \] In addition, \(\lambda_+\) and \(\lambda_-\) meet conically at \((\mathbf{K}. \lambda')\) with the directional derivatives independent of the direction \(\mathbf{u}\in \mathbb{S}^1\),
\[\nabla_\mathbf{u}\lambda_\pm (\mathbf{K})= \pm c(\lambda')\] where \(c(\lambda') > 0\) is defined by
\begin{equation}\label{c}
c(\lambda')=\dfrac{4\pi}{a} \dfrac{\left| \sum_{\mathbf{m}\in\mathbb{Z}^2} \dfrac{m_1 e^{i\xi_\mathbf{m}\cdot\mathbf{x}_0} }{(|\xi_\mathbf{m}+\mathbf{K}|^2-\lambda')^2} \right|} {\sum_{\mathbf{m}\in\mathbb{Z}^2} \dfrac{1}{(|\xi_\mathbf{m}+\mathbf{K}|^2-\lambda')^2}}, \qquad \mathbf{m}=(m_1,m_2) \end{equation}
\end{lem}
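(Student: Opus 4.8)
Fix $\lambda'\in\sigma(-\Delta_{\alpha,\{\mathbf{0},\mathbf{x}_0\}}(\mathbf{K}))\setminus\sigma(-\Delta(\mathbf{K}))$ and set $G(\lambda,\mathbf{k})=g_\lambda(\mathbf{0},\mathbf{k})-\alpha$ and $H(\lambda,\mathbf{k})=g_\lambda(\mathbf{x}_0,\mathbf{k})$. Since $\lambda'\notin\sigma(-\Delta(\mathbf{K}))$, both functions are real-analytic in $(\lambda,\mathbf{k})$ near $(\lambda',\mathbf{K})$ (for $G$ use the absolutely convergent expression \eqref{1deltamod}), and Proposition~\ref{diracdouble} gives $G(\lambda',\mathbf{K})=0$, $H(\lambda',\mathbf{K})=0$; in fact the argument there shows $H(\lambda,\mathbf{K})\equiv 0$ for all $\lambda$ near $\lambda'$. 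By Proposition~\ref{thm:2deltapert} the perturbed eigenvalues for $\mathbf{k}$ near $\mathbf{K}$ are exactly the zeros of the real-analytic function $\Phi(\lambda,\mathbf{k}):=G(\lambda,\mathbf{k})^2-|H(\lambda,\mathbf{k})|^2$, and by the multiplicity statement in Proposition~\ref{diracdouble} precisely two of them collide at $\lambda'$ when $\mathbf{k}=\mathbf{K}$; since $\partial_\lambda^2\Phi(\lambda',\mathbf{K})=2(\partial_\lambda G(\lambda',\mathbf{K}))^2>0$, a Weierstrass preparation of $\Phi$ in $\delta\lambda:=\lambda-\lambda'$ produces the two continuous branches $\lambda_-(\mathbf{k})\le\lambda_+(\mathbf{k})$ with $\lambda_\pm(\mathbf{K})=\lambda'$.

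Next I would read off the $2$-jet of $\Phi$ at $(\lambda',\mathbf{K})$ by differentiating the defining lattice sums term by term and using the orbit decomposition $\mathbb{Z}^2=\mathcal{S}\cup\tilde R\mathcal{S}\cup\tilde R^2\mathcal{S}$ from the proof of Proposition~\ref{diracdouble}: on each orbit $|\xi_\mathbf{m}+\mathbf{K}|$ is constant, $\sum_{\mathrm{orbit}}e^{i\xi_\mathbf{m}\cdot\mathbf{x}_0}=0$, and $\sum_{\mathrm{orbit}}(\xi_\mathbf{m}+\mathbf{K})=(I+R+R^2)(\xi_\mathbf{m}+\mathbf{K})=0$, where $R$ is the $\tfrac{2\pi}{3}$-rotation \eqref{R} (the linear part of $\tilde R$). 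This yields $\partial_\lambda G(\lambda',\mathbf{K})=A:=\mathrm{area}(\Gamma)^{-1}\sum_{\mathbf{m}}(|\xi_\mathbf{m}+\mathbf{K}|^2-\lambda')^{-2}>0$, $\nabla_\mathbf{k}G(\lambda',\mathbf{K})=0$, $\partial_\lambda H(\lambda',\mathbf{K})=0$ (immediate from $H(\lambda,\mathbf{K})\equiv 0$), and $\nabla_\mathbf{k}H(\lambda',\mathbf{K})=:\mathbf{b}$, a constant multiple of $\sum_\mathbf{m}(\xi_\mathbf{m}+\mathbf{K})e^{i\xi_\mathbf{m}\cdot\mathbf{x}_0}(|\xi_\mathbf{m}+\mathbf{K}|^2-\lambda')^{-2}$ (the term from differentiating $e^{i(\xi_\mathbf{m}+\mathbf{k})\cdot\mathbf{x}_0}$ drops out because $g_{\lambda'}(\mathbf{x}_0,\mathbf{K})=0$). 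Since $G$, $\nabla_\mathbf{k}G$, $H$ and $\partial_\lambda H$ all vanish at $(\lambda',\mathbf{K})$, the second-order Taylor expansion of $\Phi$ there collapses to $\Phi(\lambda'+\delta\lambda,\mathbf{K}+\delta\mathbf{k})=A^2\,\delta\lambda^2-|\mathbf{b}\cdot\delta\mathbf{k}|^2+O(|\delta\lambda|^3+|\delta\mathbf{k}|^3)$, so $\Phi=0$ forces $\delta\lambda=\pm A^{-1}|\mathbf{b}\cdot\delta\mathbf{k}|+o(|\delta\mathbf{k}|)$ and hence $\nabla_\mathbf{u}\lambda_\pm(\mathbf{K})=\pm A^{-1}|\mathbf{b}\cdot\mathbf{u}|$ for every $\mathbf{u}\in\mathbb{S}^1$.

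It remains to show that $\mathbf{u}\mapsto|\mathbf{b}\cdot\mathbf{u}|$ is constant on $\mathbb{S}^1$ and to match its value with \eqref{c}. For the first point I would exploit the $\tfrac{2\pi}{3}$-rotation symmetry of Proposition~\ref{honeycombsymm}: combining the $\Lambda^*$-periodicity of $\mathbf{k}\mapsto g_\lambda(\mathbf{x}_0,\mathbf{k})$ with the cube-root phase it picks up under $\mathbf{k}\mapsto R\mathbf{k}$, and with the fixed-point relation $R\mathbf{K}\equiv\mathbf{K}\pmod{\Lambda^*}$ contained in the proof of Proposition~\ref{diracdouble}, one differentiates at $\mathbf{K}$ and obtains $R\,\mathbf{b}=\omega\,\mathbf{b}$ with $\omega=e^{2\pi i/3}$. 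Thus $\mathbf{b}$ is a multiple of the eigenvector $(1,-i)$ of $R$, so $|\mathbf{b}\cdot\mathbf{u}|$ is independent of $\mathbf{u}$; evaluating it along $\mathbf{u}\parallel\mathbf{v}_1$, writing $\mathbf{b}\cdot\mathbf{v}_1$ as a lattice sum and using $\xi_\mathbf{m}\cdot\mathbf{v}_1=2\pi m_1$ together with $\sum_\mathbf{m}e^{i\xi_\mathbf{m}\cdot\mathbf{x}_0}(|\xi_\mathbf{m}+\mathbf{K}|^2-\lambda')^{-2}=0$ (the orbit cancellation again), this constant turns, after tracking the normalization constants of the resolvent formula, into the right-hand side of \eqref{c}. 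Hence $\nabla_\mathbf{u}\lambda_\pm(\mathbf{K})=\pm c(\lambda')$ for all directions $\mathbf{u}$.

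The main obstacle is the strict positivity $c(\lambda')>0$, equivalently $\mathbf{b}\ne 0$: the quotient in \eqref{c} is manifestly $\ge 0$ (the denominator is a sum of positive terms), so one must rule out cancellation in the numerator sum $\Sigma(\lambda'):=\sum_\mathbf{m}m_1e^{i\xi_\mathbf{m}\cdot\mathbf{x}_0}(|\xi_\mathbf{m}+\mathbf{K}|^2-\lambda')^{-2}$. I would argue as follows: by the orbit identity $\sum_{\mathrm{orbit}}\zeta_{\tilde R^j\mathbf{m}}e^{i\xi_{\tilde R^j\mathbf{m}}\cdot\mathbf{x}_0}=3\,\zeta_\mathbf{m}e^{i\xi_\mathbf{m}\cdot\mathbf{x}_0}$ (where $\zeta_\mathbf{m}\in\mathbb{C}$ represents $\xi_\mathbf{m}+\mathbf{K}$), $\Sigma(\lambda')$ is a nonzero constant times $\partial_{\lambda'}\gamma(\lambda')$ with $\gamma(\lambda'):=\sum_{\mathbf{m}\in\mathcal{S}}\zeta_\mathbf{m}e^{i\xi_\mathbf{m}\cdot\mathbf{x}_0}(|\zeta_\mathbf{m}|^2-\lambda')^{-1}$; this $\gamma$ is a non-constant scalar meromorphic function whose singularities are exactly the simple poles at the points of $\sigma(-\Delta(\mathbf{K}))$ with nonzero residues (note $\zeta_\mathbf{m}\ne 0$ since $\mathbf{K}\notin\Lambda^*$), so on each bounded gap of $\sigma(-\Delta(\mathbf{K}))$ the two neighbouring pole terms dominate near the endpoints, and a monotonicity argument analogous to the one locating the perturbed eigenvalues of $-\Delta_{\alpha,\{0,\mathbf{x}_0\}}(\mathbf{K})$ rules out interior zeros of $\gamma'$. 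Establishing this nonvanishing of a specific lattice sum uniformly in $\lambda'$ is the delicate step, and is where I expect to spend the most effort.
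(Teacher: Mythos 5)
Your proposal is correct and essentially reproduces the paper's own argument: the paper likewise works from the secular equations \(\alpha=g_\lambda(\mathbf{0},\mathbf{k})\pm|g_\lambda(\mathbf{x}_0,\mathbf{k})|\) perturbed about \((\lambda',\mathbf{K})\), uses exactly your cancellation identities (\(g_\lambda(\mathbf{x}_0,\mathbf{K})=0\), \(\sum_\mathbf{m} e^{i\xi_\mathbf{m}\cdot\mathbf{x}_0}(|\xi_\mathbf{m}+\mathbf{K}|^2-\lambda')^{-2}=0\), \(\sum_\mathbf{m}(\xi_\mathbf{m}+\mathbf{K})(|\xi_\mathbf{m}+\mathbf{K}|^2-\lambda')^{-2}=\mathbf{0}\)), and obtains direction independence from the \(\tfrac{2\pi}{3}\)-rotation symmetry (stated there as \(e^{i\pi/3}\,\mathbf{c}(\lambda')\cdot\tfrac{\mathbf{v}_1}{a}=\mathbf{c}(\lambda')\cdot\tfrac{\mathbf{v}_2}{a}\), equivalent to your \(R\mathbf{b}=\omega\mathbf{b}\)), your squaring into \(\Phi=G^2-|H|^2\) with Weierstrass preparation being only a repackaging of the paper's direct use of the two continuous band functions through the double eigenvalue. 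The one place you go beyond the paper, the strict positivity \(c(\lambda')>0\), is not actually established in the paper's proof either (it is only asserted in the statement), and your sketch for it is the weakest part — \(\gamma\) is complex-valued, so the "two neighbouring poles dominate plus monotonicity" argument does not transfer — but this does not affect the agreement with the paper's own reasoning.
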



\begin{proof}
Suppose \(\lambda' \in \sigma(-\Delta_{\alpha,\{0,\mathbf{x}_0\}}(\mathbf{K}))\setminus\sigma(-\Delta(\mathbf{K}))\). By Proposition~\ref{diracdouble}, \(\lambda'\) is an eigenvalue of \(-\Delta_{\alpha,\{0,\mathbf{x}_0\}}(\mathbf{K})\) of multiplicity 2. So we can choose \(j\) such that \(\lambda'=\lambda_j(\mathbf{K},\alpha)=\lambda_{j+1}(\mathbf{K},\alpha)\). Since \(\alpha\) is fixed, let
\[\begin{aligned}
\lambda_+(\mathbf{k})&=\lambda_{j+1}(\mathbf{k},\alpha)\\
\lambda_-(\mathbf{k})&=\lambda_{j}(\mathbf{k},\alpha)
\end{aligned}\]

Now we prove the conic behavior of \(\mathbf{k}\mapsto\lambda_+(\mathbf{k})\) near \(\lambda=\lambda'\).
Consider a small perturbation \(\delta\mathbf{k}\) and the corresponding \(\delta\lambda\): 
\begin{align*}\delta \mathbf{k}&=|\delta \mathbf{k}|\mathbf{u} \in \mathbb{R}^2, \quad |\delta \mathbf{k}| \ll 1, \quad \mathbf{u}\in \mathbb{S}^1 \\
\delta\lambda &=\lambda_+(\mathbf{K}+\delta\mathbf{k})- \lambda_+(\mathbf{K})\end{align*}
Then both \((\mathbf{K},\lambda')\) and \((\mathbf{K}+\delta\mathbf{k},\lambda'+\delta\lambda)\) solve \eqref{2delta+}:
\begin{equation}\label{conic+1}
\alpha-g_{\lambda'}(\mathbf{0},\mathbf{K})+|g_{\lambda'}(\mathbf{x}_0,\mathbf{K})|=0
\end{equation}
and 
\begin{equation}\label{conic+2}\begin{aligned}
\alpha-g_{\lambda'+\delta\lambda}(\mathbf{0},\mathbf{K}+\delta\mathbf{k})+|g_{\lambda'+\delta\lambda}(\mathbf{x}_0,\mathbf{K}+\delta\mathbf{k})|=0
\end{aligned}\end{equation}

Subtracting \eqref{conic+1} from \eqref{conic+2}, we obtain
\begin{equation}\label{conic+3}
\sum_{m\in\mathbb{Z}^2}{\frac{2(\xi_\mathbf{m}+\mathbf{K})\cdot\delta\mathbf{k}-\delta\lambda}{(|\xi_\mathbf{m}+\mathbf{K}|^2-\lambda')^2}}
+\left| \sum_{m\in\mathbb{Z}^2}{\frac{e^{i \xi_\mathbf{m}\cdot \mathbf{x}_0}\left(2(\xi_\mathbf{m}+\mathbf{K})\cdot\delta\mathbf{k}-\delta\lambda\right)}{(|\xi_\mathbf{m}+\mathbf{K}|^2-\lambda')^2}}\right|=o(|\delta\mathbf{k}|)
\end{equation}

In addition, we observe that 
\[\sum_{m\in\mathbb{Z}^2}{\frac{e^{i \xi_\mathbf{m}\cdot \mathbf{x}_0}}{(|\xi_\mathbf{m}+\mathbf{K}|^2-\lambda')^2}}=0\] 
and 
\[\sum_{m\in\mathbb{Z}^2}{\frac{\xi_\mathbf{m}+\mathbf{K}}{(|\xi_\mathbf{m}+\mathbf{K}|^2-\lambda')^2}}=\mathbf{0}\]
due to the symmetry property as in Proposition~\ref{diracdouble}. So we can simplify \eqref{conic+3} as
\[-\sum_{m\in\mathbb{Z}^2}{\frac{\delta\lambda}{(|\xi_\mathbf{m}+\mathbf{K}|^2-\lambda')^2}}+
\left| \sum_{m\in\mathbb{Z}^2}{2\frac{e^{i \xi_\mathbf{m}\cdot \mathbf{x}_0}(\xi_\mathbf{m}+\mathbf{K})\cdot\delta\mathbf{k}}{(|\xi_\mathbf{m}+\mathbf{K}|^2-\lambda')^2}}\right|=o(|\delta\mathbf{k}|)\]

Hence, as \(|\delta\mathbf{k}| \rightarrow 0\), we obtain the directional derivative of the upper dispersion band,
\[\nabla_\mathbf{u}\lambda_+ (\mathbf{K})= + \dfrac{\left| \sum_{\mathbf{m}\in\mathbb{Z}^2} \dfrac{2e^{i\xi_\mathbf{m}\cdot\mathbf{x}_0}(\xi_\mathbf{m}+\mathbf{K})}{(|\xi_\mathbf{m}+\mathbf{K}|^2-\lambda')^2}\cdot \mathbf{u} \right|}{\sum_{\mathbf{m}\in\mathbb{Z}^2} \dfrac{1}{(|\xi_\mathbf{m}+\mathbf{K}|^2-\lambda')^2}}.\]

Define \(\mathbf{c}: \mathbb{R} \rightarrow \mathbb{C}^2\) by \[\mathbf{c}(\lambda')=\dfrac{ \sum_{\mathbf{m}\in\mathbb{Z}^2} \dfrac{2e^{i\xi_\mathbf{m}\cdot\mathbf{x}_0}(\xi_\mathbf{m}+\mathbf{K})}{(|\xi_\mathbf{m}+\mathbf{K}|^2-\lambda')^2} }{\sum_{\mathbf{m}\in\mathbb{Z}^2} \dfrac{1}{(|\xi_\mathbf{m}+\mathbf{K}|^2-\lambda')^2}}.\] so that 
\[\nabla_\mathbf{u}\lambda_+ (\mathbf{K})=|\mathbf{c}(\lambda')\cdot\mathbf{u}|.\]

Then we observe that
\[e^{i \frac{\pi}{3}} \mathbf{c}(\lambda')\cdot \frac{\mathbf{v}_1}{a} = \mathbf{c}(\lambda')\cdot \frac{\mathbf{v}_2}{a} \]
which implies for any \(\mathbf{u}\in\mathbb{S}^1\),
\[\begin{aligned}
|\mathbf{c}(\lambda')\cdot \mathbf{u}|&=\left| \mathbf{c}(\lambda')\cdot \frac{\mathbf{v}_1}{a} \right| 
\\&= \dfrac{4\pi}{a} \dfrac{\left| \sum_{\mathbf{m}\in\mathbb{Z}^2} \dfrac{m_1 e^{i\xi_\mathbf{m}\cdot\mathbf{x}_0} }{(|\xi_\mathbf{m}+\mathbf{K}|^2-\lambda')^2} \right|} {\sum_{\mathbf{m}\in\mathbb{Z}^2} \dfrac{1}{(|\xi_\mathbf{m}+\mathbf{K}|^2-\lambda')^2}}
\\&=c(\lambda')
\end{aligned}\]

This concludes that the directional derivative is independent of the direction \(\mathbf{u}\in\mathbb{S}^1\). We can show that \(\nabla_\mathbf{u}\lambda_- (\mathbf{K})=-c(\lambda')\) by similar considerations.
\end{proof}

\begin{rmk}
The parameters \(\boldsymbol\alpha=(\alpha,\alpha)\) and the location of the point \(\mathbf{x}_0=\frac{2}{3}(\mathbf{v}_1+\mathbf{v}_2)\) in the fundamental domain \(\Gamma\) both play crucial roles in existence of conic singularities on dispersion bands since perturbed eigenvalues \(\lambda'\in \sigma(-\Delta_{\alpha,Y} (\mathbf{k}))\setminus\sigma(-\Delta(\mathbf{k}))\) of multiplicity 2 can only be obtained when the matrix \(\Gamma_{\alpha,\{\mathbf{0},\mathbf{x}_0\}}(\lambda',\mathbf{k})\) defined by \eqref{gamma} is a \(2 \times 2\) diagonal matrix with identical diagonal entries. In other words, \(\mathbf{x}_0\) determines if \(\Gamma_{\alpha,\{\mathbf{0},\mathbf{x}_0\}}(\lambda',\mathbf{k})\) is diagonal and the identical parameters \(\boldsymbol\alpha=(\alpha,\alpha)\) make those diagonal entries equal to each other so that the two dispersion bands meet at Dirac Point.
\end{rmk}

Now we introduce one of the main theorems about the conic singularities of a honeycomb lattice point scatterer. We observe two pairs of conic surfaces above and below the third dispersion bands as follows:
\begin{thm}\label{conicthm}
For all \(\alpha \in \mathbb{R}\) and \(\mathbf{k}\in\mathcal{B}\), \[\lambda_1 (\mathbf{k},\alpha)=\lambda_2 (\mathbf{k},\alpha)\] and \[\lambda_4 (\mathbf{k},\alpha)=\lambda_5 (\mathbf{k},\alpha).\] In addition, as \(\mathbf{k}\rightarrow\mathbf{K}\), 
\[\begin{aligned}
\lambda_1(\mathbf{k},\alpha)-\lambda_1(\mathbf{K},\alpha)&=- c(\lambda_1(\mathbf{K},\alpha)) \left| \mathbf{k}-\mathbf{K}\right|+o(|\mathbf{k}-\mathbf{K}|)
\\
\lambda_{2}(\mathbf{k},\alpha)-\lambda_{2}(\mathbf{K},\alpha)&= c(\lambda_{1}(\mathbf{K},\alpha)) \left| \mathbf{k}-\mathbf{K}\right|+o(|\mathbf{k}-\mathbf{K}|)
\\
\lambda_{4}(\mathbf{k},\alpha)-\lambda_{4}(\mathbf{K},\alpha)&=- c(\lambda_{4}(\mathbf{K},\alpha)) \left| \mathbf{k}-\mathbf{K}\right|+o(|\mathbf{k}-\mathbf{K}|)
\\
\lambda_{5}(\mathbf{k},\alpha)-\lambda_{5}(\mathbf{K},\alpha)&=c(\lambda_{4}(\mathbf{K},\alpha)) \left| \mathbf{k}-\mathbf{K}\right|+o(|\mathbf{k}-\mathbf{K}|)
\end{aligned}\]

\end{thm}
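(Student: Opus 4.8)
The plan is to read the theorem off from Proposition~\ref{diracdouble}, Proposition~\ref{thm:2deltaunpert} and Lemma~\ref{coniclem}; the only real work is the multiplicity bookkeeping at the Dirac point, since the analytic content -- the value of the cone slope and its independence of direction -- is already contained in Lemma~\ref{coniclem}. First I would determine the bottom of $\sigma(-\Delta_{\alpha,\{\mathbf{0},\mathbf{x}_0\}}(\mathbf{K}))$. The smallest eigenvalue of $-\Delta(\mathbf{K})$ is $|\mathbf{K}|^2$, attained exactly on $\mathcal{M}_0=\{(0,0),(-1,0),(-1,-1)\}$ and hence of multiplicity $3$. By Proposition~\ref{diracdouble} the perturbed eigenvalues of $-\Delta_{\alpha,\{\mathbf{0},\mathbf{x}_0\}}(\mathbf{K})$ coincide with those of the triangular lattice operator, so by \eqref{1delta} there is, for every $\alpha\in\mathbb{R}$, exactly one of them, $\lambda'_1=\lambda'_1(\mathbf{K},\alpha)$, in $(-\infty,|\mathbf{K}|^2)$, with nothing of $\sigma(-\Delta_{\alpha,\{\mathbf{0},\mathbf{x}_0\}}(\mathbf{K}))$ below it; since it has multiplicity $2$, $\lambda_1(\mathbf{K},\alpha)=\lambda_2(\mathbf{K},\alpha)=\lambda'_1$. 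At $\lambda=|\mathbf{K}|^2$ the cancellation already noted in the proof of Proposition~\ref{diracdouble}, $\bigl|\sum_{\mathbf{m}\in\mathcal{M}_0}e^{i\xi_\mathbf{m}\cdot\mathbf{x}_0}\bigr|=0\neq 3$, puts us in Case~1 of Proposition~\ref{thm:2deltaunpert}, so $\mult(|\mathbf{K}|^2,-\Delta_{\alpha,\{\mathbf{0},\mathbf{x}_0\}}(\mathbf{K}))=3-2=1$ and $\lambda_3(\mathbf{K},\alpha)=|\mathbf{K}|^2$. Applying \eqref{1delta} once more, the next gap of $\sigma(-\Delta(\mathbf{K}))$, namely $(|\mathbf{K}|^2,\min\{|\xi_\mathbf{m}+\mathbf{K}|^2:|\xi_\mathbf{m}+\mathbf{K}|^2>|\mathbf{K}|^2\})$, contains exactly one perturbed eigenvalue $\lambda'_2$, again of multiplicity $2$, and nothing lies strictly between $|\mathbf{K}|^2$ and $\lambda'_2$, so $\lambda_4(\mathbf{K},\alpha)=\lambda_5(\mathbf{K},\alpha)=\lambda'_2$. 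This already gives the two equalities at the Dirac point and records the strict chain $\lambda'_1<|\mathbf{K}|^2<\lambda'_2$.

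Next I would apply Lemma~\ref{coniclem} with $\lambda'=\lambda'_1$ and with $\lambda'=\lambda'_2$. Because $\lambda'_1$ has nothing below it at $\mathbf{K}$ and $\lambda'_2$ has exactly three eigenvalues (counted with multiplicity) below it, the integer $j$ furnished by that lemma is $j=1$ in the first case and $j=4$ in the second; hence the lemma produces directly $\nabla_\mathbf{u}\lambda_1(\mathbf{K},\alpha)=-c(\lambda_1(\mathbf{K},\alpha))$, $\nabla_\mathbf{u}\lambda_2(\mathbf{K},\alpha)=c(\lambda_1(\mathbf{K},\alpha))$, $\nabla_\mathbf{u}\lambda_4(\mathbf{K},\alpha)=-c(\lambda_4(\mathbf{K},\alpha))$ and $\nabla_\mathbf{u}\lambda_5(\mathbf{K},\alpha)=c(\lambda_4(\mathbf{K},\alpha))$ for every $\mathbf{u}\in\mathbb{S}^1$, with $c$ as in \eqref{c}. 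Since each of these directional derivatives is independent of $\mathbf{u}$, it follows that as $\mathbf{k}\to\mathbf{K}$ one has $\lambda_j(\mathbf{k},\alpha)-\lambda_j(\mathbf{K},\alpha)=\mp c(\cdot)\,|\mathbf{k}-\mathbf{K}|+o(|\mathbf{k}-\mathbf{K}|)$ for $j\in\{1,2,4,5\}$ with the signs and arguments just listed, which is exactly the four displayed expansions.

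The point requiring care is the bookkeeping of the first paragraph, since it simultaneously supplies the equalities and locks the cone-to-index correspondence used in the second paragraph. Three things must be verified: that no eigenvalue of $-\Delta_{\alpha,\{\mathbf{0},\mathbf{x}_0\}}(\mathbf{K})$ lies below $\lambda'_1$ (there is no unperturbed eigenvalue below $|\mathbf{K}|^2$, and only one perturbed one in that interval); that the triple unperturbed eigenvalue $|\mathbf{K}|^2$ drops to multiplicity exactly $1$ -- the honeycomb-specific input, coming from $\mathbf{x}_0=\tfrac23(\mathbf{v}_1+\mathbf{v}_2)$, which forces $\sum_{\mathbf{m}\in\mathcal{M}_0}e^{i\xi_\mathbf{m}\cdot\mathbf{x}_0}$ to equal a phase times the sum of the three cube roots of unity, hence $0$, so that Case~1 of Proposition~\ref{thm:2deltaunpert} applies and a single middle band $\lambda_3$ survives at the Dirac energy; and that $\lambda'_2$ is genuinely the next eigenvalue. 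Once $\lambda_1(\mathbf{K},\alpha)=\lambda_2(\mathbf{K},\alpha)<\lambda_3(\mathbf{K},\alpha)=|\mathbf{K}|^2<\lambda_4(\mathbf{K},\alpha)=\lambda_5(\mathbf{K},\alpha)$ is in hand, continuity and $\mathcal{B}$-periodicity of the bands (noted after Proposition~\ref{globalasymp}) guarantee that this ordering persists in a neighborhood of $\mathbf{K}$, so that the local analysis of Lemma~\ref{coniclem} is indeed an analysis of the ordered bands $\lambda_1,\lambda_2$ and $\lambda_4,\lambda_5$ and nothing interleaves between the branches of either cone; convergence of the series defining $c(\lambda')$ is already handled in that lemma.
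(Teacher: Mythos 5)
Your proposal is correct and follows essentially the same route as the paper: it pins down the ordering at the Dirac point via Proposition~\ref{diracdouble} and Case~1 of Proposition~\ref{thm:2deltaunpert} (one doubly degenerate perturbed eigenvalue from \eqref{1delta} below \(\lambda_1^\infty(\mathbf{K})\), the unperturbed level \(|\mathbf{K}|^2\) surviving with multiplicity \(1\) as \(\lambda_3\), and one more doubly degenerate perturbed eigenvalue in \((\lambda_1^\infty(\mathbf{K}),\lambda_2^\infty(\mathbf{K}))\)), and then invokes Lemma~\ref{coniclem} for the cone slopes. You merely spell out the multiplicity bookkeeping in more detail than the paper does, which is fine.
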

\begin{proof}
By Proposition~\ref{thm:2deltaunpert}, \[\lambda_1^\infty(\mathbf{K}) = |\xi_{(0,0)}+\mathbf{K}|=|\xi_{(-1,-1)}+\mathbf{K}|=|\xi_{(-1,0)}+\mathbf{K}|\] is an eigenvalue of \(-\Delta_{\alpha,\{0,\mathbf{x}_0\}}(\mathbf{K})\) of multiplicity 1. Then \(\lambda_3(\mathbf{K},\alpha)=\lambda_1^\infty(\mathbf{K})\) since we obtain \(\lambda_1(\mathbf{K},\alpha)=\lambda_2(\mathbf{K},\alpha)\) and \(\lambda_4(\mathbf{K},\alpha)=\lambda_5(\mathbf{K},\alpha)\) by solving \eqref{1delta} on \((-\infty,\lambda_1^\infty(\mathbf{K}))\) and on \((\lambda_1^\infty(\mathbf{K}),\lambda_2^\infty(\mathbf{K}))\), respectively.
Then we obtain the desired conclusion by applying Lemma~\ref{coniclem} to \(\lambda_1(\mathbf{K},\alpha)=\lambda_2(\mathbf{K},\alpha)\) and \(\lambda_4(\mathbf{K},\alpha)=\lambda_5(\mathbf{K},\alpha)\).
\end{proof}

\begin{rmk}
In addition to the two pairs of conic singularities described in Theorem~\ref{conicthm}, we observe by Lemma~\ref{coniclem} that there exist infinitely many pairs of dispersion bands sharing the same property. More precisely, for each perturbed eigenvalue \(\lambda' \in \sigma(-\Delta_{\alpha,\{0,\mathbf{x}_0\}}(\mathbf{K}))\setminus\sigma(-\Delta(\mathbf{K}))\) at a Dirac point \(\mathbf{K}\), there exist two dispersion bands passing through \(\lambda'\) forming a pair of conic singularities.
On the other hand, by Proposition~\ref{thm:2deltaunpert}, we observe a different kind of \((3n-2)\)-fold conic singularities at \((\mathbf{K},\lambda')\) where \(\lambda' \in \sigma(-\Delta(\mathbf{K}))\) and \(n\) defined by \eqref{n} is greater than 1.
For instance, the \(j\)-th dispersion bands with \(j=9,\cdots, 12\) form a 4-fold conic singularity at Dirac point.
See Figure~\ref{floqaloc} for both kinds of conic surfaces at Dirac point.
\end{rmk}


Due to the global asymptotic behavior of dispersion bands described in Theorem~\ref{globalasymp}, the conic points also approach other surfaces as \(\alpha \rightarrow \pm \infty \). In particular, as \(\alpha\rightarrow\infty\), \[\lambda_1(\mathbf{K},\alpha)=\lambda_2(\mathbf{K},\alpha) \nearrow |\mathbf{K}^2|\]
where \(\lambda_3(\mathbf{K},\alpha)=\lambda_1^\infty(\mathbf{K}) = |\mathbf{K}^2|\) is independent of \(\alpha\).
On the other hand, as \(\alpha\rightarrow-\infty\),
\[\lambda_1(\mathbf{K},\alpha)=\lambda_2(\mathbf{K},\alpha) \rightarrow -\infty \]
\[\lambda_4(\mathbf{K},\alpha)=\lambda_5(\mathbf{K},\alpha) \searrow |\mathbf{K}^2| .\]
See Supplemental materials for continuous transition of dispersion bands where \(\alpha\) is moving from \(100\) to \(-100\).

\begin{figure}
\centering
\begin{subfigure}[b]{0.47\textwidth}
\centering
\includegraphics[width=\textwidth]{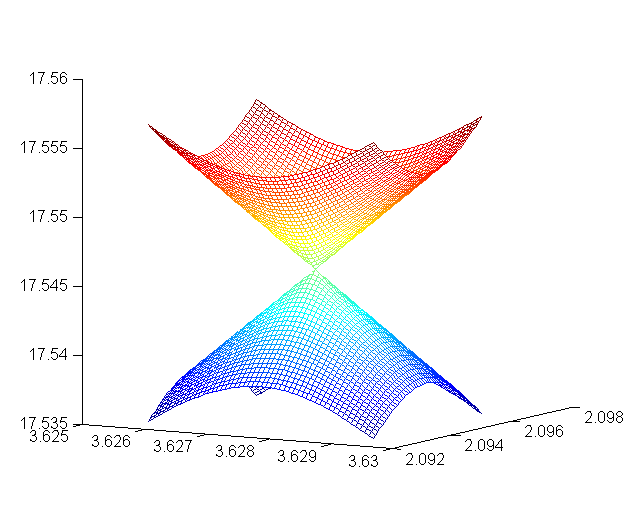} 
\caption{Triangular lattice, \(j=2,3\)}
\label{floq1a0loc23}
\end{subfigure}
\begin{subfigure}[b]{0.47\textwidth}
\centering
\includegraphics[width=\textwidth]{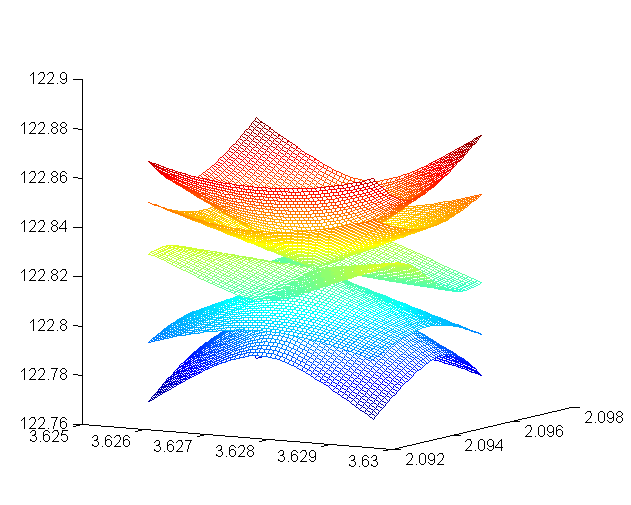} 
\caption{Triangular lattice, \(j=8,\cdots,12\)}
\label{floq1am1loc}
\end{subfigure}
\begin{subfigure}[b]{0.47\textwidth}
\centering
\includegraphics[width=\textwidth]{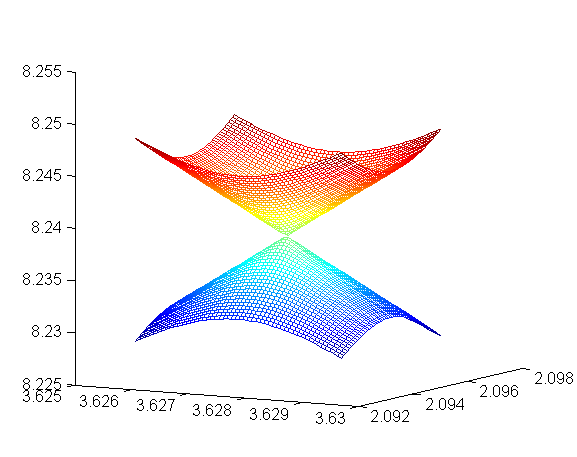} 
\caption{Honeycomb lattice, \(j=1,2\)}
\label{floq2a0loc12}
\end{subfigure}
\begin{subfigure}[b]{0.47\textwidth}
\centering
\includegraphics[width=\textwidth]{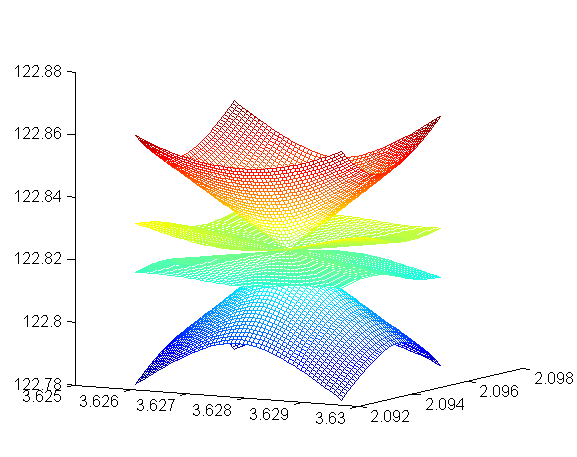} 
\caption{Honeycomb lattice, \(j=9,\cdots,12\)}
\label{floq2a0loc912}
\end{subfigure}

\caption{Local view of the \(j\)-th dispersion band generated by a triangular lattice point scatterer \(-\Delta_{0,\{\mathbf{0}\}}(\mathbf{k})\), and by a honeycomb lattice point scatterer\(-\Delta_{0,\{\mathbf{0},\mathbf{x}_0\}}(\mathbf{k})\) near Dirac Point \(\mathbf{K}\). See Supplemental materials for the first five dispersion bands with \(\alpha\) moving from \(100\) to \(-100\).
}\label{floqaloc}
\end{figure}

\subsubsection{Spectral properties of the full Hamiltonian}
By Proposition~\ref{prop:fiber}, the full Hamiltonian of the honeycomb lattice point scatterer \(-\Delta_{\boldsymbol\beta,H}\), \(\boldsymbol\beta=(\alpha,\alpha,\cdots)\) generates a spectrum given by the union of \(\sigma\left(-\Delta_{\alpha,\{0,\mathbf{x}_0\}}(\mathbf{k})\right)\) for all \(\mathbf{k} \in \mathcal{B}\). Furthermore, \(\sigma\left(-\Delta_{\boldsymbol\beta,H}\right)\) formulates a band structure due to the continuity of each dispersion band. More precisely, we obtain a union of infinitely many intervals corresponding to all dispersion bands:
\begin{equation}\label{minmaxinterval}\sigma\left(-\Delta_{\boldsymbol\beta,H}\right) = \bigcup_{j\ge 1} \left[\min_{\mathbf{k}\in\mathcal{B}} \lambda_j(\mathbf{k},\alpha),~\max_{\mathbf{k}\in\mathcal{B}} \lambda_j(\mathbf{k},\alpha) \right].\end{equation}

Although a generic dispersion band \(\mathbf{k}\mapsto \lambda_j (\mathbf{k},\alpha)\) attains its maximum and minimum at various points \(\mathbf{k}\in\mathcal{B}\) that depends on \(\alpha\), we observe a simple result at least for the lowest level.
\begin{prop}\label{prop:l10}
For any \(\alpha \in (-\infty,\infty]\), the lowest eigenvalue of the first band occurs at \(\mathbf{k}=\mathbf{0}\), namely,
\[\min_{\mathbf{k}\in\mathcal{B}}\lambda_1 (\mathbf{k},\alpha) = \lambda_1 (\mathbf{0},\alpha).\] 
\end{prop}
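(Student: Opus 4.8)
The plan is to prove the pointwise bound $\lambda_1(\mathbf{k},\alpha)\ge\lambda_1(\mathbf{0},\alpha)$ for every $\mathbf{k}\in\mathcal{B}$, which gives the statement. The case $\alpha=\infty$ is immediate, since then $\lambda_1(\mathbf{k},\infty)=\lambda_1^\infty(\mathbf{k})=\min_{\mathbf{m}\in\mathbb{Z}^2}|\xi_\mathbf{m}+\mathbf{k}|^2\ge0$, with equality at $\mathbf{k}=\mathbf{0}$. So fix $\alpha\in\mathbb{R}$ throughout.

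First I would identify $\lambda_1(\mathbf{k},\alpha)$ via the Green's function. For $\lambda<\lambda_1^\infty(\mathbf{k})$ set
\[h_\pm(\lambda,\mathbf{k})=g_\lambda(\mathbf{0},\mathbf{k})\pm\bigl|g_\lambda(\mathbf{x}_0,\mathbf{k})\bigr|,\]
so that by Proposition~\ref{thm:2deltapert} the perturbed eigenvalues of $-\Delta_{\alpha,\{\mathbf{0},\mathbf{x}_0\}}(\mathbf{k})$ in $(-\infty,\lambda_1^\infty(\mathbf{k}))$ are exactly the roots of $\alpha=h_+(\lambda,\mathbf{k})$ (this is \eqref{2delta-}) and of $\alpha=h_-(\lambda,\mathbf{k})$ (this is \eqref{2delta+}). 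As recorded in the proof of Proposition~\ref{globalasymp}, each $h_\pm(\cdot,\mathbf{k})$ is continuous and strictly increasing on $(-\infty,\lambda_1^\infty(\mathbf{k}))$, tends to $-\infty$ as $\lambda\to-\infty$, and $h_+(\lambda,\mathbf{k})\to+\infty$ as $\lambda\nearrow\lambda_1^\infty(\mathbf{k})$. Hence $\alpha=h_+(\lambda,\mathbf{k})$ has a unique root $\lambda_\ast(\mathbf{k})\in(-\infty,\lambda_1^\infty(\mathbf{k}))$, and since $h_+\ge h_-$ pointwise with both increasing, every root of $\alpha=h_-(\cdot,\mathbf{k})$ lies $\ge\lambda_\ast(\mathbf{k})$; as the unperturbed eigenvalues all lie in $[\lambda_1^\infty(\mathbf{k}),\infty)$, the bottom of the spectrum of $-\Delta_{\alpha,\{\mathbf{0},\mathbf{x}_0\}}(\mathbf{k})$ is $\lambda_1(\mathbf{k},\alpha)=\lambda_\ast(\mathbf{k})$, that is,
\[\alpha=g_{\lambda_1(\mathbf{k},\alpha)}(\mathbf{0},\mathbf{k})+\bigl|g_{\lambda_1(\mathbf{k},\alpha)}(\mathbf{x}_0,\mathbf{k})\bigr|,\qquad\lambda_1(\mathbf{k},\alpha)<\lambda_1^\infty(\mathbf{k}).\]
In particular $\lambda_1(\mathbf{0},\alpha)<\lambda_1^\infty(\mathbf{0})=0$.

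The crux is the comparison $h_+(\lambda,\mathbf{k})\le h_+(\lambda,\mathbf{0})$ for all $\lambda<0$ and $\mathbf{k}\in\mathcal{B}$, which I would deduce from the two bounds
\[g_\lambda(\mathbf{0},\mathbf{k})\le g_\lambda(\mathbf{0},\mathbf{0}),\qquad\bigl|g_\lambda(\mathbf{x}_0,\mathbf{k})\bigr|\le\bigl|g_\lambda(\mathbf{x}_0,\mathbf{0})\bigr|\qquad(\lambda<0).\]
For $\lambda<0$ one has $\tfrac1{|\xi_\mathbf{m}+\mathbf{k}|^2-\lambda}=\int_0^\infty e^{\lambda t}e^{-t|\xi_\mathbf{m}+\mathbf{k}|^2}\,dt$, and Poisson summation over the dual lattice gives, for every $t>0$,
\begin{align*}
\sum_{\mathbf{m}\in\mathbb{Z}^2}e^{-t|\xi_\mathbf{m}+\mathbf{k}|^2}&=\frac{\mathrm{area}(\Gamma)}{4\pi t}\sum_{\mathbf{v}\in\Lambda}e^{-|\mathbf{v}|^2/4t}\,e^{i\mathbf{k}\cdot\mathbf{v}},\\
\sum_{\mathbf{m}\in\mathbb{Z}^2}e^{i\xi_\mathbf{m}\cdot\mathbf{x}_0}\,e^{-t|\xi_\mathbf{m}+\mathbf{k}|^2}&=\frac{\mathrm{area}(\Gamma)}{4\pi t}\,e^{-i\mathbf{k}\cdot\mathbf{x}_0}\sum_{\mathbf{v}\in\Lambda}e^{-|\mathbf{v}-\mathbf{x}_0|^2/4t}\,e^{i\mathbf{k}\cdot\mathbf{v}}.
\end{align*}
All Fourier coefficients here are strictly positive, so $\sum_{\mathbf{v}}e^{-|\mathbf{v}|^2/4t}e^{i\mathbf{k}\cdot\mathbf{v}}$ is real (pair $\mathbf{v}\leftrightarrow-\mathbf{v}$) and maximized at $\mathbf{k}=\mathbf{0}$, while the modulus of $\sum_{\mathbf{v}}e^{-|\mathbf{v}-\mathbf{x}_0|^2/4t}e^{i\mathbf{k}\cdot\mathbf{v}}$ is bounded by its value at $\mathbf{k}=\mathbf{0}$ by the triangle inequality. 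Since the renormalizing terms in \eqref{1deltamod} are $\mathbf{k}$-independent, $g_\lambda(\mathbf{0},\mathbf{k})-g_\lambda(\mathbf{0},\mathbf{0})=\frac1{\mathrm{area}(\Gamma)}\int_0^\infty e^{\lambda t}\sum_\mathbf{m}\bigl(e^{-t|\xi_\mathbf{m}+\mathbf{k}|^2}-e^{-t|\xi_\mathbf{m}|^2}\bigr)\,dt$, which is $\le0$ by the first identity; likewise $|g_\lambda(\mathbf{x}_0,\mathbf{k})|\le|g_\lambda(\mathbf{x}_0,\mathbf{0})|$ follows from the second. This yields the comparison.

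Finally I would conclude. Suppose, for contradiction, that $\lambda_1(\mathbf{k},\alpha)<\lambda_1(\mathbf{0},\alpha)$ for some $\mathbf{k}\in\mathcal{B}$. Since $\lambda_1(\mathbf{0},\alpha)<0$, also $\lambda_1(\mathbf{k},\alpha)<0$, so the comparison applies at $\lambda=\lambda_1(\mathbf{k},\alpha)$ and, using that $h_+(\cdot,\mathbf{0})$ is strictly increasing on $(-\infty,0)$,
\[\alpha=h_+\bigl(\lambda_1(\mathbf{k},\alpha),\mathbf{k}\bigr)\le h_+\bigl(\lambda_1(\mathbf{k},\alpha),\mathbf{0}\bigr)<h_+\bigl(\lambda_1(\mathbf{0},\alpha),\mathbf{0}\bigr)=\alpha,\]
a contradiction; hence $\lambda_1(\mathbf{k},\alpha)\ge\lambda_1(\mathbf{0},\alpha)$ for all $\mathbf{k}$. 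The main obstacle is the comparison inequality: the Poisson/heat-kernel detour is what supplies the positivity of the $\mathbf{k}$-Fourier coefficients, and one has to be somewhat careful about the interchange of summation and integration, the cancellation of the renormalized diagonal singularity, and the merely conditional convergence of the off-diagonal lattice sum (which becomes absolutely convergent on the Poisson-dual side). The identification of $\lambda_1$ with the $h_+$-branch and the concluding monotonicity argument are routine.
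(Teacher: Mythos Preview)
Your argument is correct and follows the same strategy as the paper: pass to the direct lattice via Poisson summation so that the Fourier coefficients in $\mathbf{k}$ become positive, then use the triangle inequality to bound $g_\lambda(\mathbf{0},\mathbf{k})+|g_\lambda(\mathbf{x}_0,\mathbf{k})|$ by its value at $\mathbf{k}=\mathbf{0}$, and conclude by monotonicity of $h_+(\cdot,\mathbf{0})$. The only difference is cosmetic: the paper quotes the Poisson formula $g_\lambda(\mathbf{x},\mathbf{k})=\sum_{\mathbf{v}\in\Lambda}G_\lambda(\mathbf{x}+\mathbf{v})e^{-i\mathbf{k}\cdot\mathbf{v}}$ from \cite{solvable} and uses positivity of the free Green's function $G_\lambda$ for $\lambda<0$, whereas you reconstruct the same positivity via the heat-kernel representation and Poisson summation for Gaussians.
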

\begin{proof}
This can be proved easily when \(\alpha=\infty\) since \[\lambda_1 (\mathbf{k},\infty)=\lambda_1^\infty (\mathbf{k})=|\mathbf{k}|^2.\]

Now suppose \(\alpha \in \mathbb{R}\). Then \(\lambda_1 (\mathbf{k},\alpha)\) is obtained by solving \eqref{2delta-} on \((-\infty,\lambda_1^\infty (\mathbf{k}))\) where the RHS of the equation is continuous and increasing as a function of \(\lambda\). Since \(\lambda_1 (\mathbf{k},\alpha) < \lambda_1^\infty(\mathbf{k})\), it suffices to show that 
\[\alpha \ge g_{\lambda_1 (\mathbf{0},\alpha)}(\mathbf{0},\mathbf{k})+\left|g_{\lambda_1 (\mathbf{0},\alpha)}(\mathbf{x}_0,\mathbf{k})\right|\]
or equivalently,
\[g_{\lambda_1 (\mathbf{0},\alpha)}(\mathbf{0},\mathbf{0})+\left|g_{\lambda_1 (\mathbf{0},\alpha)}(\mathbf{x}_0,\mathbf{0})\right| \ge g_{\lambda_1 (\mathbf{0},\alpha)}(\mathbf{0},\mathbf{k})+\left|g_{\lambda_1 (\mathbf{0},\alpha)}(\mathbf{x}_0,\mathbf{k})\right|.\]
By the Poisson summation formula (Lemma~III.4.4) in \cite{solvable}, for \(\Im \sqrt{\lambda}>0\) and \(\mathbf{x}\in\Gamma\),
\[g_\lambda (\mathbf{x},\mathbf{k}) = \begin{dcases} \sum_{\mathbf{v}\in\Lambda} G_\lambda (\mathbf{x}+\mathbf{v}) e^{-i\mathbf{k}\cdot\mathbf{v}} &\mbox{ if } \mathbf{x} \in \Gamma \setminus \{\mathbf{0}\} \\
\sum_{\substack{\mathbf{v}\in\Lambda \\ \mathbf{v}\ne \mathbf{0}}} G_\lambda (\mathbf{v}) e^{-i \mathbf{k}\cdot\mathbf{v}} +\frac{1}{2\pi}\ln\left(\frac{\sqrt{\lambda}}{i}\right) &\mbox{ if } \mathbf{x} =\mathbf{0} 
\end{dcases}\]
Since \(\lambda_1 (\mathbf{0},\alpha) <\lambda_1^\infty(\mathbf{0})=0\), we have \(G_{\lambda_1 (\mathbf{0},\alpha)}(\mathbf{x})>0\) for \(\mathbf{x}\in \mathbb{R}^2\setminus \{\mathbf{0}\}\). So we obtain 
\[\begin{aligned}
&g_{\lambda_1 (\mathbf{0},\alpha)}(\mathbf{0},\mathbf{k})+\left|g_{\lambda_1 (\mathbf{0},\alpha)}(\mathbf{x}_0,\mathbf{k})\right|\\
&=\sum_{\substack{\mathbf{v}\in\Lambda \\ \mathbf{v}\ne \mathbf{0}}} G_{\lambda_1 (\mathbf{0},\alpha)} (\mathbf{v}) e^{-i \mathbf{k}\cdot\mathbf{v}} +\frac{1}{2\pi}\ln\left(\frac{\sqrt{\lambda_1 (\mathbf{0},\alpha)}}{i}\right) +\left|\sum_{\mathbf{v}\in\Lambda} G_{\lambda_1 (\mathbf{0},\alpha)} (\mathbf{x}_0+\mathbf{v}) e^{-i\mathbf{k}\cdot\mathbf{v}}\right|\\
&\le \sum_{\substack{\mathbf{v}\in\Lambda \\ \mathbf{v}\ne \mathbf{0}}} G_{\lambda_1 (\mathbf{0},\alpha)} (\mathbf{v}) +\frac{1}{2\pi}\ln\left(\frac{\sqrt{\lambda_1 (\mathbf{0},\alpha)}}{i}\right) +\left|\sum_{\mathbf{v}\in\Lambda} G_{\lambda_1 (\mathbf{0},\alpha)} (\mathbf{x}_0+\mathbf{v}) \right|\\
&=g_{\lambda_1 (\mathbf{0},\alpha)}(\mathbf{0},\mathbf{0})+\left|g_{\lambda_1 (\mathbf{0},\alpha)}(\mathbf{x}_0,\mathbf{0})\right|
\end{aligned}\]
which implies \[\min_{\mathbf{k}\in\mathcal{B}} \lambda_1 (\mathbf{k},\alpha) = \lambda_1 (\mathbf{0},\alpha)\]
\end{proof}

Recall that the full Hamiltonian of a triangular lattice point scatterer has a spectrum consisting of at most two intervals by Proposition~\ref{prop:floq1spec}. We now observe an analogous result for honeycomb lattice point scatterers. See Figure~\ref{floq2spec} for the graph of spectral bands \(\sigma(-\Delta_{\boldsymbol\beta,H})\) versus \(\alpha\in\mathbb{R}\) where \(\boldsymbol\beta=(\alpha,\alpha,\cdots)\).


\begin{thm} \label{band}
For \(\alpha\in \mathbb{R}\) and \(\boldsymbol\beta=(\alpha,\alpha,\cdots)\), the spectrum of \(-\Delta_{\boldsymbol\beta,H}\) consists of at most three disjoint intervals, namely,
\[\sigma(-\Delta_{\boldsymbol\beta,H})= I_1 \cup I_2 \cup I_3\]
where
\begin{equation}\label{I1}
I_1 = \left[ \lambda_1 (\mathbf{0},\alpha), ~\max_{\mathbf{k}\in\mathcal{B}}\lambda_2 (\mathbf{k},\alpha) \right]
\end{equation}
\begin{equation}\label{I2}
I_2 = \left[ \min_{\mathbf{k}\in\mathcal{B}}\lambda_3 (\mathbf{k},\alpha), ~\max_{\mathbf{k}\in\mathcal{B}}\lambda_3 (\mathbf{k},\alpha) \right]
\end{equation}
\begin{equation}\label{I3}
I_3 = \left[ \min_{\mathbf{k}\in\mathcal{B}}\lambda_4 (\mathbf{k},\alpha), ~\infty \right)
\end{equation}
\end{thm}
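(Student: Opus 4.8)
The plan is to read the three intervals directly off the band decomposition \eqref{minmaxinterval}, $\sigma(-\Delta_{\boldsymbol\beta,H})=\bigcup_{j\ge1}[\min_\mathbf k\lambda_j(\mathbf k,\alpha),\max_\mathbf k\lambda_j(\mathbf k,\alpha)]$, by grouping the bands into the families $\{1,2\}$, $\{3\}$ and $\{j\ge4\}$. Since $\lambda_1\le\lambda_2$ pointwise and $\lambda_1(\mathbf K,\alpha)=\lambda_2(\mathbf K,\alpha)$ by Theorem~\ref{conicthm}, the intervals $[\min_\mathbf k\lambda_1,\max_\mathbf k\lambda_1]$ and $[\min_\mathbf k\lambda_2,\max_\mathbf k\lambda_2]$ share the value $\lambda_1(\mathbf K,\alpha)$, so their union is a single interval; its left endpoint is $\min_\mathbf k\lambda_1(\mathbf k,\alpha)=\lambda_1(\mathbf 0,\alpha)$ by Proposition~\ref{prop:l10} and its right endpoint is $\max_\mathbf k\lambda_2(\mathbf k,\alpha)$, so the first family contributes exactly $I_1$. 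The second family contributes $I_2$ by definition, and the equality $\lambda_4(\mathbf K,\alpha)=\lambda_5(\mathbf K,\alpha)$ (Theorem~\ref{conicthm}) already shows the $4$th and $5$th bands overlap. Hence the theorem reduces to the single claim
\[\bigcup_{j\ge4}\Big[\min_\mathbf k\lambda_j(\mathbf k,\alpha),\ \max_\mathbf k\lambda_j(\mathbf k,\alpha)\Big]=\Big[\min_\mathbf k\lambda_4(\mathbf k,\alpha),\ \infty\Big)=I_3.\]

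The basic tool for this is the two-sided bound $\lambda_{j-2}^\infty(\mathbf k)\le\lambda_j(\mathbf k,\alpha)\le\lambda_j^\infty(\mathbf k)$ for $j\ge3$, $\mathbf k\in\mathcal B$, which is immediate from Proposition~\ref{globalasymp} ($\alpha\mapsto\lambda_j(\mathbf k,\alpha)$ is increasing with limits $\lambda_{j-2}^\infty(\mathbf k)$ and $\lambda_j^\infty(\mathbf k)$ as $\alpha\to-\infty$ and $\alpha\to+\infty$). Unboundedness follows at once, since $\min_\mathbf k\lambda_j(\mathbf k,\alpha)\ge\min_\mathbf k\lambda_{j-2}^\infty(\mathbf k)\to\infty$. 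Moreover, since each $\mathbf k\mapsto\lambda_j(\mathbf k,\alpha)$ is continuous on the connected set $\mathcal B$ and each $\lambda_i^\infty$ is continuous hence surjective onto its band interval, the intermediate value theorem shows that $[\min_\mathbf k\lambda_j,\max_\mathbf k\lambda_j]$ contains the overlap of the $(j-2)$-th and $j$-th free bands: if $\mu$ lies in both, choose $\mathbf k_1,\mathbf k_2$ with $\lambda_j^\infty(\mathbf k_1)=\mu=\lambda_{j-2}^\infty(\mathbf k_2)$, whence $\lambda_j(\mathbf k_1,\alpha)\le\mu\le\lambda_j(\mathbf k_2,\alpha)$.

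It then remains to show that consecutive perturbed bands are never disjoint, i.e. $\min_\mathbf k\lambda_{j+1}(\mathbf k,\alpha)\le\max_\mathbf k\lambda_j(\mathbf k,\alpha)$ for every $j\ge4$; for $j=4$ this is free because $\lambda_4(\mathbf K,\alpha)=\lambda_5(\mathbf K,\alpha)$. For $j\ge5$ I would use the large multiplicities of the eigenvalues of $-\Delta(\mathbf k)$ at the high-symmetry points $\mathbf k=\mathbf 0$, $\mathbf k=\mathbf K$ and the edge midpoints of $\mathcal B$: the rotation/reflection symmetries of $\Lambda^*$ fixing these points force those multiplicities to be multiples of $6$, $3$, resp.\ $2$, so each such eigenvalue lies in a block of consecutive free bands. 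Using the explicit shell radii (e.g.\ the value $|\mathbf k_1|^2$, of multiplicity $6$, at $\mathbf k=\mathbf 0$) one exhibits, for each of the finitely many small $j$, a value lying in all four free bands $j-2,j-1,j,j+1$, hence by the previous step in perturbed bands $j$ and $j+1$ simultaneously; for all large $j$ the Weyl growth of the free Floquet spectrum makes the free bands so wide that the $(j-2)$-th already meets the $(j+1)$-th, and the conclusion follows the same way. Alternatively one can argue by contradiction with a counting function: a spectral gap of $-\Delta_{\boldsymbol\beta,H}$ at level $\mu$ with $j_0\ge4$ bands below it forces $\#\{i:\lambda_i^\infty(\mathbf k)<\mu\}\in\{j_0-2,j_0-1,j_0\}$ for all $\mathbf k\in\mathcal B$, and in particular $\mu>\max_\mathbf k\lambda_2^\infty(\mathbf k)$; but for such $\mu$ the count $\#\big((\Lambda^*+\mathbf k)\cap B_{\sqrt\mu}\big)$ varies by more than $2$ as $\mathbf k$ runs over $\mathcal B$ (compare a lattice point of $\Lambda^*$ with a deep hole such as $-\mathbf K$), which is the desired contradiction.

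The genuinely delicate point is exactly this last step. In the triangular case (Proposition~\ref{prop:floq1spec}) the point perturbation has rank $1$, one removes a single free band, and a half-line survives; here the two scatterers per cell give a rank-$2$ perturbation, so one removes two free bands, and recovering $[\min_\mathbf k\lambda_4(\mathbf k,\alpha),\infty)$ genuinely uses the coarse overlap structure of the free Floquet spectrum on the triangular torus, which has to be controlled uniformly in the band index. The remaining parts of the argument are bookkeeping on top of Theorem~\ref{conicthm} and Propositions~\ref{prop:l10} and~\ref{globalasymp}.
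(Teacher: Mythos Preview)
Your overall architecture matches the paper's: $I_1$ via Proposition~\ref{prop:l10} and the double eigenvalue at $\mathbf K$, $I_2$ trivially, and $I_3$ via the two-sided bound $\lambda_{j-2}^\infty(\mathbf k)\le\lambda_j(\mathbf k,\alpha)\le\lambda_j^\infty(\mathbf k)$ combined with the high multiplicities of the free Floquet spectrum at $\mathbf 0$ and $\mathbf K$. (One minor point: the paper cites this two-sided bound from Albeverio \cite{solvable2} rather than deriving it from Proposition~\ref{globalasymp}.)

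Where you diverge is in the execution of the $I_3$ step, which you leave as a sketch with two options. The paper does not check band-by-band overlap or invoke Weyl asymptotics; it runs a short contradiction argument that is essentially your ``alternative'' counting approach, but sharper. Assume $\tilde\lambda>\min_{\mathbf k}\lambda_4(\mathbf k,\alpha)$ is a gap point, so $\lambda_{j'}(\mathbf k,\alpha)<\tilde\lambda<\lambda_{j'+1}(\mathbf k,\alpha)$ for some $j'\ge4$ and all $\mathbf k$; the two-sided bound then gives $\lambda_{j'-2}^\infty(\mathbf k)<\tilde\lambda<\lambda_{j'+1}^\infty(\mathbf k)$ for all $\mathbf k$. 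Now the paper uses only the two explicit degeneracy patterns
\[
\lambda_{6j+2}^\infty(\mathbf 0)=\cdots=\lambda_{6j+7}^\infty(\mathbf 0),\qquad
\lambda_{3j+1}^\infty(\mathbf K)=\lambda_{3j+2}^\infty(\mathbf K)=\lambda_{3j+3}^\infty(\mathbf K),
\]
for $j\ge0$. The first forces $\lambda_{6j''+1}^\infty(\mathbf 0)<\tilde\lambda<\lambda_{6j''+2}^\infty(\mathbf 0)$ for some $j''\ge1$; the paper then transports this to $\mathbf K$ (via the assumption on $\tilde\lambda$ and continuity of the free bands) to get $\lambda_{6j''+1}^\infty(\mathbf K)<\tilde\lambda<\lambda_{6j''+2}^\infty(\mathbf K)$, which is impossible because the indices $6j''+1$ and $6j''+2$ lie in the same three-block at $\mathbf K$. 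This handles all $j'\ge4$ uniformly, with no separate treatment of ``small'' versus ``large'' bands and no appeal to Weyl growth.

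So your plan is on the right track, but the piece you flag as ``genuinely delicate'' is exactly where the paper has a concrete one-paragraph argument that you should supply rather than gesture at: the arithmetic incompatibility of the $6$-block structure at $\mathbf 0$ with the $3$-block structure at $\mathbf K$, given that the free counting function at any $\mathbf k$ is pinned to a window of width three by the two-sided bound.
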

\begin{proof}
\eqref{I2} is trivial due to the continuity of each dispersion band \(\mathbf{k} \mapsto \lambda_j(\mathbf{k},\alpha)\). \eqref{I1} can also be easily shown since we have \[\min_{\mathbf{k}\in\mathcal{B}}\lambda_1 (\mathbf{k},\alpha) = \lambda_1 (\mathbf{0},\alpha)\] and \(\lambda_1 (\mathbf{K},\alpha)=\lambda_2 (\mathbf{K},\alpha)\) by Proposition~\ref{prop:l10} and Proposition~\ref{diracdouble}, respectively.

Now we prove all spectral bands except the three lowest levels overlap with adjacent band spectra to produce a single interval \(I_3\).
Fix \(\alpha\in\mathbb{R}\) and assume that there exists \(\tilde\lambda \notin \sigma(-\Delta_{\alpha,\{0,\mathbf{x}_0\}}(\mathbf{K}))\) such that \[\tilde\lambda>\min_{\mathbf{k}\in\mathcal{B}} \lambda_4 (\mathbf{k},\alpha)\]
Then we can choose \(j' \ge 4\) such that
\[\lambda_{j'}(\mathbf{k},\alpha)<\tilde\lambda<\lambda_{j'+1}(\mathbf{k},\alpha), \quad \mathbf{k}\in \mathcal{B}\]
Also, it was proved by S.Albeverio \cite{solvable2} that 
\begin{equation}\label{lambdabound}\lambda_{j-2}^\infty (\mathbf{k}) \le \lambda_j(\mathbf{k},\alpha) \le \lambda_{j}^\infty (\mathbf{k}),\quad \mathbf{k}\in \mathcal{B},~ j\ge 3 \end{equation} 
where \(\lambda_j^\infty(\mathbf{k})\) is defined as in Theorem~\ref{globalasymp}.
So we have for all \(\mathbf{k}\in\mathcal{B}\),
\begin{equation}\label{solvable21}
\lambda_{j'-2}^\infty(\mathbf{k})<\tilde\lambda<\lambda_{j'+1}^\infty(\mathbf{k}).\end{equation}
On the other hand, we know by the symmetry property of the dual lattice \(\Lambda^*\) that 
\begin{equation}\label{dualsymm0}\lambda_{6j+2}^\infty(\mathbf{0})=\lambda_{6j+3}^\infty(\mathbf{0})=\cdots=\lambda_{6j+7}^\infty(\mathbf{0}), \quad j \ge 0\end{equation} and 
\begin{equation}\label{dualsymmK}\lambda_{3j+1}^\infty(\mathbf{K})=\lambda_{3j+2}^\infty(\mathbf{K})=\lambda_{3j+3}^\infty(\mathbf{K}), \quad j \ge 0.\end{equation}
By \eqref{solvable21} and \eqref{dualsymm0}, we can choose \(j''\ge1\) such that
\[\lambda_{6j''+1}^\infty(\mathbf{0})<\tilde\lambda<\lambda_{6j''+2}^\infty(\mathbf{0}).\]
Then by the assumption on \(\tilde{\lambda}\) and continuity of \(\mathbf{k}\mapsto\lambda_j^\infty(\mathbf{k})\), we obtain \[\lambda_{6j''+1}^\infty(\mathbf{K})<\tilde\lambda<\lambda_{6j''+2}^\infty(\mathbf{K})\] which contradicts \eqref{dualsymmK}. Hence, \[\bigcup_{j\ge4} \{\lambda_j(\mathbf{k},\alpha)~|~\mathbf{k}\in\mathcal{B}\}= \left[\min_{\mathbf{k}\in\mathcal{B}}\lambda_4\left(\mathbf{k},\alpha\right),\infty \right)\]
\end{proof}

\begin{figure}
\centering

\begin{subfigure}[b]{0.47\textwidth}
\centering
\includegraphics[width=\textwidth]{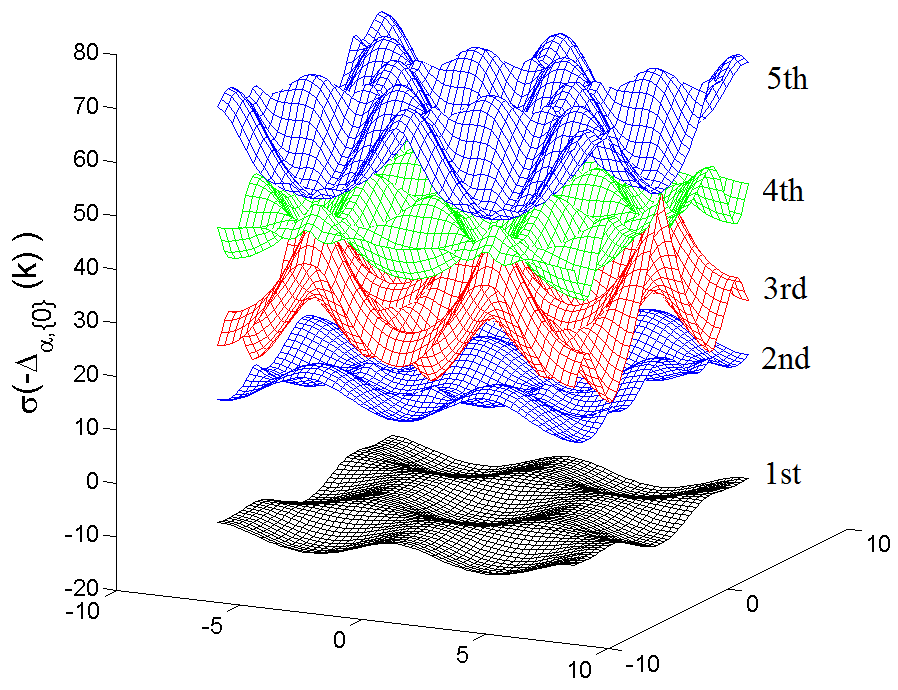} 
\caption{Triangular lattice}
\label{floq1aglobal}
\end{subfigure}
\begin{subfigure}[b]{0.47\textwidth}
\centering
\includegraphics[width=\textwidth]{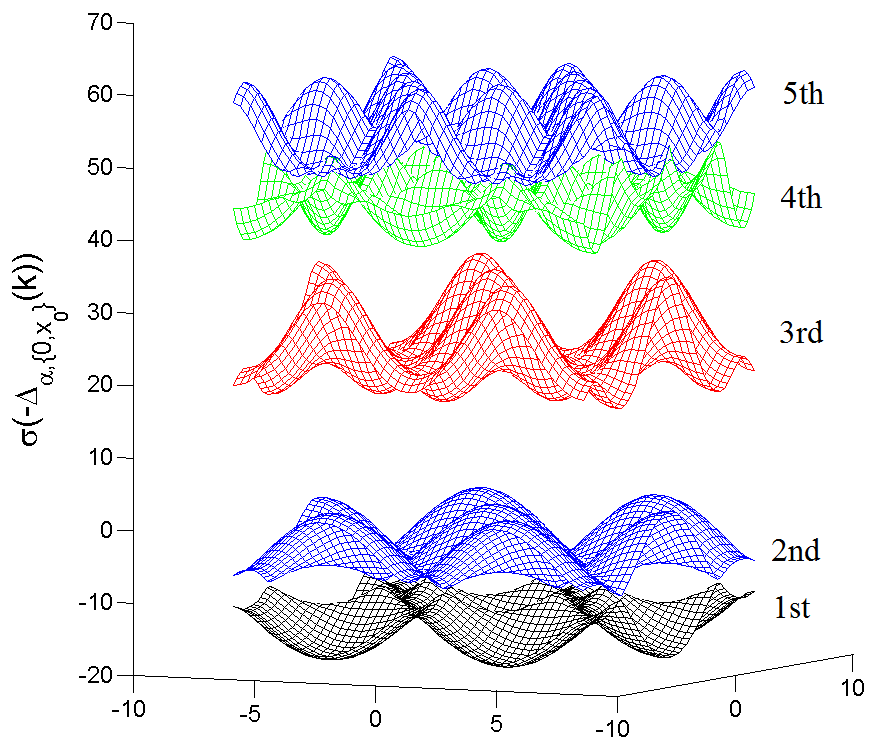} 
\caption{Honeycomb lattice}
\label{floq2aglobal}
\end{subfigure}

\caption{Front views of the first five dispersion bands generated by a triangular lattice point scatterer \(-\Delta_{\alpha,\{\mathbf{0}\}}(\mathbf{k})\) and a honeycomb lattice point scatterer \(-\Delta_{(\alpha,\alpha),\{\mathbf{0},\mathbf{x}_0\}}(\mathbf{k})\) with \(\alpha\) fixed at \(-0.07\). See Supplemental materials for the first five dispersion bands with \(\alpha\) moving from \(100\) to \(-100\).
}\label{floqaglobal}
\end{figure}

\begin{figure}
\centering

\begin{subfigure}[b]{0.47\textwidth}
\centering
\includegraphics[width=\textwidth]{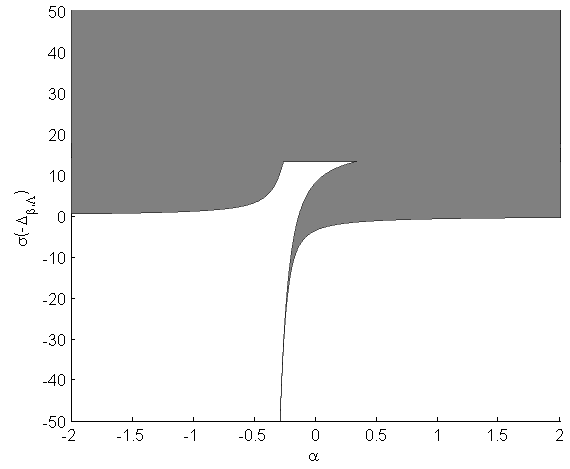} 
\caption{Triangular lattice}
\label{floq1spec}
\end{subfigure}
\begin{subfigure}[b]{0.47\textwidth}
\centering
\includegraphics[width=\textwidth]{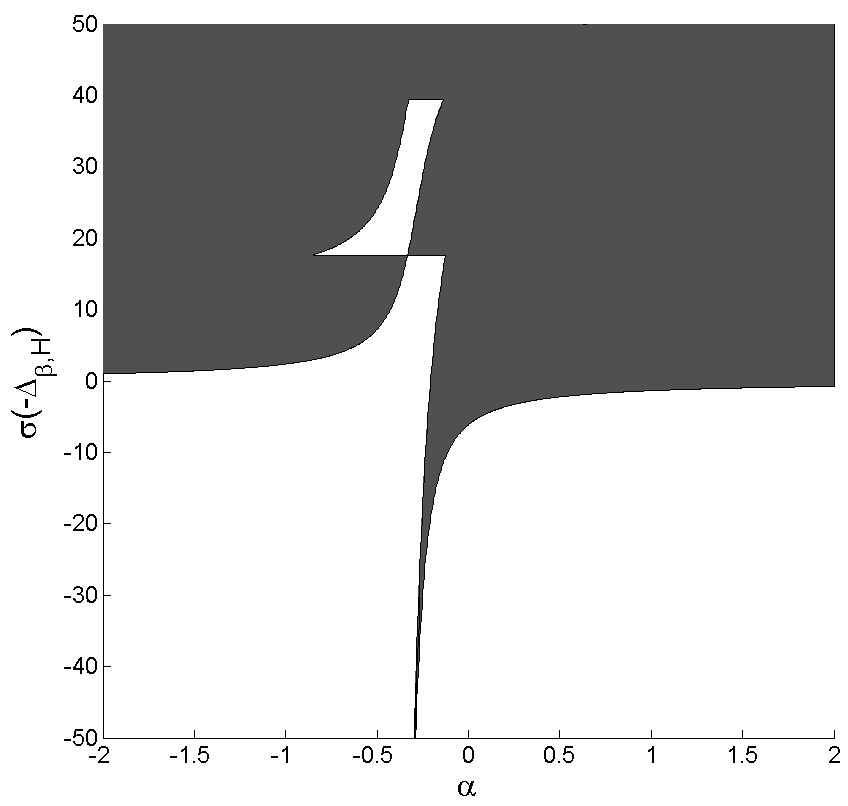} 
\caption{Honeycomb lattice}
\label{floq2spec}
\end{subfigure}

\caption{Spectrum of the triangular lattice point scatterer \(-\Delta_{\boldsymbol\beta,\Lambda}\) and honeycomb lattice point scatterer \(-\Delta_{\boldsymbol\beta,H}\) with identical parameters \(\boldsymbol\beta=(\alpha,\alpha,\cdots)\) where \(\alpha\in\mathbb{R}\). The vertical section of the shaded region at each \(\alpha\) represents \(\sigma(-\Delta_{\boldsymbol\beta,\Lambda})\) and \(\sigma(-\Delta_{\boldsymbol\beta,H})\).
}\label{floqspec}
\end{figure}

\appendix
\section{Introduction to Periodic Point Scatterers}\label{sec:introtoptsctr}
In this section, we briefly introduce some important results already known for periodic point scatterers. First, Floquet theory for smooth periodic potentials will be discussed. See also \cite{reedsimon4} for a rigorous mathematical formulation and \cite{solidstate} for a discussion in a more physical context. Then we summarize the corresponding theory of periodic point scatterers in \(\mathbb{R}^2\). See Part III of \cite{solvable} for more details.

\subsection{Floquet Theory} \label{sec:floquet}
Suppose \(V:\mathbb{R}^2\rightarrow\mathbb{R}\) is smooth and periodic. Then we can expand \(V\) in a Fourier series:
\[V(\mathbf{x})=\sum_{\mathbf{m}\in\mathbb{Z}^2} V_\mathbf{m} e^{i\xi_\mathbf{m}\cdot\mathbf{x}}\]
with the Fourier coefficients \[V_\mathbf{m}=\frac{1}{\mathrm{area}(\Gamma)} \int_{\Gamma} V(\mathbf{x})e^{-i \xi_\mathbf{m}\cdot \mathbf{x}} d\mathbf{x}.\]

Consider the Fourier transform of \(P=-\Delta+V\). \(\Delta\) denotes the Laplacian.
\begin{equation}\label{fourierP}
\hat{P}\hat{f}(\xi)=\mathcal{F}P\mathcal{F}^{-1}\hat{f}(\xi) = |\xi|^2 \hat{f}(\xi) + \frac{1}{2\pi} \int_{\mathbb{R}^2}\hat{V}(\xi-\eta)\hat{f}(\eta) d\eta,\quad \xi\in\mathbb{R}^2\end{equation}
By the Fourier inversion formula, we formally obtain 
\begin{equation}\label{fourierV}
\hat{V}(\xi)=2\pi \sum_{\mathbf{m}\in\mathbb{Z}^2} V_\mathbf{m}\delta(\xi-\xi_\mathbf{m}) \end{equation}
since
\[\begin{aligned}
V(\mathbf{x})&=\sum_{\mathbf{m}\in\mathbb{Z}^2} V_\mathbf{m} e^{i\xi_\mathbf{m}\cdot\mathbf{x}}\\
&= \sum_{\mathbf{m}\in\mathbb{Z}^2} V_\mathbf{m} \int_{\mathbb{R}^2} \delta(\xi-\xi_\mathbf{m})e^{i\xi\cdot x} d\xi \\
&= \frac{1}{2\pi}\int_{\mathbb{R}^2} \hat{V}(\xi)e^{i\xi\cdot\mathbf{x}}
\end{aligned}\]
Combining \eqref{fourierP} and \eqref{fourierV}, we obtain
\begin{equation}\label{fourierP2}
\hat{P}\hat{f}(\xi)=|\xi|^2 \hat{f}(\xi) +\sum_{\mathbf{m}\in\mathbb{Z}^2} V_\mathbf{m} \hat{f}(\xi-\xi_\mathbf{m}) \end{equation}
Also, we can always decompose \(\xi\in\mathbb{R}^2\) into a lattice point \(\xi_\mathbf{m}\in\Lambda^*\) and a remainder \(\mathbf{k}\in\mathcal{B}\) so that \begin{equation}\label{ximk}\xi=\xi_\mathbf{m}+\mathbf{k}.\end{equation}

Hence, for each \(\mathbf{k}\in\mathcal{B}\), the eigenvalue problem \((P-\lambda)f=0,~\lambda\in \mathbb{R}\) becomes an algebraic problem for \(\{\hat{f}(\xi_\mathbf{m}+\mathbf{k})~|~\mathbf{m}\in\mathbb{Z}^2\}\) as follows:
\begin{equation}\label{centraleq}
(|\xi_\mathbf{m}+\mathbf{k}|^2-\lambda) \hat{f}(\xi_\mathbf{m}+\mathbf{k}) +\sum_{\mathbf{m}'\in\mathbb{Z}^2} V_{\mathbf{m}'} \hat{f}(\xi_{\mathbf{m}-\mathbf{m}'}+\mathbf{k})=0, \quad \mathbf{m}\in\mathbb{Z}^2
\end{equation}
which is called the ``central equation'' \cite{solidstate} for Floquet-Bloch theory. 

In order to study the spectrum of the full Hamiltonian \(-\Delta+V\) in detail using the central equation, we consider a decomposition operator \(\mathcal{U}: L^2(\mathbb{R}^2) \rightarrow \int_{\mathcal{B}}^{\oplus}d\mathbf{k} ~\ell^2(\Lambda^*) \) defined as
\[\mathcal{U}\hat{f} (\mathbf{k},\xi_\mathbf{m})=\hat{f} (\xi_\mathbf{m}+\mathbf{k}), \quad \hat{f}\in L^2(\mathbb{R}^2)\]
which corresponds to the decomposition of a vector in \eqref{ximk}.

Then we obtain
\begin{equation}\label{upu}
\left(\mathcal{U}^{-1}\hat{P}\mathcal{U} \hat{f}\right)(\mathbf{k},\xi_\mathbf{m})=\left(\hat{P}(\mathbf{k})\hat{f}(\mathbf{k},\bullet)\right)(\xi_\mathbf{m})\end{equation}
with \(\hat{P}(\mathbf{k}): \ell^2 (\Lambda^*)\rightarrow \ell^2 (\Lambda^*)\) defined for each \(\mathbf{k}\in\mathcal{B}\) by
\begin{equation}\label{centraleq2}\hat{P}(\mathbf{k}) g(\xi_\mathbf{m}) = |\xi_\mathbf{m}+\mathbf{k}|^2 g(\xi_\mathbf{m})+\sum_{\mathbf{m}'\in \mathbb{Z}^2} V_{\mathbf{m}'} g(\xi_{\mathbf{m}}-\xi_{\mathbf{m}'}) \end{equation}

Note that \eqref{upu} can also be written
\begin{equation}\label{upu2}
\mathcal{U}^{-1}\hat{P}\mathcal{U} = \int_\mathcal{B}^\oplus d\mathbf{k} \hat{P}(\mathbf{k}).\end{equation}

In addition, we obtain a similar result to \eqref{centraleq2} in the \(\mathbf{x}\)-space. Define \(P(\mathbf{k}):L_\mathbf{k}^2(\Gamma)\rightarrow L_\mathbf{k}^2(\Gamma)\) as the inverse Fourier transform of the operator \(\hat{P}(\mathbf{k})\), namely, \[P(\mathbf{k})=\mathcal{F}^{-1}\hat{P}(\mathbf{k})\mathcal{F}=-\Delta(\mathbf{k})+V\]

In particular, if \(V \equiv 0\), we observe from \eqref{centraleq2} that the spectrum of the free decomposed Hamiltonian \(-\hat{\Delta}(\mathbf{k})\) is discrete. More precisely,
\[\sigma(-\Delta(\mathbf{k}))=\sigma_d (-\Delta(\mathbf{k}))=\left\{ |\xi_\mathbf{m}+\mathbf{k}|^2 ~|~ \mathbf{m}\in\mathbb{Z}^2 \right\}\]

Similarly, the non-zero potentials also produce discrete spectra determined by \eqref{centraleq2}. Then we observe 
\[\sigma(P)=\sigma(\hat{P})=\bigcup_{\mathbf{k}\in\mathcal{B}}\sigma (\hat{P}(\mathbf{k})) =\bigcup_{\mathbf{k}\in\mathcal{B}}\sigma (P(\mathbf{k})).\]
Hence, the spectrum of \(P\) appears as a union of bands where the \(j\)-th band consists of the \(j\)-th eigenvalue of \(P(\mathbf{k}),~\mathbf{k}\in\mathcal{B}\) if the dependence in \(\mathbf{k}\) is continuous.

\subsection{Point scatterers on periodic structures}
We define the point scatterers on a periodic structure using self-adjoint extension theory and renormalization process. For simplicity's sake, we will focus on formulating periodic point scatterers on a honeycomb lattice \[H=\Lambda+Y = \{\tilde{\mathbf{y}}_1,\tilde{\mathbf{y}}_2,\tilde{\mathbf{y}}_3,\cdots\}\] with \(\Lambda=\mathbb{Z}\mathbf{v}_1\oplus \mathbb{Z}\mathbf{v}_2,\) and \(Y=\{\mathbf{y}_1,\mathbf{y}_2\}=\{\mathbf{0},\mathbf{x}_0\}\) as defined in Section~\ref{2lat}. However, the whole process in this section can be also applied to the case of a lattice different from \(\Lambda\) and a finite set different from \(Y\) in the fundamental domain of the lattice. In particular, if we pick \(Y= \{\mathbf{0}\}\) with the same \(\Lambda\), then we can formulate periodic point scatterers on a triangular lattice.

We first investigate a self-adjoint operator \[-\Delta_{\boldsymbol\beta,H},\quad\boldsymbol\beta=(\beta_1,\beta_2,\cdots)\] on \(L^2(\mathbb{R}^2)\) for infinitely many point scatterers on \(H\) where the parameter \(\beta_j \in (-\infty,\infty]\) describes the strength of the point scatterer at \(\tilde{\mathbf{y}}_j\in H\). For those parameters, we will follow Albeverio's notation in \cite{solvable} so that the point scatterer at \(\tilde{\mathbf{y}}_j\) gets strong when \(|\beta_j| \ll \infty\) while it vanishes and acts as the Laplacian near \(\tilde{\mathbf{y}}_j\) when \(\beta_j=\infty\).

More precisely, we formulate a self-adjoint operator for finitely many point scatterers in \(\mathbb{R}^2\) using Von Neumann's theory of self-adjoint extension \cite{reedsimon2} and Krein's formula \cite{krein}. Then we consider \(-\Delta_{\boldsymbol\beta,H}\) as the limit of such an operator in the norm resolvent sense. 

See Chapter III.4 of \cite{solvable} for the complete formulation. Then the resolvent \[ (-\Delta_{\boldsymbol\beta,H}-\lambda)^{-1}: L^2(\mathbb{R}^2)\rightarrow L^2(\mathbb{R}^2)\] reads for \(f\in L^2(\mathbb{R}^2)\), \(\mathbf{x}\in\mathbb{R}^2\) and \(\lambda \notin \sigma(-\Delta_{\boldsymbol\beta,H})\cup [0,\infty)\),
\begin{multline}\label{infmany}
(-\Delta_{\boldsymbol\beta,H}-\lambda)^{-1}f (\mathbf{x}) = \\G_\lambda f (\mathbf{x}) + \sum_{j,j'=1}^\infty [\Gamma_{\boldsymbol\beta,H}(\lambda)^{-1}]_{jj'} \left(G_{\lambda}(\bullet,\tilde{\mathbf{y}}_{j'}),f \right) G_{\lambda}(\mathbf{x},\tilde{\mathbf{y}}_j)
\end{multline}
where
\(G_\lambda = (-\Delta-\lambda)^{-1}\) is the integral kernel of the free resolvent in \(L^2(\mathbb{R}^2)\) explicitly defined by
\begin{equation}\label{kernelr2} 
G_\lambda(\mathbf{x},\mathbf{x}')=\frac{i}{4} H_0^{(1)}(\sqrt{\lambda}|\mathbf{x}-\mathbf{x}'|), \quad \Im \sqrt{\lambda} \ge 0,~ \mathbf{x}\ne \mathbf{x}'\end{equation}
where \(H_0^{(1)}\) is the Hankel function of first kind of order zero and
\(\Gamma_{\boldsymbol\beta,H}(\lambda)\) is a closed operator in \(\ell^2(H)\) given by 
\[[\Gamma_{\boldsymbol\beta,H}(\lambda)]_{jj'}=\begin{dcases}
\beta_j +\frac{1}{2\pi}\left(\ln\frac{\sqrt{\lambda}}{2i}\right) ,&\mbox{ if } j=j',~ j,j'\in \mathbb{N}\\
-G_\lambda(\tilde{\mathbf{y}}_{j'},\tilde{\mathbf{y}}_{j}), &\mbox{ if } j\ne j',~j,j'\in \mathbb{N}
\end{dcases}\]
See Theorem 4.1 in \cite{solvable} for the complete proof. Note that \(\beta_j\) in this section corresponds to \(\alpha_j+\frac{\gamma}{2\pi}\) in \cite{solvable}.

In addition, if the parameters of point scatterers on each translated lattice \(\Lambda+\mathbf{y}_j\), \(j=1,2\) are identical, then we call them \emph{periodic point scatterers} and the operator can be decomposed into ``fibers" \cite{reedsimon4}. More precisely, suppose that the parameters at periodic points are identical, namely,
\begin{multline}\label{beta}\alpha _j = \beta _{j'}\in (-\infty,\infty],\quad j=1,2, ~j'\in\mathbb{N}, \\ \text{if and only if } ~ \tilde{\mathbf{y}}_{j'} = \mathbf{y}_j +\mathbf{v}_\mathbf{m} ~\text{ for some }~ \mathbf{m}\in\mathbb{Z}^2.\end{multline}

Then we can follow the idea of the Floquet theory with a periodic potential \(V\) defined by
\[V(\mathbf{x})=\sum_{j=1}^2\sum_{\mathbf{v}\in\Lambda} c_j \delta(\mathbf{x}-\mathbf{v}-\mathbf{y}_j),\quad c_j \in \mathbb{R},\]
which provides a simplified model of a crystal consisting of two kinds of atoms whose nuclei are located on the honeycomb lattice \(H\). 

Then the corresponding Fourier coefficients become
\begin{equation}\label{Vm}
V_\mathbf{m}=-\frac{1}{\mathrm{area}(\Gamma)} \sum_{j=1}^2 c_j e^{-i \xi_\mathbf{m} \cdot \mathbf{y}_j},\quad \mathbf{m}\in \mathbb{Z}^2. \end{equation}

Hence, we obtain the central equation for a formal operator \(P=-\Delta+V(\mathbf{x})\) from \eqref{centraleq2}. For \(g \in \ell_0^2(\Lambda^*)=\left\{h\in \ell^2 (\Lambda^*)~|~ \mathrm{supp}~h \text{ is finite}\right\}\),
\begin{equation}\label{centraleq3}
\hat{P}(\mathbf{k}) g(\xi_\mathbf{m}) = |\xi_\mathbf{m}+\mathbf{k}|^2 g(\xi_\mathbf{m})- \frac{1}{\mathrm{area}(\Gamma)} \sum_{j=1}^2 \left[ c_j e^{-i\xi_\mathbf{m}\cdot \mathbf{y}_j} \sum_{\mathbf{m}'\in \mathbb{Z}^2} e^{i \xi_{\mathbf{m}'} \cdot \mathbf{y}_j}g(\xi_{\mathbf{m}'})\right]
\end{equation}

However, this is obviously not a well-defined self-adjoint operator in \(\ell^2(\Lambda^*)\) since the infinite sum on the right-hand side does not always converge for \(g\in \ell^2(\Lambda^*)\). So we propose a renormalization by the momentum cutoff \(|\xi_\mathbf{m}+\mathbf{k}|<r\). Then we consider the coefficients \(c_j\) as a function of \(r\) so that they approach zero as \(r \rightarrow \infty\). In other words, this renormalization process will make the summation in \eqref{centraleq3} converge by weakening the delta potentials in some sense so we can have the Fourier transform of the point scatterer as a well-defined self-adjoint operator in \(\ell^2(\Lambda^*)\). So we obtain the resolvent of \(-\hat{\Delta}_{\boldsymbol\alpha,Y} (\mathbf{k})\), the Fourier transform of decomposed operator \(-\Delta_{\boldsymbol\alpha,Y} (\mathbf{k})\), as in the following proposition. See Theorem III.1.4.1 and Theorem III.4.3 in \cite{solvable} for the proof. Note that \(\alpha_j\) in this proposition corresponds to \(\alpha_j+\frac{\gamma}{2\pi}\) in \cite{solvable}.

\begin{prop}
Let \(\hat{P}^r (\mathbf{k})\) be a self-adjoint operator such that 
\[(\hat{P}^r (\mathbf{k})g) (\xi_\mathbf{m}) = |\xi_\mathbf{m}+\mathbf{k}|^2 g(\xi_\mathbf{m})- \frac{1}{\mathrm{area}(\Gamma)} \sum_{j=1}^2 \left[ c_j( r) (\phi_{\mathbf{y}_j}^ r (\mathbf{k}),g)\phi_{\mathbf{y}_j}^ r (\mathbf{k}) \right]\]
where \((\bullet,\bullet)\) is the inner product in \(\ell^2(\Lambda^*)\) and \(\phi_{\mathbf{y}_j}^ r (\mathbf{k})\) is the function
\[\phi_{\mathbf{y}_j}^ r(\mathbf{k},\xi_\mathbf{m})=\chi_ r (\xi_\mathbf{m}+\mathbf{k}) e^{-i(\xi_\mathbf{m}+\mathbf{k})\cdot\mathbf{y}_j}\]
with domain 
\[D(\hat{P}^ r (\mathbf{k})) = D(-\hat{\Delta}(\mathbf{k})) = \left\{ g\in \ell^2(\Lambda^*) \middle| \sum_{\mathbf{m} \in \mathbb{Z}^2} |\xi_\mathbf{m}+\mathbf{k} |^4 g(\xi_\mathbf{m})^2 < \infty\right\}.\]

If
\[ c_j( r)=\left( \alpha_j + \frac{\ln r-\ln 2}{2\pi} \right)^{-1}, \quad \alpha_j\in (-\infty,\infty], ~ r > 0\]
then for all \(\mathbf{k} \in \mathcal{B}\), \(\hat{P}^ r (\mathbf{k})\) converges in norm resolvent sense as \( r \rightarrow \infty\) to a self-adjoint operator \(-\hat{\Delta}_{\boldsymbol\alpha,Y} (\mathbf{k})\) with resolvent
\begin{multline}\label{fourierfiber}(-\hat{\Delta}_{\boldsymbol\alpha,Y} (\mathbf{k})-\lambda)^{-1}
= G_\lambda(\mathbf{k}) + \frac{1}{\mathrm{area}(\Gamma)} \sum_{j=1}^2 [\Gamma_{\boldsymbol\alpha,Y}(\lambda,\mathbf{k})^{-1}]_{jj'} (F_{\lambda,\mathbf{y}_{j}}(\mathbf{k}),\bullet) F_{\lambda,\mathbf{y}_{j'}}(\mathbf{k}),
\\
\lambda \notin \left\{|\xi_\mathbf{m}+\mathbf{k}|^2 ~| ~\xi_\mathbf{m} \in \Lambda^* \right\}, ~\boldsymbol\alpha=(\alpha_1, \alpha_2)
\end{multline}
where 
\begin{equation}\label{glambda}
g_\lambda(\mathbf{x},\mathbf{k})= \begin{dcases}
\frac{1}{\mathrm{area}(\Gamma)}\sum_{m\in\mathbb{Z}^2}{\frac{e^{i(\xi_\mathbf{m}+\mathbf{k})\cdot \mathbf{x}}}{|\xi_\mathbf{m}^2+\mathbf{k}|^2-\lambda}} &\mbox{if } \mathbf{x} \notin \Lambda\\
\frac{1}{\mathrm{area}(\Gamma)} \lim_{ r \rightarrow \infty}{\left[\sum_{\substack{m\in\mathbb{Z}^2\\|\xi_\mathbf{m}+\mathbf{k}| \le r}}{\frac{1}{|\xi_\mathbf{m}+\mathbf{k}|^2-\lambda}}-\frac{2\pi}{\mathrm{area}(\mathcal{B})} \ln r\right]} &\mbox{if } \mathbf{x} \in \Lambda\\
\end{dcases}\end{equation}

\begin{equation}
\label{gamma}
\left[ \Gamma_{\boldsymbol\alpha,Y} (\lambda,\mathbf{k}) \right]_{jj'}= \alpha_j \delta_{jj'}-g_\lambda (\mathbf{y}_{j'}-\mathbf{y}_{j},\mathbf{k}),\quad j,j'=1,2.
\end{equation}
and 
\begin{equation}
\label{Flambda}
F_{\lambda, \mathbf{y}_j}(\mathbf{k},\xi_\mathbf{m})=\frac{e^{-i(\xi_\mathbf{m}+\mathbf{k})\cdot\mathbf{y}_j}}{|\xi_\mathbf{m}+\mathbf{k}|^2-\lambda}
\end{equation}
and \(G_\lambda(\mathbf{k}):\ell^2(\Lambda^*)\rightarrow \ell^2(\Lambda^*)\) is the multiplication operator 
\[(G_\lambda(\mathbf{k}) g)(\xi_\mathbf{m})=(|\xi_\mathbf{m}+\mathbf{k}|^2-\lambda)^{-1} g(\xi_\mathbf{m})\] 
\end{prop}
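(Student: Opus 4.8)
The plan is to treat $\hat P^r(\mathbf{k})$ as a perturbation of the free fiber operator $-\hat\Delta(\mathbf{k})$ by the finite-rank operator $-\tfrac{1}{\mathrm{area}(\Gamma)}\sum_{j=1}^2 c_j(r)\bigl(\phi_{\mathbf{y}_j}^r(\mathbf{k}),\bullet\bigr)\phi_{\mathbf{y}_j}^r(\mathbf{k})$, whose range is spanned by the two cutoff vectors $\phi_{\mathbf{y}_1}^r(\mathbf{k}),\phi_{\mathbf{y}_2}^r(\mathbf{k})$, and then to let $r\to\infty$. For $\lambda\notin\{|\xi_\mathbf{m}+\mathbf{k}|^2\}$ the free resolvent $G_\lambda(\mathbf{k})$ is the bounded multiplication operator displayed in the statement, so by the Sherman--Morrison--Woodbury identity (equivalently Krein's formula \cite{krein}) the resolvent of the self-adjoint operator $\hat P^r(\mathbf{k})$ exists and equals
\[
(\hat P^r(\mathbf{k})-\lambda)^{-1}=G_\lambda(\mathbf{k})+\frac{1}{\mathrm{area}(\Gamma)}\sum_{j,j'=1}^2\bigl[M^r(\lambda,\mathbf{k})^{-1}\bigr]_{jj'}\bigl(G_\lambda(\mathbf{k})\phi_{\mathbf{y}_j}^r(\mathbf{k}),\bullet\bigr)\,G_\lambda(\mathbf{k})\phi_{\mathbf{y}_{j'}}^r(\mathbf{k}),
\]
whenever the $2\times2$ matrix $M^r(\lambda,\mathbf{k})$ with entries $c_j(r)^{-1}\delta_{jj'}-\tfrac{1}{\mathrm{area}(\Gamma)}\bigl(\phi_{\mathbf{y}_j}^r(\mathbf{k}),G_\lambda(\mathbf{k})\phi_{\mathbf{y}_{j'}}^r(\mathbf{k})\bigr)$ is invertible. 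The task then splits into identifying the $r\to\infty$ limits of $M^r(\lambda,\mathbf{k})$ and of $G_\lambda(\mathbf{k})\phi_{\mathbf{y}_j}^r(\mathbf{k})$, and then upgrading the pointwise-in-$\lambda$ convergence to norm-resolvent convergence.

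Next I would compute the limits. The off-diagonal entries of $M^r$ are the partial lattice sums $-\tfrac{1}{\mathrm{area}(\Gamma)}\sum_{|\xi_\mathbf{m}+\mathbf{k}|\le r}e^{i(\xi_\mathbf{m}+\mathbf{k})\cdot(\mathbf{y}_{j'}-\mathbf{y}_j)}(|\xi_\mathbf{m}+\mathbf{k}|^2-\lambda)^{-1}$; since $\mathbf{y}_{j'}-\mathbf{y}_j\notin\Lambda$ the periodized Green's function is smooth there, so these converge to $-g_\lambda(\mathbf{y}_{j'}-\mathbf{y}_j,\mathbf{k})$ as given by the first branch of \eqref{glambda}. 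The diagonal partial sums $\sum_{|\xi_\mathbf{m}+\mathbf{k}|\le r}(|\xi_\mathbf{m}+\mathbf{k}|^2-\lambda)^{-1}$ diverge, but comparing with $\int_{|\xi|\le r}|\xi|^{-2}\,d\xi$ gives Weyl-type asymptotics $\tfrac{2\pi}{\mathrm{area}(\mathcal{B})}\ln r+O(1)$ with a \emph{convergent} remainder — exactly the regularization in the second branch of \eqref{glambda}. The choice $c_j(r)^{-1}=\alpha_j+\tfrac{\ln r-\ln 2}{2\pi}$ is made precisely so that, using $\mathrm{area}(\Gamma)\,\mathrm{area}(\mathcal{B})=(2\pi)^2$, the divergent $\tfrac{\ln r}{2\pi}$ cancels $\tfrac{1}{\mathrm{area}(\Gamma)}\cdot\tfrac{2\pi}{\mathrm{area}(\mathcal{B})}\ln r$, leaving $M^r(\lambda,\mathbf{k})\to\Gamma_{\boldsymbol\alpha,Y}(\lambda,\mathbf{k})$ with entries $\alpha_j\delta_{jj'}-g_\lambda(\mathbf{y}_{j'}-\mathbf{y}_j,\mathbf{k})$ as in \eqref{gamma}. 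Simultaneously $\bigl(G_\lambda(\mathbf{k})\phi_{\mathbf{y}_j}^r(\mathbf{k})\bigr)(\xi_\mathbf{m})=\chi_r(\xi_\mathbf{m}+\mathbf{k})F_{\lambda,\mathbf{y}_j}(\mathbf{k},\xi_\mathbf{m})\to F_{\lambda,\mathbf{y}_j}(\mathbf{k})$ in $\ell^2(\Lambda^*)$, since $F_{\lambda,\mathbf{y}_j}(\mathbf{k},\xi_\mathbf{m})$ decays like $|\xi_\mathbf{m}|^{-2}$ and is square-summable in two dimensions.

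Finally I would combine these facts. Fix one base point $\lambda_0$ (e.g.\ $\lambda_0$ purely imaginary with $|\lambda_0|$ large) where $M^r(\lambda_0,\mathbf{k})$ and its limit $\Gamma_{\boldsymbol\alpha,Y}(\lambda_0,\mathbf{k})$ are uniformly invertible; then the rank-at-most-two correction term in the displayed formula is built from finitely many scalars converging in $\mathbb{C}^{2\times2}$ and vectors converging in $\ell^2$, so $(\hat P^r(\mathbf{k})-\lambda_0)^{-1}$ converges in operator norm to the right-hand side of \eqref{fourierfiber}. This limiting operator is bounded, satisfies the first resolvent identity and is symmetric, hence is $(-\hat\Delta_{\boldsymbol\alpha,Y}(\mathbf{k})-\lambda_0)^{-1}$ for a uniquely determined self-adjoint operator $-\hat\Delta_{\boldsymbol\alpha,Y}(\mathbf{k})$, and analytic continuation in $\lambda$ extends the formula to all $\lambda\notin\{|\xi_\mathbf{m}+\mathbf{k}|^2\}$; norm-resolvent convergence at one point gives it at all such $\lambda$. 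The one genuinely technical ingredient is the lattice-sum asymptotics of the preceding paragraph: showing that $\sum_{|\xi_\mathbf{m}+\mathbf{k}|\le r}(|\xi_\mathbf{m}+\mathbf{k}|^2-\lambda)^{-1}-\tfrac{2\pi}{\mathrm{area}(\mathcal{B})}\ln r$ has a finite limit, uniformly on compact $\lambda$-sets away from the spectrum, and that the additive constant matches the normalization encoded in the $-\ln 2$ term — this is the computation carried out in Theorems III.1.4.1 and III.4.3 of \cite{solvable}, to which we refer for the details.
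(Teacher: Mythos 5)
Your outline is correct and takes essentially the same route as the paper, which gives no argument of its own here but defers to Theorems III.1.4.1 and III.4.3 of \cite{solvable}: a rank-two Krein/Woodbury resolvent formula for \(\hat P^r(\mathbf{k})\), cancellation of the logarithmic divergence of the diagonal lattice sum by the choice of \(c_j(r)\) (using \(\mathrm{area}(\Gamma)\,\mathrm{area}(\mathcal{B})=(2\pi)^2\)), \(\ell^2\)-convergence of \(G_\lambda(\mathbf{k})\phi^r_{\mathbf{y}_j}(\mathbf{k})\) to \(F_{\lambda,\mathbf{y}_j}(\mathbf{k})\), and identification of the norm limit at a fixed \(\lambda_0\) as the resolvent of a unique self-adjoint operator, extended by analyticity in \(\lambda\). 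The only points you pass over quickly (injectivity of the limiting pseudo-resolvent before invoking self-adjointness, conditional convergence of the off-diagonal sums, and the \(-\ln 2\)/constant normalization matching the convention \(\alpha_j\leftrightarrow\alpha_j+\tfrac{\gamma}{2\pi}\)) are precisely the details carried out in the cited theorems, so nothing essential is missing.
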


Finally, we apply the inverse Fourier transform to \(-\hat{\Delta}_{\boldsymbol\alpha,Y} (\mathbf{k})\) and obtain the resolvent of the decomposed operator \( -\Delta_{\boldsymbol\alpha,Y}(\mathbf{k})\) on \(L_\mathbf{k}^2(\Gamma)\). See Theorem III.1.4.3 and Theorem III.4.3 in \cite{solvable} for more details.
\begin{prop}\label{ndelta}
Let \(\boldsymbol\alpha =(\alpha_1,\alpha_2) \in (-\infty,\infty]^2\). For \(\lambda \notin \sigma(-\Delta (\mathbf{k}))\), the resolvent \(\left( -\Delta_{\boldsymbol\alpha,Y}(\mathbf{k})-\lambda \right)^{-1}:L_\mathbf{k}^2(\Gamma) \rightarrow L_\mathbf{k}^2(\Gamma)\) is defined by
\begin{multline}
\label{infdelta}
\left( -\Delta_{\boldsymbol\alpha,Y}(\mathbf{k})-\lambda \right)^{-1}f (\mathbf{x})=  \left(-\Delta (\mathbf{k})-\lambda \right)^{-1}f(\mathbf{x}) \\+ \frac{1}{\mathrm{area}(\mathcal{B})}\sum_{j,j'=1}^{2}{\left[\Gamma_{\boldsymbol\alpha,Y} (\lambda,\mathbf{k})^{-1}\right]_{jj'} \left(\overline{g_\lambda \left(\bullet-\mathbf{y}_{j},\mathbf{k} \right)},f\right) g_\lambda \left(\mathbf{x}-\mathbf{y}_{j'}, \mathbf{k}\right)}
\end{multline}
where \(g_\lambda(x,\mathbf{k})\) and \(\Gamma_{\boldsymbol\alpha,Y} (\lambda,\mathbf{k})\) are defined in \eqref{glambda} and \eqref{gamma}, respectively.
\end{prop}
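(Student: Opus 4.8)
The plan is to obtain \eqref{infdelta} by transporting the Fourier-side resolvent formula \eqref{fourierfiber} of the preceding proposition back to position space through the (suitably normalized, unitary) Fourier transform $\mathcal{F}\colon L_\mathbf{k}^2(\Gamma)\to\ell^2(\Lambda^*)$ that sends $f$ to the sequence $(\hat f(\xi_\mathbf{m}))_\mathbf{m}$ of its Fourier coefficients. By construction $-\Delta_{\boldsymbol\alpha,Y}(\mathbf{k})=\mathcal{F}^{-1}\bigl(-\hat{\Delta}_{\boldsymbol\alpha,Y}(\mathbf{k})\bigr)\mathcal{F}$, so for $\lambda\notin\sigma(-\Delta(\mathbf{k}))$ its resolvent equals $\mathcal{F}^{-1}\bigl(-\hat{\Delta}_{\boldsymbol\alpha,Y}(\mathbf{k})-\lambda\bigr)^{-1}\mathcal{F}$. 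It therefore suffices to conjugate each of the two summands of \eqref{fourierfiber} by $\mathcal{F}$ and reconcile the normalization constants.

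First I would treat the free term. The operator $G_\lambda(\mathbf{k})$ is multiplication by $(|\xi_\mathbf{m}+\mathbf{k}|^2-\lambda)^{-1}$ on $\ell^2(\Lambda^*)$, which is precisely the Fourier multiplier read off from \eqref{centraleq2} with $V\equiv 0$ as the symbol of the free Floquet resolvent $(-\Delta(\mathbf{k})-\lambda)^{-1}$. Hence $\mathcal{F}^{-1}G_\lambda(\mathbf{k})\mathcal{F}=(-\Delta(\mathbf{k})-\lambda)^{-1}$, which is exactly the first term on the right of \eqref{infdelta}.

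Next I would handle the rank-two correction. Comparing the series in \eqref{glambda} with \eqref{Flambda} shows that, up to the factor $\mathrm{area}(\Gamma)$, the element $F_{\lambda,\mathbf{y}_{j'}}(\mathbf{k})\in\ell^2(\Lambda^*)$ is the sequence of Fourier coefficients of $\mathbf{x}\mapsto g_\lambda(\mathbf{x}-\mathbf{y}_{j'},\mathbf{k})$, so $\mathcal{F}^{-1}F_{\lambda,\mathbf{y}_{j'}}(\mathbf{k})$ is a constant multiple of $g_\lambda(\bullet-\mathbf{y}_{j'},\mathbf{k})$. Applying Parseval to rewrite the $\ell^2(\Lambda^*)$ pairing $(F_{\lambda,\mathbf{y}_j}(\mathbf{k}),\mathcal{F}f)$ as the $L_\mathbf{k}^2(\Gamma)$ pairing $(\overline{g_\lambda(\bullet-\mathbf{y}_j,\mathbf{k})},f)$, and then collecting the $\mathrm{area}(\Gamma)$ factors coming from the two copies of $g_\lambda$ and from Parseval against the $1/\mathrm{area}(\Gamma)$ prefactor in \eqref{fourierfiber} via the duality identity $\mathrm{area}(\Gamma)\,\mathrm{area}(\mathcal{B})=(2\pi)^2$, reproduces the prefactor $1/\mathrm{area}(\mathcal{B})$ and the full double sum of \eqref{infdelta} with the same matrix $\Gamma_{\boldsymbol\alpha,Y}(\lambda,\mathbf{k})^{-1}$ of \eqref{gamma}. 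The matrix itself is unchanged, since $\mathcal{F}$ acts only on the $\mathbf{x}$-dependent vectors $F_{\lambda,\mathbf{y}_j}(\mathbf{k})$, not on the scalar coupling $\Gamma_{\boldsymbol\alpha,Y}$.

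The hard part will be the regularized diagonal, not the algebra above. The function $g_\lambda(\mathbf{x},\mathbf{k})$ is an honest $L_\mathbf{k}^2(\Gamma)$ function only for $\mathbf{x}\notin\Lambda$; at lattice points its series diverges and is replaced by the renormalized limit in the second branch of \eqref{glambda}, and this same regularization underlies the finite diagonal entries $\alpha_j-g_\lambda(\mathbf{0},\mathbf{k})$ of $\Gamma_{\boldsymbol\alpha,Y}(\lambda,\mathbf{k})$ in \eqref{gamma}. One must therefore justify that the formal inverse Fourier transform computed above is genuinely the norm-resolvent limit of the momentum-cutoff operators $\hat P^r(\mathbf{k})$ of the preceding proposition, i.e.\ that the cutoff renormalization through $c_j(r)=(\alpha_j+(\ln r-\ln 2)/2\pi)^{-1}$ is precisely what regularizes $g_\lambda(\mathbf{0},\mathbf{k})$ and renders the diagonal of $\Gamma_{\boldsymbol\alpha,Y}$ finite, while the pairings $(\overline{g_\lambda(\bullet-\mathbf{y}_j,\mathbf{k})},f)$ stay well defined and analytic in $\lambda$ on the resolvent set. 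This limiting and regularization step is the only nonroutine point, and it is exactly what is carried out in Theorem III.1.4.3 and Theorem III.4.3 of \cite{solvable}, on which I would rely to complete the argument.
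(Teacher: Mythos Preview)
Your proposal is correct and follows exactly the approach indicated in the paper: the paper does not give a self-contained proof of this proposition but simply states that one applies the inverse Fourier transform to the Fourier-side resolvent \eqref{fourierfiber} of the preceding proposition and refers to Theorem~III.1.4.3 and Theorem~III.4.3 of \cite{solvable} for the details. Your write-up is in fact more explicit than the paper's own treatment, spelling out the identification of each term and the role of the renormalization, but the underlying strategy is identical.
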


If the infinite sequence of parameters \(\boldsymbol\beta\) satisfies the periodicity condition \eqref{beta}, we can find unitary equivalence between \(-\hat{\Delta}_{\boldsymbol\beta,H}\), the Fourier transform of the full Hamiltonian in \eqref{infmany}, and the direct integral over \(\mathcal{B}\) of the decomposed operator in \eqref{fourierfiber}. Note that this result corresponds to the Floquet theory for smooth potentials in Section~\ref{sec:floquet}. See Theorem III.1.4.3 and Theorem III.4.5 in \cite{solvable}.

\begin{prop}\label{prop:fiber} Suppose \(\boldsymbol\alpha\) and \(\boldsymbol\beta\) satisfy \eqref{beta}. Then \(-\hat{\Delta}_{\boldsymbol\beta,H}\) is unitarily equivalent to \(\int_{\mathcal{B}}^\otimes -\hat{\Delta}_{\boldsymbol\alpha,Y} (\mathbf{k}) d\mathbf{k}\) so 
\begin{equation}\label{unitequiv}
\bigcup_{\mathbf{k}\in\mathcal{B}}{\sigma(-\Delta_{\boldsymbol\alpha,Y}(\mathbf{k}))}=\bigcup_{\mathbf{k}\in\mathcal{B}}{\sigma(-\hat{\Delta}_{\boldsymbol\alpha,Y}(\mathbf{k}))}=\sigma(-\hat{\Delta}_{\boldsymbol\beta,H})=\sigma(-\Delta_{\boldsymbol\beta,H})\end{equation} 
\end{prop}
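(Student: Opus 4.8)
The plan is to deduce the unitary equivalence from the corresponding decomposition for the renormalized approximants together with stability of norm‑resolvent limits under direct integrals, and then to read off the spectral chain \eqref{unitequiv} from the direct‑integral spectral calculus. First I would recall the decomposition $\mathcal{U}\colon L^2(\mathbb{R}^2)\to\int_\mathcal{B}^\oplus\ell^2(\Lambda^*)\,d\mathbf{k}$ of Section~\ref{sec:floquet} and observe that, under the periodicity assumption \eqref{beta}, the Fourier transform of the whole‑space momentum‑cutoff operator (with the same renormalizing constants $c_j(r)$, which do not depend on $\mathbf{k}$) decomposes as $\int_\mathcal{B}^\oplus\hat{P}^r(\mathbf{k})\,d\mathbf{k}$, where $\hat{P}^r(\mathbf{k})$ is the fiber operator of the Proposition above; this is the point‑scatterer analogue of \eqref{upu2} and is exactly the content of Theorems III.1.4.1, III.1.4.3 and III.4.5 of \cite{solvable}, which I would invoke. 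Thus both $-\hat\Delta_{\boldsymbol\beta,H}$ and $\int_\mathcal{B}^\oplus\bigl(-\hat\Delta_{\boldsymbol\alpha,Y}(\mathbf{k})\bigr)\,d\mathbf{k}$ arise as $r\to\infty$ norm‑resolvent limits of a unitarily equivalent family of operators, so it suffices to pass the limit through the direct integral.

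To do that, I would check that the convergence $\hat{P}^r(\mathbf{k})\to-\hat\Delta_{\boldsymbol\alpha,Y}(\mathbf{k})$ is uniform in $\mathbf{k}\in\mathcal{B}$. Fixing $\lambda$ with $\Im\sqrt\lambda$ large enough that $\lambda$ lies in the resolvent set of every fiber and every approximant simultaneously, the difference of the two resolvents is, by \eqref{fourierfiber} and the explicit formulas for $G_\lambda(\mathbf{k})$, $\Gamma_{\boldsymbol\alpha,Y}(\lambda,\mathbf{k})$ and $F_{\lambda,\mathbf{y}_j}(\mathbf{k})$, a finite‑rank expression built from the quantities $|\xi_\mathbf{m}+\mathbf{k}|$ and $g_\lambda(\bullet,\mathbf{k})$, all of which are continuous and $\mathcal{B}$‑periodic, hence bounded uniformly over the compact set $\mathcal{B}$. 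Consequently $\bigl\|\int_\mathcal{B}^\oplus(\hat{P}^r(\mathbf{k})-\lambda)^{-1}d\mathbf{k}-\int_\mathcal{B}^\oplus(-\hat\Delta_{\boldsymbol\alpha,Y}(\mathbf{k})-\lambda)^{-1}d\mathbf{k}\bigr\|=\sup_{\mathbf{k}\in\mathcal{B}}\bigl\|(\hat{P}^r(\mathbf{k})-\lambda)^{-1}-(-\hat\Delta_{\boldsymbol\alpha,Y}(\mathbf{k})-\lambda)^{-1}\bigr\|\to0$, and by uniqueness of norm‑resolvent limits $-\hat\Delta_{\boldsymbol\beta,H}$ is unitarily equivalent to $\int_\mathcal{B}^\oplus\bigl(-\hat\Delta_{\boldsymbol\alpha,Y}(\mathbf{k})\bigr)\,d\mathbf{k}$.

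For the spectral chain \eqref{unitequiv}, the outer two equalities are immediate from unitarity of the Fourier transform, on $L^2(\mathbb{R}^2)$ and fiberwise from $L^2_\mathbf{k}(\Gamma)$ to $\ell^2(\Lambda^*)$, giving $\sigma(-\Delta_{\boldsymbol\beta,H})=\sigma(-\hat\Delta_{\boldsymbol\beta,H})$ and $\sigma(-\Delta_{\boldsymbol\alpha,Y}(\mathbf{k}))=\sigma(-\hat\Delta_{\boldsymbol\alpha,Y}(\mathbf{k}))$ for every $\mathbf{k}$. For the middle equality, the spectral theorem for direct integrals \cite{reedsimon4} identifies $\sigma\bigl(\int_\mathcal{B}^\oplus(-\hat\Delta_{\boldsymbol\alpha,Y}(\mathbf{k}))\,d\mathbf{k}\bigr)$ with the set of $\lambda$ such that $\{\mathbf{k}\in\mathcal{B}:\sigma(-\hat\Delta_{\boldsymbol\alpha,Y}(\mathbf{k}))\cap(\lambda-\epsilon,\lambda+\epsilon)\neq\emptyset\}$ has positive measure for every $\epsilon>0$. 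Each fiber has purely discrete spectrum $\{\lambda_j(\mathbf{k},\alpha)\}_{j\ge1}$ with $\mathbf{k}\mapsto\lambda_j(\mathbf{k},\alpha)$ continuous and $\mathcal{B}$‑periodic (the resolvent formula \eqref{fourierfiber} shows the eigenvalues depend continuously on $\mathbf{k}$), so each set $\{\mathbf{k}\in\mathcal{B}:\lambda_j(\mathbf{k},\alpha)\in(\lambda-\epsilon,\lambda+\epsilon)\}$ is open; hence it has positive measure iff it is nonempty, and its image $\lambda_j(\mathcal{B},\alpha)$ is a compact interval. Therefore $\bigcup_{\mathbf{k}\in\mathcal{B}}\sigma(-\hat\Delta_{\boldsymbol\alpha,Y}(\mathbf{k}))=\bigcup_{j\ge1}\lambda_j(\mathcal{B},\alpha)$ is already closed and equals the spectrum of the direct integral; chaining the three equalities gives \eqref{unitequiv}.

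I expect the main obstacle to be the first step: matching the two renormalization procedures — the whole‑space one (restricting the Laplacian to functions vanishing on a finite set of scatterers, taking a self‑adjoint extension, then letting the number of points tend to infinity) against the fiberwise momentum‑cutoff one defining $\hat{P}^r(\mathbf{k})$ — so that their limits genuinely realize the two sides of the asserted unitary equivalence. This is precisely what the cited theorems of \cite{solvable} supply, so I would quote them rather than reprove them; the only genuinely new bookkeeping is the $\mathbf{k}$‑uniformity needed to commute the $r\to\infty$ limit with the direct integral, handled as above by working at a single $\lambda$ far from all fiber spectra and using compactness of $\mathcal{B}$.
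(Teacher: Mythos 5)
Your proposal is correct and in substance coincides with the paper's treatment: the paper offers no independent proof but simply cites Theorems III.1.4.3 and III.4.5 of \cite{solvable}, which is exactly where you place the essential step of matching the whole-space renormalization with the fiberwise momentum-cutoff construction. The additional glue you supply — uniform-in-$\mathbf{k}$ norm-resolvent convergence at a fixed $\lambda$ far from all fiber spectra to pass the $r\to\infty$ limit through the direct integral, and the direct-integral spectral theorem together with continuity (and, implicitly, local finiteness) of the bands to identify the spectrum with the closed union $\bigcup_{\mathbf{k}\in\mathcal{B}}\sigma(-\hat{\Delta}_{\boldsymbol\alpha,Y}(\mathbf{k}))$ — is standard and sound, and is already contained in the cited theorems.
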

Note that the decomposed self-adjoint operator \(-\Delta_{\boldsymbol\alpha,Y}(\mathbf{k})\) has a purely discrete spectrum. Hence, as in the periodic potential case the eigenvalues of \(\Delta_{\boldsymbol\beta,H}\) also form a band structure if the dependence in \(\mathbf{k}\) is continuous.

On the other hand, Proposition~\ref{ndelta} provides additional information about the domain of \(-\Delta_{\boldsymbol\alpha,Y}(\mathbf{k})\). The proof is similar to those of Theorem I.1.1.3 and Theorem II.1.1.3  in \cite{solvable}.
\begin{prop}\label{prop:ndeltadomain}
Let \(\psi \in D(-\Delta_{\boldsymbol\alpha,Y}(\mathbf{k}))\). Then for any \[\lambda \notin \sigma(-\Delta(\mathbf{k}))\cup \sigma(-\Delta_{\boldsymbol\alpha,Y}(\mathbf{k})),\] there exists a unique \(\phi_\lambda \in H_\mathbf{k}^2(\Gamma)\) such that 
\begin{equation}\label{ndeltadomain}
\psi(\mathbf{x})=\phi_\lambda(\mathbf{x})+\frac{1}{\mathrm{area}(\mathcal{B})} \sum_{j,j'=1}^{2}{\left[\Gamma_{\boldsymbol\alpha,Y} (\lambda,\mathbf{k})^{-1}\right]_{jj'} \phi_\lambda(\mathbf{y}_j ) g_\lambda \left(\mathbf{x}-\mathbf{y}_{j'}, \mathbf{k}\right)}.\end{equation}
In addition,
\[(-\Delta_{\boldsymbol\alpha,Y}(\mathbf{k})-\lambda)^{-1}\psi=(-\Delta(\mathbf{k}))-\lambda)^{-1}\phi_\lambda.\]
\end{prop}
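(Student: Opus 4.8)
The plan is to run the classical Krein computation for point interactions (as in Theorem~I.1.1.3 and Theorem~II.1.1.3 of \cite{solvable}): realize $\psi$ by the explicit resolvent \eqref{infdelta}, read off $\phi_\lambda$ as its regular part, and then dispatch existence, the ``in addition'' identity, and uniqueness. Fix $\lambda\notin\sigma(-\Delta(\mathbf{k}))\cup\sigma(-\Delta_{\boldsymbol\alpha,Y}(\mathbf{k}))$ and set $h:=(-\Delta_{\boldsymbol\alpha,Y}(\mathbf{k})-\lambda)\psi\in L_\mathbf{k}^2(\Gamma)$ and $\phi_\lambda:=(-\Delta(\mathbf{k})-\lambda)^{-1}h$, so that $\psi=(-\Delta_{\boldsymbol\alpha,Y}(\mathbf{k})-\lambda)^{-1}h$ and $\phi_\lambda\in D(-\Delta(\mathbf{k}))=H_\mathbf{k}^2(\Gamma)$ (the free resolvent is bounded onto $H_\mathbf{k}^2(\Gamma)$ since $\lambda\notin\sigma(-\Delta(\mathbf{k}))$). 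The additional identity is then built in, $(-\Delta_{\boldsymbol\alpha,Y}(\mathbf{k})-\lambda)\psi=h=(-\Delta(\mathbf{k})-\lambda)\phi_\lambda$. Applying Proposition~\ref{ndelta} with $f=h$ expresses $\psi$ by \eqref{infdelta}, and this coincides with the right-hand side of \eqref{ndeltadomain} as soon as one checks, for $j=1,2$, the reproducing-kernel identity
\[
\bigl(\overline{g_\lambda(\bullet-\mathbf{y}_j,\mathbf{k})},\,h\bigr)=\phi_\lambda(\mathbf{y}_j).
\]
Indeed $g_\lambda(\mathbf{x}-\mathbf{x}',\mathbf{k})$ is the integral kernel of $(-\Delta(\mathbf{k})-\lambda)^{-1}$ and $h=(-\Delta(\mathbf{k})-\lambda)\phi_\lambda$, so evaluating $\phi_\lambda=(-\Delta(\mathbf{k})-\lambda)^{-1}h$ at $\mathbf{y}_j$ produces exactly this pairing; the pointwise value is legitimate because elements of $H_\mathbf{k}^2(\Gamma)$ are continuous (Sobolev embedding in dimension two), and when $\mathbf{y}_j\in\Lambda$ one reads it off the regularized branch of \eqref{glambda} by a limiting argument. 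This yields existence together with the resolvent identity.

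For uniqueness, suppose $\phi_\lambda,\tilde\phi_\lambda\in H_\mathbf{k}^2(\Gamma)$ both represent $\psi$ as in \eqref{ndeltadomain}. Subtracting the two representations,
\[
\phi_\lambda-\tilde\phi_\lambda=-\frac{1}{\mathrm{area}(\mathcal{B})}\sum_{j,j'=1}^{2}\bigl[\Gamma_{\boldsymbol\alpha,Y}(\lambda,\mathbf{k})^{-1}\bigr]_{jj'}\bigl(\phi_\lambda(\mathbf{y}_j)-\tilde\phi_\lambda(\mathbf{y}_j)\bigr)\,g_\lambda(\bullet-\mathbf{y}_{j'},\mathbf{k}).
\]
The left side lies in $H_\mathbf{k}^2(\Gamma)\subset C(\Gamma)$, while by the Poisson-summation representation $g_\lambda(\mathbf{x},\mathbf{k})=\sum_{\mathbf{v}\in\Lambda}G_\lambda(\mathbf{x}+\mathbf{v})e^{-i\mathbf{k}\cdot\mathbf{v}}$ (up to the additive constant at the lattice) together with the fact that $G_\lambda(\mathbf{z})=\tfrac{i}{4}H_0^{(1)}(\sqrt\lambda\,|\mathbf{z}|)$ behaves like $-\tfrac{1}{2\pi}\ln|\mathbf{z}|$ as $\mathbf{z}\to\mathbf 0$, each $g_\lambda(\bullet-\mathbf{y}_{j'},\mathbf{k})$ is smooth on $\Gamma$ away from $\mathbf{y}_{j'}$ and has a genuine logarithmic singularity there; in particular $g_\lambda(\bullet-\mathbf{y}_{j'},\mathbf{k})\in L_\mathbf{k}^2(\Gamma)\setminus H_\mathbf{k}^2(\Gamma)$. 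Since $\mathbf{y}_1\ne\mathbf{y}_2$ in $\Gamma$, the functions $g_\lambda(\bullet-\mathbf{y}_1,\mathbf{k})$ and $g_\lambda(\bullet-\mathbf{y}_2,\mathbf{k})$ are linearly independent modulo $H_\mathbf{k}^2(\Gamma)$, so every coefficient above must vanish, i.e. $\sum_{j=1}^{2}\bigl[\Gamma_{\boldsymbol\alpha,Y}(\lambda,\mathbf{k})^{-1}\bigr]_{jj'}\bigl(\phi_\lambda(\mathbf{y}_j)-\tilde\phi_\lambda(\mathbf{y}_j)\bigr)=0$ for $j'=1,2$. Because $\Gamma_{\boldsymbol\alpha,Y}(\lambda,\mathbf{k})$ is invertible for our $\lambda$ (this invertibility is exactly what lets one write \eqref{ndeltadomain} at all; it is the Krein-formula characterization of the perturbed eigenvalues, cf.\ Proposition~\ref{thm:2deltapert}), so is its inverse, forcing $\phi_\lambda(\mathbf{y}_j)=\tilde\phi_\lambda(\mathbf{y}_j)$ for $j=1,2$ and hence $\phi_\lambda=\tilde\phi_\lambda$.

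The main obstacle is the reproducing-kernel identification $\bigl(\overline{g_\lambda(\bullet-\mathbf{y}_j,\mathbf{k})},h\bigr)=\phi_\lambda(\mathbf{y}_j)$: one must make rigorous sense of the pointwise evaluation at $\mathbf{y}_j$ and, when $\mathbf{y}_j$ sits on $\Lambda$, reconcile it with the $\ln r$-regularized definition of $g_\lambda$ in \eqref{glambda}. The very same local analysis of $g_\lambda$ — its logarithmic singularity, obtained from $H_0^{(1)}$ by Poisson summation — is also what powers the linear-independence step in the uniqueness argument; everything else is routine bookkeeping with Proposition~\ref{ndelta}.
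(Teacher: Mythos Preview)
Your proof is correct and follows exactly the approach the paper indicates: the paper does not give a detailed proof but simply notes that ``the proof is similar to those of Theorem~I.1.1.3 and Theorem~II.1.1.3 in \cite{solvable},'' which is precisely the Krein-formula computation you carry out. Your identification $\phi_\lambda=(-\Delta(\mathbf{k})-\lambda)^{-1}(-\Delta_{\boldsymbol\alpha,Y}(\mathbf{k})-\lambda)\psi$, the reproducing-kernel step, and the uniqueness argument via the logarithmic singularity of $g_\lambda$ are the standard ingredients of that argument; note also that the ``in addition'' identity as you state it (without the inverses) is the form actually used in the proof of Proposition~\ref{honeycombsymm}, so your reading is the intended one.
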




\FloatBarrier

\section*{Acknowledgements}
\thispagestyle{empty}
The author is greatly indebted to Maciej Zworski for suggesting the topic as well as providing guidance throughout the research.
The author was supported by the Samsung Scholarship.

\nocite{*}
\bibliographystyle{siam}
\bibliography{conic_bib}

\end{document}